\def\dOi{9(4:26)2013}
\subjclass{F.4.1
  Model theory, Computability theory; F.1.3 Reducibility and
  completeness; F.2.2 Geometrical problems and computations}
\theoremstyle{plain}
\newcommand{\rp}{(\mathbb{R}^n,\beta)}
\newcommand{\rpm}{\mathbb{R}^n,\beta}
\newcommand{\rsq}{\mathbb{R}^n}
\newcommand{\mf}{\mathfrak}
\newcommand{\mi}{\mathit}
\begin{document}

\title[Undecidable First-Order Theories of Affine Geometries]{Undecidable First-Order Theories of Affine Geometries\rsuper*}

\author[A.~Kuusisto]{Antti Kuusisto\rsuper a}	
\address{{\lsuper a}University of Wroc\l aw}	
\email{antti.j.kuusisto@gmail.com}  

\author[J.~Meyers]{Jeremy Meyers\rsuper b}	
\address{{\lsuper b}Stanford University}	
\email{jjmeyers@stanford.edu}  
\titlecomment{{\lsuper*}This work was partially supported by grant 129761 of the Academy of Finland.}	

\author[J.~Virtema]{Jonni Virtema\rsuper c}	
\address{{\lsuper c}University of Tampere}	
\email{jonni.virtema@uta.fi}  

\begin{abstract}
Tarski initiated a logic-based approach to formal geometry that studies
first-order structures with a ternary \emph{betweenness} relation ($\beta$) and a quaternary
\emph{equidistance} relation $(\equiv)$. Tarski established, inter alia, that
the first-order ($\mathrm{FO}$) theory of $(\mathbb{R}^2,\beta,\equiv)$ is decidable.
Aiello and van Benthem (2002) conjectured that the $\mathrm{FO}$-theory of expansions of $(\mathbb{R}^2,\beta)$ with unary predicates
is decidable. We refute this conjecture by showing that for all $n\geq 2$, the $\mathrm{FO}$-theory of the class of expansions of $\rp$ with just one unary predicate is already $\Pi_1^1$-hard
and therefore not even arithmetical.
We also define a natural and comprehensive class $\mathcal{C}$ of geometric structures $(T,\beta)$, where $T\subseteq \rsq$, and show that
for each structure $(T,\beta)\in\mathcal{C}$, the $\mathrm{FO}$-theory of the class of expansions of $(T,\beta)$ with
a single unary predicate is undecidable.
We then consider classes of expansions of structures $(T,\beta)$ with a restricted unary 
predicate, for example a finite predicate, 
and establish a variety of related undecidability results.
In addition to decidability questions, we
briefly study the expressivities of universal $\mathrm{MSO}$ and weak universal $\mathrm{MSO}$ over
expansions of $(\mathbb{R}^n,\beta)$.
While the logics are incomparable in general, over expansions of $(\mathbb{R}^n,\beta)$, formulae of weak universal $\mathrm{MSO}$
translate into equivalent formulae of universal $\mathrm{MSO}$.
\end{abstract}

\maketitle

\section{Introduction}
Decidability of theories of  structures and 
classes of structures is a
central topic in various different fields of computer science and
mathematics, with different motivations and objectives
depending on the field in question.
In this article we investigate formal theories of \emph{geometry} in the
framework introduced by Tarski \cite{Tarski:1948, TarskiGivant:1999}.
The logic-based framework was originally presented in a series of lectures 
given in Warsaw in the 1920's.
The system is based on first-order structures with two
predicates: a ternary \emph{betweenness} relation $\beta$ and a quaternary
\emph{equidistance} relation $\equiv$.
Within this system, $\beta(u,v,w)$ is interpreted to mean that the point
$v$ is between the points $u$ and $w$,
while $x y\equiv u v$ means that the
distance from $x$ to $y$ is equal to the distance from $u$ to $v$.
The betweenness relation $\beta$
can be considered to simulate the action of a ruler,
while the equidistance relation $\equiv$ simulates the action of a compass.
See \cite{schw} and \cite{TarskiGivant:1999} for further information about the
history and development of Tarski's geometry.

Tarski established in \cite{Tarski:1948}
that the first-order theory of
$(\mathbb{R}^2,\beta,\equiv)$ is decidable.
In \cite{vanBenthemAiello:2002}, Aiello and van Benthem
pose the question: ``\emph{What is the complete monadic $\Pi^1_1$ theory of the affine real plane}?''
By \emph{affine real plane}, the authors refer to the structure $(\mathbb{R}^2,\beta)$.
The monadic $\Pi_1^1$-theory of $(\mathbb{R}^2,\beta)$ is of course
essentially the same as the first-order theory of the class of
expansions $(\mathbb{R}^2,\beta,(P_i)_{i\in\mathbb{N}})$
of the the affine real plane $(\mathbb{R}^2,\beta)$ by unary predicates $P_i\subseteq\mathbb{R}^2$.
Aiello and van Benthem conjecture that the theory is decidable. Expansions of $(\mathbb{R}^2,\beta)$
with \emph{unary} predicates are especially relevant in
investigations related to the geometric structure $(\mathbb{R}^2,\beta)$, since in this 
context unary predicates correspond to 
\emph{regions} of the plane $\mathbb{R}^2$.

In this article we study structures of the type
of $(T,\beta)$, where $T\subseteq\mathbb{R}^n$ and $\beta$ is the
canonical Euclidean betweenness predicate restricted to $T$, see
Section \ref{structures} for the
formal definition.
%
Let $E\bigl((T,\beta)\bigr)$
denote the class of expansions $(T,\beta,P)$
of $(T,\beta)$ with a single unary predicate. The class $E\bigl((T,\beta)\bigr)$ is called
the \emph{unary expansion class of $(T,\beta)$}.
We identify a significant collection of canonical structures $(T,\beta)$
with an undecidable first-order theory of $E\bigl((T,\beta)\bigr)$.
Informally, if there exists a flat two-dimensional
region $R\subseteq\mathbb{R}^n$, no matter how small,
such that $T\cap R$ is in a certain sense sufficiently dense with respect to $R$, then the
first-order theory of $E\bigl((T,\beta)\bigr)$ is undecidable.
If the related density conditions are satisfied, we say that \emph{$T$ extends
linearly in $\mathrm{2D}$}, see Section \ref{structures} for the formal definition.
We prove that for any $T\subseteq\mathbb{R}^n$, if $T$ extends linearly in $\mathrm{2D}$,
then the $\mathrm{FO}$-theory of the unary expansion class of $(T,\beta\bigr)$ is $\Sigma^0_1$-hard. We also obtain a partial converse to
this result. We observe that $T$ extending linearly
in $\mathrm{1D}$ (see
Section \ref{structures} for the definition) is not a sufficient condition for undecidability of the $\mathrm{FO}$-theory of $E\bigl((T,\beta)\bigr)$.

In addition, we establish that for all $n\geq 2$,
the first-order theory of the unary expansion class of $(\mathbb{R}^n,\beta)$ is $\Pi_1^1$-hard,
and therefore not even arithmetical.
We thereby refute the conjecture
of Aiello and van Benthem from \cite{vanBenthemAiello:2002}.
The results are ultimately based on tiling arguments. The result establishing $\Pi_1^1$-hardness
relies on the \emph{recurrent tiling problem} of Harel \cite{Harel:1985}---once
again demonstrating the usefulness of Harel's methods.
Our results establish undecidability for a wide range of unary expansion classes of natural geometric
structures $(T,\beta)$.
In addition to $(\mathbb{R}^2,\beta)$, such
structures include for example the rational plane $(\mathbb{Q}^2,\beta)$,
the real unit cube $([0,1]^3,\beta)$, and the plane of
algebraic reals $(\mathbb{A}^2,\beta)$ --- to name a few.
In addition to investigating expansion classes of the type $E\bigl((T,\beta)\bigr)$, we also
study expansion classes with a \emph{restricted} unary predicate. Let $n$ be a positive integer and let $T\subseteq \rsq$.
Let $F\bigl((T,\beta)\bigr)$ denote the class of structures $(T,\beta,P)$, where the set $P$ is a
\emph{finite} subset of $T$. We establish that 
if $T$ extends linearly in $\mathrm{2D}$, then the first-order theory of $F\bigl((T,\beta)\bigr)$
is undecidable.
%
%
We obtain a $\Pi_1^0$-hardness result by an argument based on the
\emph{periodic torus tiling problem} of Gurevich and Koryakov \cite{GurevichKoryakov:1972}.
The torus tiling argument  
can easily be adapted to deal with various different kinds of 
natural restricted expansion classes of geometric structures $(T,\beta)$.
%
%
These include classes with a
unary predicate denoting---to name a few examples---a polygon,
a finite union of closed rectangles, and
a semialgebraic set (see \cite{realalgebraicgeometry} for the definition).
Our results could turn out useful in investigations concerning logical
aspects of spatial databases. There is a
canonical correspondence between $(\mathbb{R}^2,\beta)$ and
$(\mathbb{R},0,1,\cdot,+,<)$,
see \cite{Gyssens:1999} for example. 
See the survey \cite{KujpersVandenBussche} for further details on logical
aspects of spatial databases.
The betweenness predicate is also studied in spatial logic \cite{handbookofspatial}. The recent years 
have witnessed a significant increase in the research on spatially motivated logics.
Several interesting systems
with varying motivations have been investigated, see for
example the articles
\cite{vanBenthemAiello:2002, Balbianietal:1997, BalbianiGoranko:2002,
HodkinsonHussain, KoPrWoZa:2010, Nenov:2010, Shere:2010,Tinchev,Venema:1999}.
See also the surveys \cite{AielloPrattHartmannvanBenthem:2007} and \cite{Balbianietal:2007} in the
Handbook of Spatial Logics
\cite{handbookofspatial}, and the Ph.D. thesis \cite{Griffiths:2008}.
Several of the above articles investigate
fragments of first-order
theories by way of modal logics for
affine, projective, and metric geometries. Our results contribute to the 
understanding of spatially motivated first-order languages, and hence they can be useful in the
search for decidable (modal) spatial logics.
In addition to studying issues of decidability, we briefly 
compare the expressivities of universal monadic second-order logic $\forall\mathrm{MSO}$ 
and weak universal monadic second-order logic $\forall\mathrm{WMSO}$.
It is straightforward to observe that in general,
the expressivities of $\forall\mathrm{MSO}$ and $\forall\mathrm{WMSO}$ are incomparable in a rather 
strong sense: $\forall\mathrm{MSO}\not\leq\mathrm{WMSO}$ and $\forall\mathrm{WMSO}\not\leq\mathrm{MSO}$.
Here $\mathrm{MSO}$ and $\mathrm{WMSO}$ denote monadic second-order logic and weak monadic second-order logic,
respectively. The result $\forall\mathrm{WMSO}\not\leq\mathrm{MSO}$ follows from known
results (see \cite{tenCate:2011} for example), and the result $\forall\mathrm{MSO}\not\leq\mathrm{WMSO}$ is
more or less trivial to prove.
While $\forall\mathrm{MSO}$ and $\forall\mathrm{WMSO}$ are
incomparable in general,
the situation changes when we consider
expansions $(\rpm,(R_i)_{i\in I})$ of the structure $\rp$, i.e., structures
embedded in the geometric structure $\rp$. Here $(R_i)_{i\in I}$ is an arbitrary vocabulary and $I$ an
arbitrary related index set. We show that over such structures,
sentences of $\forall\mathrm{WMSO}$ translate into equivalent
sentences of $\forall\mathrm{MSO}$. The result follows immediately from
our proof that the expansion class $F\bigl((\mathbb{R}^n,\beta)\bigr)$
is first-order definable with respect to the class $E\bigl((\mathbb{R}^n,\beta)\bigl)$, see Section \ref{section3}.
The proof is based on the Heine-Borel theorem.
The structure of the current article is as follows. In Section \ref{section2}
we define the central notions
needed in the later sections. In Section \ref{section3} we compare the
expressivities of $\forall\mathrm{MSO}$ and $\forall\mathrm{WMSO}$. 
In Section \ref{main} we show undecidability of the first-order theory of the unary expansion class of
any geometric structure $(T,\beta)$ such that $T$ extends  linearly in $\mathrm{2D}$. In addition, we show
that for, $n\geq 2$, the first-order theory of the unary expansion class of $\rp$ is
not arithmetical. In Section \ref{section5} we
modify the approach in Section \ref{main} and
show undecidability of the $\mathrm{FO}$-theory of any class
$F\bigl((T,\beta)\bigr)$ such that $T$ extends linearly in $\mathrm{2D}$.

For further information about 
Tarski and the facts proved by his school
regarding fragments of ordered affine geometry relevant to the current paper,
see \cite{schw} and \cite{TarskiGivant:1999},
and the papers \cite{prestel1} and \cite{prestel2}.
%
%
%
For a comprehensive survey on the development of the axiomatics of geometries of order, see \cite{pambuccian}, which, among other things, summarizes the results of
\cite{prestel1} and \cite{prestel2}.

This article is an extended version of the conference paper \cite{KMV:2012}. 
\section{Preliminaries}\label{section2}
\subsection{Interpretations}
Let $\sigma$ be a \emph{purely relational vocabulary}, i.e., a vocabulary that does not
contain function symbols or constant symbols.
Let $\tau$ be a vocabulary that does not contain function symbols.
Let $\mathcal{B}$ be a
nonempty class of $\sigma$-structures and  $\mathcal{C}$ a nonempty class of $\tau$-structures.
Assume that there exists a surjective map $F$ from $\mathcal{C}$ onto $\mathcal{B}$ and a
first-order $\tau$-formula $\varphi_{\mi{Dom}}(x)$ in one free variable, $x$,
such that for each structure $\mf{C}\in\mathcal{C}$, there is a bijection $f$ from the domain of $F(\mf{C})$ to the set
$$\{\ u\in \mi{Dom}(\mf{C})\ |\ \mf{C}\models\varphi_{\mi{Dom}}(u)\ \}.$$
Assume, furthermore, that for each relation symbol $R\in\sigma$, there is a first-order $\tau$-formula $\varphi_R(x_1,...,x_{\mi{Ar}(R)})$ such that
we have
$$R^{F(\mf{C})}(u_1,...,u_{\mi{Ar}(R)})\ \Leftrightarrow\ \mf{C}\models\varphi_R\bigl(f(u_1),...,f(u_{\mi{Ar}(R)})\bigr)$$
for every tuple $(u_1,...,u_{\mi{Ar}(R)})\in(\mi{Dom}(F(\mf{C})))^{\mi{Ar}(R)}$. Here $\mi{Ar}(R)$ is the arity of $R$.
We then say that the class $\mathcal{B}$ is \emph{uniformly first-order
interpretable in $\mathcal{C}$}. 

Assume that a class of $\sigma$-structures $\mathcal{B}$ is uniformly first-order interpretable in a class $\mathcal{C}$ of $\tau$-structures.
%
Define a map $I$ from the set of first-order $\sigma$-formulae to
the set of first-order $\tau$-formulae as follows. 
\begin{enumerate}
%
%
\item
If $k\in\mathbb{N}_{\geq 1}$ and $R\in\sigma$ is a $k$-ary relation symbol,
then
$$I(R(x_1,...,x_k))\, :=\, \varphi_{R}(x_1,...,x_k),$$
where $\varphi_{R}(x_1,...,x_k)$ is a first-order formula for $R$ witnessing the
fact that $\mathcal{B}$ is uniformly first-order 
interpretable in\, $\mathcal{C}$.
\item
$I(x=y)\, :=\, x=y$.
\item
$I(\neg\varphi):=\neg I(\varphi)$.
\item
$I(\varphi\wedge\psi)\, :=\, I(\varphi)\wedge I(\psi)$.
\item
$I\bigl(\exists x\, \psi(x)\bigr)\, :=\, \exists x\bigl(\varphi_{\mi{Dom}}(x)\wedge I(\psi(x))\bigr).$
\end{enumerate}
We call the map $I$ a \emph{uniform interpretation of\, $\mathcal{B}$ in $\mathcal{C}$.}
Also, if $\mathcal{A}$ is the class of reducts of structures $\mf{B}\in\mathcal{B}$ to some
vocabulary $\rho\subseteq\sigma$, the function $I$ is called a uniform interpretation of $\mathcal{A}$ in $\mathcal{C}$.
%
%
%

%
\begin{lem}\label{uniforminterpretationlemma}
Let $\rho$ be a purely relational vocabulary and $\tau$ a vocabulary 
not containing function symbols. Let $\mathcal{A}$ be a class of $\rho$-structure
and\, $\mathcal{C}$ a class of $\tau$-structures. 
Let $I$ be a uniform interpretation of $\mathcal{A}$ in $\mathcal{C}$.
Let $\varphi$ be a first-order formula of the vocabulary $\rho$.
Then the following conditions are equivalent.
\begin{enumerate}
\item
There exists a structure $\mf{A}\in\mathcal{A}$ such
that $\mf{A}\models \varphi$.
\item
There exists a structure $\mf{C}\in\mathcal{C}$ such that 
$\mf{C}\models I(\varphi)$.
\end{enumerate}
\end{lem}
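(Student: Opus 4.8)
The plan is to prove the equivalence by a straightforward induction on the structure of the first-order $\rho$-formula $\varphi$, strengthened appropriately so that the induction goes through. The bare statement about sentences is not directly amenable to induction, so first I would prove the following more general claim: for every structure $\mf{C}\in\mathcal{C}$, with $\mf{B}:=F(\mf{C})$ and $f$ the associated bijection from $\mi{Dom}(\mf{B})$ onto $\{u\in\mi{Dom}(\mf{C}) \mid \mf{C}\models\varphi_{\mi{Dom}}(u)\}$, and for every first-order $\rho$-formula $\psi(x_1,\dots,x_k)$ and every tuple $(b_1,\dots,b_k)$ from $\mi{Dom}(\mf{B})$,
$$\mf{B}\models\psi(b_1,\dots,b_k)\quad\Longleftrightarrow\quad \mf{C}\models I(\psi)(f(b_1),\dots,f(b_k)).$$
Once this claim is established, the lemma follows: if $\mf{A}\in\mathcal{A}$ satisfies $\varphi$, then since $\mathcal{A}$ consists of reducts of structures in $\mathcal{B}$ and $F$ is surjective, pick $\mf{C}\in\mathcal{C}$ with $F(\mf{C})$ having $\mf{A}$ as its $\rho$-reduct; the claim applied to the sentence $\varphi$ gives $\mf{C}\models I(\varphi)$. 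Conversely, if $\mf{C}\models I(\varphi)$, then $F(\mf{C})\models\varphi$ by the claim, hence its $\rho$-reduct, which lies in $\mathcal{A}$, satisfies $\varphi$.

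The induction itself proceeds through the five clauses defining $I$. For an atomic formula $R(x_1,\dots,x_k)$ with $R\in\rho$, the equivalence is exactly the defining property of the interpretation formulae $\varphi_R$ witnessing uniform first-order interpretability, instantiated at $u_i=f(b_i)$. For an equality $x=y$, since $f$ is a bijection (in particular injective), $b_1=b_2$ iff $f(b_1)=f(b_2)$, and $I(x=y)$ is literally $x=y$. The Boolean cases $\neg$ and $\wedge$ are immediate from the induction hypothesis, since $I$ commutes with these connectives and satisfaction of negations and conjunctions is defined pointwise. The interesting case is the existential quantifier: here $I(\exists x\,\psi(x)) = \exists x\,(\varphi_{\mi{Dom}}(x)\wedge I(\psi)(x))$, and the relativisation to $\varphi_{\mi{Dom}}$ is precisely what makes quantifiers in $\mf{B}$ match bounded quantifiers in $\mf{C}$: a witness $b\in\mi{Dom}(\mf{B})$ for $\exists x\,\psi(x)$ in $\mf{B}$ corresponds to $f(b)$, which by construction satisfies $\varphi_{\mi{Dom}}$ in $\mf{C}$ and, by the induction hypothesis, satisfies $I(\psi)$; conversely, any element of $\mi{Dom}(\mf{C})$ satisfying $\varphi_{\mi{Dom}}$ is of the form $f(b)$ for a unique $b\in\mi{Dom}(\mf{B})$ by surjectivity of $f$ onto the domain set, and the induction hypothesis transfers satisfaction of $I(\psi)$ back to satisfaction of $\psi$ in $\mf{B}$.

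I do not expect any genuine obstacle here; this is the standard interpretation lemma and the only points requiring care are bookkeeping ones: making sure the auxiliary claim carries free variables along correctly (so that the quantifier step has an induction hypothesis to apply), and using both the injectivity of $f$ (for the equality clause and for uniqueness of witnesses) and its surjectivity onto the $\varphi_{\mi{Dom}}$-definable subset of $\mi{Dom}(\mf{C})$ (for the backward direction of the quantifier clause). The remaining connectives and quantifiers not listed among the five clauses — $\vee$, $\rightarrow$, $\forall$ — are handled implicitly by treating them as abbreviations built from $\neg$, $\wedge$, $\exists$, so no separate argument is needed.
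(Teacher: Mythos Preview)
Your proposal is correct and is the standard argument the paper has in mind; the paper's own proof is simply the word ``Straightforward.'' Your expansion via the strengthened inductive claim on formulas with free variables, together with the bookkeeping about $F$, $f$, and the passage to $\rho$-reducts, is exactly how one would spell out that one-word proof.
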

\begin{proof}
Straightforward.
\end{proof}

\subsection{Logics and structures}\label{logics and structures}

\emph{Monadic second-order logic}, $\mathrm{MSO}$,
extends first-order logic with
quantification of
relation symbols ranging over subsets of the domain of a model.
In \emph{universal (existential) monadic second-order logic}, $\forall \mathrm{MSO}$ ($\exists \mathrm{MSO}$),
the quantification of monadic relations is restricted to universal (existential)
prenex quantification in the beginning of formulae.
The logics $\forall \mathrm{MSO}$ and $\exists \mathrm{MSO}$ are also known as monadic $\Pi_1^1$ and monadic $\Sigma_1^1$, respectively.
\emph{Weak monadic second-order logic}, $\mathrm{WMSO}$, is a semantic variant of
monadic second-order logic in which the quantified relation symbols range over
finite subsets of the domain of a model.
The weak variants $\forall \mathrm{WMSO}$
and $\exists \mathrm{WMSO}$ of $\forall \mathrm{MSO}$ and $\exists\mathrm{MSO}$ are
defined in the obvious way. For further information on $\mathrm{MSO}$, see for example 
\cite{gure1} and \cite{Libkin}.
Monadic second-order logic can be characterized by a variant of the Ehrenfeucht-Fra\"{i}ss\'{e} game.
We will give a short description of the game here. A more detailed description can be found in \cite{Libkin}.
An $\mathrm{MSO}$ game is played by two players, the \emph{spoiler} and the \emph{duplicator},
on two structures $\mf{A}$ and $\mf{B}$ of the
same purely relational vocabulary $\sigma$.
A round starts by spoiler picking a structure, $\mf{A}$ or $\mf{B}$, and an element or a subset of that
structure. The duplicator responds by choosing an object of the same type from the
other structure. Let $\vec{a}$ and $\vec{b}$ be the elements and $\vec{A}$ and $\vec{B}$ the
subsets chosen in a $k$-round game from the structures $\mf{A}$ and $\mf{B}$, respectively. Then the
duplicator wins the game iff $(\vec{a},\vec{b})$
defines a partial isomorphism from $(\mf{A},\vec{A})$ to $(\mf{B},\vec{B})$.

The $k$-round game characterizes $\mathrm{MSO}[k]$,
the fragment of $\mathrm{MSO}$ up to the quantifier nesting depth $k$.
More formally: If $\mf{A}$ and $\mf{B}$ are two
structures of the same purely relational vocabulary, then the duplicator has a winning
strategy in the $k$-round $\mathrm{MSO}$ game on $\mf{A}$ and $\mf{B}$ iff $\mf{A}$ and $\mf{B}$
agree on all sentences of $\mathrm{MSO}[k]$.

Let $\mathcal{L}$ be any fragment of second-order logic.
The \emph{$\mathcal{L}$-theory} of a structure $\mf{M}$ of a vocabulary $\tau$ is the set of $\tau$-sentences $\varphi$ of
$\mathcal{L}$ such that $\mf{M}\models\varphi$.

Define two binary relations $H,V\subseteq\mathbb{N}^2\times\mathbb{N}^2$ as follows.
\begin{itemize}
\item
$H\ =\ \{\ \bigl((i,j),(i+1,j)\bigr)\ |\ i,j\in\mathbb{N}\ \}$.
\item
$V\ =\ \{\ \bigl((i,j),(i,j+1)\bigr)\ |\ i,j\in\mathbb{N}\ \}$.
\end{itemize}
We let $\mf{G}$ denote the structure $(\mathbb{N}^2,H,V)$, and call
it the \emph{grid}. The relations $H$ and $V$ are
called the \emph{horizontal} and \emph{vertical successor relations} of $\mf{G}$,
respectively. A \emph{supergrid} is a structure of the vocabulary $\{H,V\}$ that has $\mf{G}$ as a
substructure. We denote the class of supergrids by $\mathcal{G}$.
Let $(\mf{G},R)$ be the expansion of $\mf{G}$, where
$R\ =\ \{\ \bigl((0,i),(0,j)\bigr)\in\mathbb{N}^2\times\mathbb{N}^2\ |\ i < j\ \}.$
We denote the structure $(\mf{G},R)$ by $\mf{R}$, and call it the \emph{recurrence grid}.

\begin{figure}
\centering
\includegraphics[scale=1]{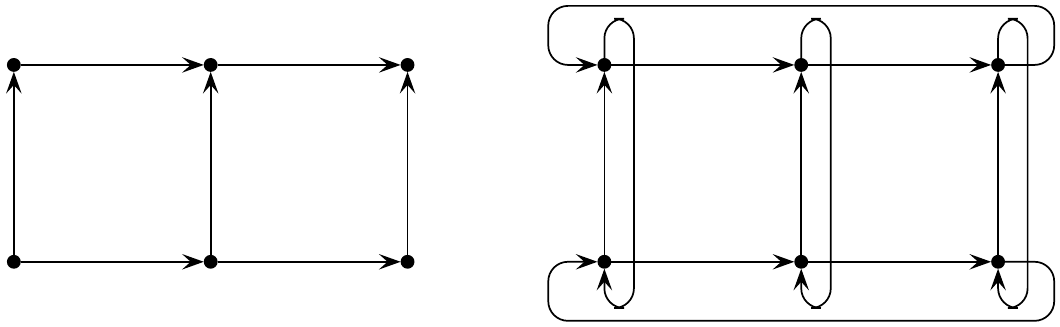}
\caption{The figure shows a $3\times 2$ grid and a $3\times 2$ torus.}
\label{fig:torus}
\end{figure}
Let $m$ and $n$ be positive integers. Define two binary relations $H_{m,n},V_{m,n}\subseteq (m\times n)^2$ as follows.
(Note that we define $m = \{0,...,m-1\}$, and analogously for $n$.)
\begin{itemize}
\item $H_{m,n}=H\upharpoonright (m\times n)^2\cup\{((m-1,i),(0,i))\mid i<n\}$.
\item $V_{m,n}=V\upharpoonright (m\times n)^2\cup\{((i,n-1),(i,0))\mid i<m\}$.
\end{itemize}
We call the structure $(m\times n, H_{m,n},V_{m,n})$ the \emph{$m\times n$ torus} and denote it by  $\mf{T}_{m,n}$.
A torus is essentially a finite grid whose east border wraps back to the
west border and north border back to the south border.
\subsection{Geometric affine betweenness structures}\label{structures}
Let $(\rsq, d)$ be the $n$-dimensional Euclidean space with the canonical metric $d$.
We always assume $n\geq 1$. We define the
ternary Euclidean \emph{betweenness} relation $\beta$ such that $\beta(s,t,u)$ iff 
\[
d(s,u)=d(s,t)+d(t,u).
\]
By $\beta^*$ we
denote the \emph{strict betweenness} relation, i.e., $\beta^*(s,t,u)$ iff  $\beta(s,t,u)$ and $s\not=t\not=u$.
We say that the points $s,t,u\in\mathbb{R}^n$ are \emph{collinear} if the 
disjunction
\[
\beta(s,t,u)\vee \beta(s,u,t)\vee \beta(t,s,u)
\]
holds in $\rp$.
We define the first-order $\{\beta\}$-formula
\[
collinear(x,y,z):=\beta(x,y,z)\vee\beta(x,z,y)\vee\beta(y,x,z).
\]

Below we study geometric betweenness structures of the type $(T,\beta_T)$ where $T\subseteq\mathbb{R}^n$ and $\beta_T=\beta\upharpoonright T$.
Here $\beta\upharpoonright T$ is the restriction of the betweenness predicate $\beta$ of $\mathbb{R}^n$ to the set $T$.
To simplify notation, we usually refer to these structures by $(T,\beta)$.
Let $T\subseteq\rsq$ and let $\beta$ be the corresponding betweenness relation.
We say that $L\subseteq T$ is a \emph{line in $T$} if
the following conditions hold.
\begin{enumerate}
\item
There exist points $s,t\in L$ such that $s\not= t$.
\item
For all $s,t,u\in L$, the points $s,t,u$ are collinear.
\item
Let  $s,t\in L$ be points such that $s\not=t$. For all $u\in T$, if $\beta(s,u,t)$ or $\beta(s,t,u)$, then $u\in L$.
\end{enumerate}
Let $T\subseteq\rsq$ and let $L_1$ and $L_2$ be lines in $T$. We say that $L_1$ and $L_2$
\emph{intersect} if $L_1\not=L_2$ and $L_1\cap L_2\neq\emptyset$. We say that the lines $L_1$ and $L_2$
\emph{intersect in $\rsq$} if $L_1\not=L_2$ and $L_1'\cap L_2'\neq\emptyset$, 
where $L_1',L_2'$ are the lines in $\rsq$ such that $L_1\subseteq L_1'$ and $L_2\subseteq L_2'$.
A subset $S\subseteq\mathbb{R}^n$ is an \emph{$m$-dimensional flat} of $\mathbb{R}^n$, where $0\leq m\leq n$, if there exists a linearly 
independent set of
$m$ vectors $v_1,\dots,v_m\in\mathbb{R}^n$ and a vector $h\in \rsq$ such that $S$ is
the $h$-translated span of the vectors $v_1,\dots,v_m$,
in other words
\[
S=\{u\in\rsq\mid u=h + r_1 v_1 + \dots +r_m v_m,\ r_1,\dots,r_m\in\mathbb{R}\}.
\]
None of the vectors $v_i$ is allowed to be the zero-vector. (This is relevant
in the case where $m=1$.)
A nonempty set $U\subseteq\mathbb{R}^n$ is a \emph{linearly regular $m$-dimensional flat}, where $0\leq m\leq n$, if the following conditions hold.
\begin{enumerate}
\item
There exists an $m$-dimensional flat $S$ such that $U\subseteq S$. 
\item
There does not exist any $(m-1)$-dimensional flat $S$ such that $U\subseteq S$.
\item
$U$ is \emph{linearly complete}, i.e., if $L$ is a line in $U$ and $L'\supseteq L$ the corresponding line in $\mathbb{R}^n$,
and if $r\in L'$ is a point in $L'$ and $\epsilon\in\mathbb{R}_+$ a positive real number, then there exists a point $s\in L$
such that $d(s,r) < \epsilon$. Here $d$ is the canonical metric of $\mathbb{R}^n$.
\item
$U$ is \emph{linearly closed}, i.e., if $L_1$ and $L_2$ are lines in $U$ and $L_1$ and $L_2$ intersect in $\mathbb{R}^n$, 
then the lines $L_1$ and $L_2$ intersect. In other words, there exists a point $s\in U$ such that $s\in L_1\cap L_2$.
\end{enumerate} 
\noindent A set $T\subseteq\mathbb{R}^n$ \emph{extends linearly in $mD$}, where $m\leq n$, if
there exists a linearly regular $m$-dimensional flat $S$, a positive real number $\epsilon\in\mathbb{R}_+$ and a
point $x \in S\cap T$ such that 
$\{\ u\in S\ |\ d(x,u)<\epsilon\ \}\ \subseteq T.$
It is easy show that for example $\mathbb{Q}^2$ extends linearly in ${\mathrm{\mathrm{2D}}}$.

Let $T\subseteq \rsq$, $n\in\mathbb{N}$ and let $\beta$ be the corresponding betweenness relation.  The class of all expansions of $(T,\beta)$ to the vocabulary $\{\beta,P\}$, where $P$ is a unary relation symbol, is called \emph{the unary expansion class of $(T,\beta)$}. By \emph{the unary expansion class of $(T,\beta)$ with a finite predicate}, we mean the class of all expansions of $(T,\beta)$ to the vocabulary $\{\beta,P\}$, where the interpretation of $P$ is a finite set.
\subsection{Tilings}
A function $t:4\longrightarrow\mathbb{N}$ is called a \emph{tile type}.
Define the set
$$\mathrm{TILESYMB}\ :=\ \{\ P_t\ |\ t\text{ is a tile type }\ \}$$
 of unary relation symbols.
The unary relation symbols in the set $\mathrm{TILESYMB}$ are
called \emph{tile symbols}.
The numbers $t(i)$ of a tile symbol $P_t$ are the \emph{colours} of $P_t$.
The number $t(0)$ is the \emph{top colour}, $t(1)$ the \emph{right colour},
$t(2)$ the \emph{bottom colour}, and $t(3)$ the \emph{left colour} of $P_t$.
We then define a lexicographic linear ordering of tile types.
Let $t$ and $s$ be tile types. We define $s < t$, if the tuple $\bigl(s(0),s(1),s(2),s(3)\bigr)$
is situated below the tuple $\bigl(t(0),t(1),t(2),t(3)\bigr)$ with respect to the canonical lexicographic ordering, i.e.,
$s < t$ if there exists some $i\in 4$ such that
\begin{enumerate}
\item
$s(i) < t(i)$, and
\item
$s(j) = t(j)$ for all $j$ such that $j<i$ and $j\in 4$.
\end{enumerate}
If $t$ is a tile type, define $N(t)$ to be the number of tile types $s$ such that $s\leq t$. The function $N$ associates
each tile type with a unique positive integer. 
Let $T$ be a finite nonempty set of tile symbols. We say that a structure $\mf{A}=(A,V,H)$, where $V,H\subseteq A^2$, is
\emph{$T$-tilable}, if there exists an expansion of $\mf{A}$ to
the vocabulary
$$\{H,V\}\cup\{\ P_t\ |\ P_t\in T\ \}$$
such that the following conditions hold.
\begin{enumerate}
\item
Each point of $A$ belongs to the extension of
exactly one symbol $P_t$ in $T$.
\item
If $u H v$ for some points $u,v\in A$,
then the right colour of the tile symbol $P_t$ s.t. $P_{t}(u)$ is the same as the left
colour of the tile symbol $P_{t'}$ such that $P_{t'}(v)$.
\item
If $u V v$ for some points $u,v\in A$,
then the top colour of the tile symbol $P_t$ s.t. $P_{t}(u)$ is the same as the bottom
colour of the tile symbol $P_{t'}$ such that $P_{t'}(v)$.
\end{enumerate}
Let $s$ be a tile type
such that $P_s\in T$. We say that the grid $\mf{G}$ is \emph{$s$-recurrently $T$-tilable}, if
there exists an expansion of $\mf{G}$ to the vocabulary
$$\{H,V\}\cup\{\ P_t\ |\ P_t\in T\ \}$$
such that
the above conditions $(1) - (3)$ hold, and additionally,
there exist infinitely many points $(0,i)\in\mathbb{N}^2$
such that $P_s\bigl((0,i)\bigr)$. 
Intuitively, this means that the tile symbol $P_s$ occurs infinitely many 
times in the leftmost column of the grid $\mf{G}$.
Let $\mathcal{F}$ be the set of finite, nonempty sets $T\subseteq\mathrm{TILESYMB}$, and let
$$\mathcal{H}\ :=\ \{\ (t,T)\ |\ T\in\mathcal{F},\ P_t\in T\ \}.$$
Define the following languages
\begin{align*}
\mathcal{T}\ :=&\ \{\ T\in\mathcal{F}\ \mid\ \mf{G}\text{ is $T$-tilable } \},\\
\mathcal{R}\ :=&\ \{\ (t,T)\in\mathcal{H}\ \mid\ \mf{G}\text{ is $t$-recurrently $T$-tilable } \},\\
\mathcal{S}\ :=&\ \{\ T\in \mathcal{F}\ \mid\
\text{ there is a torus $\mf{D}$ which is $T$-tilable }\}.
\end{align*}
The \emph{tiling problem} is the membership problem of the set $\mathcal{T}$
with the input set $\mathcal{F}$.
The \emph{recurrent tiling problem} is the
membership problem of the set $\mathcal{R}$
with the input set $\mathcal{H}$.
The
\emph{periodic tiling problem} is the
membership problem of $\mathcal{S}$ with the input set $\mathcal{F}$.

\begin{thm}\cite{Berger}
The tiling problem is $\Pi_1^0$-complete.
\end{thm}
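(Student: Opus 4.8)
The plan is to prove the two directions of completeness separately: membership of $\mathcal{T}$ in $\Pi_1^0$ by a compactness argument, and $\Pi_1^0$-hardness of $\mathcal{T}$ by a reduction from the complement of the halting problem.

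For the upper bound, I would first observe that for any $T\in\mathcal{F}$, the grid $\mf{G}$ is $T$-tilable if and only if for every $k\in\mathbb{N}$ the finite square $\{0,\dots,k-1\}^2$ admits an assignment of tile symbols from $T$ satisfying the three colour-matching conditions from the definition of $T$-tilability. One direction is trivial by restriction; for the converse I would invoke König's lemma, noting that the partial tilings of the squares $\{0,\dots,k-1\}^2$, ordered by restriction, form a finitely branching tree, so if it is infinite it has an infinite branch, and this branch glues together into a tiling of all of $\mathbb{N}^2$. Since tilability of a fixed $k\times k$ square is decidable by brute force, the complement $\mathcal{F}\setminus\mathcal{T}$ is semidecidable --- enumerate $k$ and halt once an untileable square is found --- and hence $\mathcal{T}\in\Pi_1^0$.

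For hardness I would reduce from the $\Pi_1^0$-complete set of (codes of) Turing machines that do not halt on empty input. Given such a machine $M$, the goal is a tile set $T_M$ with the property that $\mf{G}$ is $T_M$-tilable exactly when $M$ does not halt. The rows of a tiling are made to encode successive configurations of $M$: the colours read along a horizontal line spell out a tape contents together with the head position and current state, the vertical matching conditions force the line above to be the one-step successor configuration, and halting configurations admit no legal continuation upward. The point I expect to be the main obstacle is that, in contrast with the \emph{origin-constrained} version of the tiling problem, nothing in this formulation pins a distinguished tile to a fixed cell, so a tiling could in principle never display the blank-tape beginning of a computation. This is precisely what Berger's construction overcomes: one superimposes an aperiodic tile set (Robinson's simplification is convenient here) that forces a hierarchy of nested squares of unbounded size, and arranges that the corner of each such square must carry an ``initialising'' tile from which $M$ is simulated on blank tape within that square. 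A $T_M$-tiling of $\mf{G}$ then exists if and only if every one of these nested simulations can proceed indefinitely, i.e.\ if and only if $M$ never halts on empty input. Composing this reduction with the upper bound gives $\Pi_1^0$-completeness of $\mathcal{T}$.
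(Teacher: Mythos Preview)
The paper does not prove this theorem at all: it is stated with a citation to Berger and no proof is given. It functions as background input to the paper's own arguments, not as something the authors establish. So there is no ``paper's own proof'' to compare your proposal against.

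That said, your outline is a reasonable sketch of the standard proof. The $\Pi_1^0$ upper bound via K\"onig's lemma is correct and routine. For hardness, you have correctly identified both the basic idea (encode Turing machine runs in rows) and the genuine difficulty (no origin constraint in the unseeded tiling problem), and you correctly point to the aperiodic hierarchical construction as the device that forces initial configurations to appear. Be aware, though, that actually carrying out that last step is the substantial part of Berger's (and later Robinson's) work; your paragraph gestures at it but does not supply it, so as written this is a proof \emph{plan} rather than a proof. For the purposes of this paper that is fine, since the authors themselves treat the result as a black box.
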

\begin{thm}\cite{Harel:1985}
The recurrent tiling problem is $\Sigma_1^1$-complete.
\end{thm}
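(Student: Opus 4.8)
The plan is to prove two things: \textbf{(i)} $\mathcal{R}\in\Sigma_1^1$, and \textbf{(ii)} $\mathcal{R}$ is $\Sigma_1^1$-hard. Part (i) is routine and part (ii) carries the content; it is obtained from a tiling encoding of \emph{recurring} nondeterministic Turing machine computations, and this is exactly the argument of Harel, which I would follow.

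For (i), identify each tile type $t$ with the natural number $N(t)$, so that a candidate tiling of $\mf{G}$ is a function $f\colon\mathbb{N}^2\to\mathbb{N}$; composing with a pairing bijection $\mathbb{N}^2\to\mathbb{N}$ turns $f$ into a single element of Baire space. Given an input $(t,T)\in\mathcal{H}$, the assertion ``$f$ codes a $T$-tiling of $\mf{G}$ in which $P_t$ occurs at infinitely many points $(0,i)$'' is arithmetical in $f$: ``every value of $f$ is $N(s)$ for some $P_s\in T$'' together with the two colour-matching conditions along $H$ and along $V$ is $\Pi^0_1$, while ``$\forall n\,\exists i>n\ f(0,i)=N(t)$'' is $\Pi^0_2$. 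Prefixing $\exists f$ yields a $\Sigma_1^1$-definition of $\mathcal{R}$ with input set $\mathcal{H}$.

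For (ii), I would reduce from the \emph{recurring problem} for nondeterministic Turing machines: given an NTM $M$ with a distinguished state $q_r$, decide whether $M$ has an infinite computation on blank input that enters $q_r$ infinitely often. This problem is $\Sigma_1^1$-complete; its hardness is seen by letting $M$ guess a set $X\subseteq\mathbb{N}$ bit by bit on an auxiliary track while interleaving a verification of a fixed $\Pi^0_2$ matrix $\forall a\,\exists b\,\theta(a,b,X)$, re-entering $q_r$ each time a new value of $a$ is certified, so that $M$ recurs on some computation iff $\exists X\,\forall a\,\exists b\,\theta(a,b,X)$. Given such an $M$, I would build a computable map $M\mapsto(t_M,T_M)$ with $P_{t_M}\in T_M$ such that $\mf{G}$ is $t_M$-recurrently $T_M$-tilable iff $M$ has a blank-tape computation entering $q_r$ infinitely often. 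Use the standard encoding in which row $j$ of $\mathbb{N}^2$ holds the time-$j$ configuration of $M$: cell $(i,j)$ records the $i$-th tape symbol, plus the state when the head is at position $i$; horizontal colours enforce within-row consistency, vertical colours enforce that row $j{+}1$ is a legal successor of row $j$, and nondeterminism is modelled by providing one transition tile per available move, so that computations of $M$ correspond precisely to tilings. The bottom row is pinned to the blank initial configuration by the familiar corner-seeding device --- a distinguished boundary colour carried along the south and west boundary rays whose only consistent completion emanates from the origin tile encoding the initial state, blank tape, head at $0$ --- or, alternatively, by reducing from the variant of the recurring problem with unconstrained initial configuration, which is still $\Sigma_1^1$-complete. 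To read recurrence off the leftmost column, add a one-bit ``flag'' component to the tiles: in a row in which the head enters $q_r$ at some position $p$, the flag is set at the head and, forced by horizontal matching, propagates leftward along that row to column $0$, where the tile must be $P_{t_M}$; in every other row, column $0$ carries a non-flagged tile distinct from $P_{t_M}$. Then $q_r$ is entered at times $j_1<j_2<\cdots$ iff $P_{t_M}$ occurs at $(0,j_1),(0,j_2),\dots$, i.e.\ iff $P_{t_M}$ occurs infinitely often in the leftmost column; combined with (i) this gives $\Sigma_1^1$-completeness.

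The main obstacle is the bookkeeping in (ii): making the transition tiles simulate $M$ faithfully (head movement and end-of-tape conventions included), forcing the bottom row through the corner without admitting spurious tilings, and --- most delicately --- arranging the flag mechanism so that it fires \emph{exactly once} per entry into $q_r$ and never spuriously, so that ``$q_r$ recurs infinitely often'' corresponds to ``$P_{t_M}$ occurs infinitely often in column $0$'' in both directions. These are precisely the points worked out in detail by Harel, and I would carry out the reduction along those lines.
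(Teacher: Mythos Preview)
The paper does not prove this theorem at all: it is stated with the citation \cite{Harel:1985} and no proof is given, since it is an imported result used as a black box in the reduction of Section~\ref{main}. There is therefore nothing in the paper to compare your proposal against. Your sketch is a faithful outline of Harel's original argument---membership via a $\Sigma_1^1$ formula quantifying over tilings, and hardness via the row-by-row encoding of nondeterministic Turing machine computations with a flag propagated to column~$0$ to witness recurrence---and it is correct at the level of detail you give; but for the purposes of this paper no such argument is needed or expected.
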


\begin{thm}\cite{GurevichKoryakov:1972}\label{periodictilingcomplete}
The periodic tiling problem is $\Sigma^0_1$-complete.
\end{thm}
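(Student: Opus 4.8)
The plan is to treat the two directions separately. \emph{Membership in $\Sigma^0_1$} is routine: for fixed positive integers $m,n$ the torus $\mf{T}_{m,n}$ has exactly $mn$ cells, so whether $\mf{T}_{m,n}$ is $T$-tilable is decided by checking the finitely many maps $m\times n\to T$ against the finitely many horizontal and vertical colour-matching constraints. Hence $\mathcal{S}=\{\,T\in\mathcal{F}\mid \exists m,n\geq 1\ \mf{T}_{m,n}\text{ is }T\text{-tilable}\,\}$ is recursively enumerable: dovetail over all pairs $(m,n)$, running the decision procedure for each, and accept the first time it succeeds.

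The substance is \emph{$\Sigma^0_1$-hardness}, which I would obtain by a many-one reduction from the halting problem of deterministic Turing machines on empty input. Given such a machine $M$, I would first replace it by an equivalent machine $M'$ that simulates $M$ step for step but also maintains a binary step counter on a reserved portion of the tape, extending the counter field as needed; then $M'$ halts exactly when $M$ does, and, crucially, \emph{no} configuration of $M'$ lies on a cycle of the transition relation, since the counter strictly increases along every trajectory. Next, build in the familiar Wang-tile fashion a finite tile set $T_{M'}$ whose tiles encode space--time diagrams of $M'$: a family transmitting a tape symbol upward in cells not scanned by the head; a family encoding, for each transition, the local rewrite and the move of the head to the cell above-left or above-right; a family of blank border tiles padding the simulated tape to a fixed width on both sides; a family of initial-row tiles spelling out the initial configuration; a family of halted-row tiles; and one distinguished ``wall'' tile $t_{\#}$, characterised as the unique tile carrying a special colour on its left and right edges (so that a single $t_{\#}$ forces its entire row to consist of $t_{\#}$'s), sharing its bottom colour only with the top edges of halted-row tiles, and its top colour only with the bottom edges of initial-row tiles.

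One then verifies that some torus is $T_{M'}$-tilable iff $M'$ halts. If $M'$ halts in $s$ steps it touches at most $f(s)$ tape cells for a computable $f$, and the torus of width $f(s)+2$ and height $s+2$ is tiled by stacking a $t_{\#}$-row, the initial configuration row, the successive configuration rows ending in the halted configuration, and wrapping around---the border tiles absorbing the horizontal wraparound. Conversely, given any $T_{M'}$-tiling of any $\mf{T}_{m,n}$: if it uses $t_{\#}$ anywhere, then a full row of $t_{\#}$'s occurs, the row above it is forced to be the initial configuration, each further row upward is the \emph{deterministic} $M'$-successor of the one below, and since the row below the $t_{\#}$-row (reached by the vertical wraparound) must be a halted-row tile, the computation from the initial configuration reaches a halting state, so $M'$ halts; if it uses no $t_{\#}$, then it uses no initial-row and no halted-row tile either, hence every row is the $M'$-successor of the one below, and vertical periodicity on the torus exhibits a cycle in the transition relation of $M'$, which is impossible. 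The \textbf{main obstacle} is precisely this interplay between the two finite directions of the torus: one must force the simulated tape to have bounded width (done via the border tiles, using that an $s$-step run touches at most $f(s)$ cells) while simultaneously preventing a torus tiling from closing up vertically without running through a genuine initial-to-halting computation---which is exactly what the $t_{\#}$ tile, together with the no-cycles normalisation of $M$ into $M'$, achieves. Everything else is the standard tile simulation of a Turing machine and is routine to spell out in detail.
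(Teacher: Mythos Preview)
The paper does not prove this theorem; it is quoted from Gurevich and Koryakov and used as a black box, so there is nothing to compare your argument against and you are in effect supplying an independent proof.

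Your $\Sigma^0_1$ upper bound is fine, and the wall-tile/step-counter mechanism for hardness is sound in spirit, but the converse direction has a real gap. When a torus tiling uses no $t_\#$, you conclude that every row is the $M'$-successor of the one below and invoke the no-cycle property of $M'$. But as you have described the tile families, nothing forces a row to contain a head at all: a tape-symbol tile (or a blank border tile) that transmits its symbol straight upward and carries a uniform colour on its left and right edges can fill an entire row---and hence an entire torus---by itself. Such a headless tiling exists for every $M'$, so $T_{M'}\in\mathcal{S}$ regardless of whether $M'$ halts, and the reduction collapses. The step-counter normalisation is no help here, since a headless row is not an $M'$-configuration and the no-cycle property of $M'$ says nothing about it. The standard repair is to thread a horizontal signal through every non-wall row recording, at each cell, whether the head lies strictly to its left or strictly to its right; the two values can meet only at a head cell, so a row with zero heads (or with two) cannot close up around the torus. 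With that amendment your argument goes through.
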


\begin{lem}\label{tilingdefinablelemma}
There is a computable function associating
each input $T$ to the
(periodic) tiling problem with a first-order sentence $\varphi_{T}$ of the
vocabulary $\tau:=\{H,V\}\cup T$ 
such that for all structures $\mf{A}$ of the vocabulary $\{H,V\}$, the structure $\mf{A}$ is $T$-tilable iff
there exists an expansion $\mf{A}^*$ of\, $\mf{A}$ to the
vocabulary $\tau$ such that $\mf{A}^*\models\varphi_{T}$.
\end{lem}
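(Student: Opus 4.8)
The plan is to construct the sentence $\varphi_T$ as a conjunction of finitely many first-order $\tau$-sentences, each expressing one of the tiling conditions (1)--(3), and, in the periodic case, an additional conjunct forcing the underlying structure to behave like a torus. First I would write a sentence $\varphi_{\text{cover}}$ asserting that every element satisfies exactly one tile symbol: namely $\forall x\bigl(\bigvee_{P_t\in T}P_t(x)\wedge\bigwedge_{P_t\neq P_{t'}}\neg(P_t(x)\wedge P_{t'}(x))\bigr)$. This is a finite formula because $T$ is finite. Next, for the horizontal matching condition, I would write $\varphi_H := \forall x\forall y\bigl(H(x,y)\to\bigvee\{\,P_t(x)\wedge P_{t'}(y)\mid t(1)=t'(3),\ P_t,P_{t'}\in T\,\}\bigr)$, and symmetrically $\varphi_V$ using the top/bottom colours $t(0)=t'(2)$. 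Again these are finite disjunctions since $T$ is finite. Setting $\varphi_T := \varphi_{\text{cover}}\wedge\varphi_H\wedge\varphi_V$ handles the ordinary tiling problem directly: an expansion $\mf{A}^*$ of $\mf{A}$ to $\tau$ satisfies $\varphi_T$ if and only if the interpretations of the tile symbols witness $T$-tilability of $\mf{A}=(A,V,H)$, which is exactly the definition.

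The computability claim is then immediate: the map $T\mapsto\varphi_T$ is given by the explicit syntactic recipe above, and since $T$ is presented as a finite set of tile symbols (each tile type being a function $4\to\mathbb{N}$, hence a finite object), one can mechanically enumerate the colour-compatible pairs $(t,t')$ and assemble the disjunctions and conjunctions. No search or semantic reasoning is needed, so the function is primitive recursive, in particular computable.

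For the periodic tiling problem the input is still a finite set $T\in\mathcal{F}$, but the relevant structures are tori $\mf{T}_{m,n}$ rather than arbitrary $\{H,V\}$-structures. Here I would keep $\varphi_{\text{cover}}\wedge\varphi_H\wedge\varphi_V$ as the tiling part; the statement of the lemma only asks that $\mf{A}$ be $T$-tilable iff some $\tau$-expansion models $\varphi_T$, and this biconditional already holds for the same $\varphi_T$ as above, applied now to structures $\mf{A}$ that happen to be tori. In other words, the same formula serves both problems, and the "(periodic)" parenthetical in the lemma reflects that the domain of the input function is $\mathcal{F}$ in both cases; no extra conjunct is needed because the toroidal wrap-around is already built into the relations $H,V$ of $\mf{T}_{m,n}$, not into the tiling conditions. (If one wanted a single sentence whose models are exactly the tilable tori, one would additionally need to axiomatize torus-hood, which is not first-order definable; but that is not what the lemma asks for.)

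The only mild subtlety — and the one I would be most careful about — is the direction of the matching conditions: condition (2) says that when $uHv$, the \emph{right} colour of $u$'s tile equals the \emph{left} colour of $v$'s tile, i.e. $t(1)=t'(3)$, and condition (3) pairs the \emph{top} colour of $u$ with the \emph{bottom} colour of $v$, i.e. $t(0)=t'(2)$, so I must be sure the disjunctions in $\varphi_H$ and $\varphi_V$ quantify over exactly the colour-matching pairs in that orientation. Beyond getting these indices right, the proof is a routine induction-free unwinding of the definitions, so I would simply state the construction and remark that the equivalence is immediate from the definition of $T$-tilability.
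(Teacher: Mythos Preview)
Your proposal is correct and is precisely the routine unwinding that the paper alludes to: the paper's own proof consists of the single word ``Straightforward.'' Your explicit construction of $\varphi_T$ as $\varphi_{\text{cover}}\wedge\varphi_H\wedge\varphi_V$, together with your observation that the same sentence handles the periodic case because the toroidal structure is already encoded in the relations of $\mf{T}_{m,n}$, is exactly what that word is meant to cover.
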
 
\begin{proof}
Straightforward.
\end{proof}
\begin{lem}\label{recurrencetilingdefinablelemma}
There is a computable function associating
each input $(t,T)$ of the
recurrent tiling problem with a first-order sentence $\varphi_{(t,T)}$ of the
vocabulary $\tau:=\{H,V,R\}\cup T$ 
such that the grid $\mf{G}$ is $t$-recurrently $T$-tilable iff
there exists an expansion $\mf{R}^*$ of the recurrence grid $\mf{R}$ to the
vocabulary $\tau$ such that $\mf{R}^*\models\varphi_{(t,T)}$.
\end{lem}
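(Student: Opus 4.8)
The plan is to imitate the proof of Lemma~\ref{tilingdefinablelemma}, adding a single conjunct that captures the recurrence condition by exploiting the linear order $R$ on the leftmost column of $\mf{R}$. Fix an input $(t,T)\in\mathcal{H}$, so that $T$ is a finite nonempty set of tile symbols with $P_t\in T$. Let $\psi_T$ be the $(\{H,V\}\cup T)$-sentence supplied by Lemma~\ref{tilingdefinablelemma}, so that a $\{H,V\}$-structure is $T$-tilable iff some expansion of it to $\{H,V\}\cup T$ satisfies $\psi_T$; recall $\psi_T$ is just the conjunction of the three sentences stating, respectively, that every point satisfies exactly one tile symbol from $T$, that every $H$-edge respects the right/left colour condition, and that every $V$-edge respects the top/bottom colour condition. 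I would then set
\[
\varphi_{(t,T)}\ :=\ \psi_T\ \wedge\ \forall x\bigl(\neg\exists y\, H(y,x)\ \rightarrow\ \exists z\,(R(x,z)\wedge P_t(z))\bigr).
\]
The second conjunct is the new ingredient: its antecedent $\neg\exists y\, H(y,x)$ defines, in $\mf{G}$ and hence in $\mf{R}$, exactly the leftmost column $\{(0,i)\mid i\in\mathbb{N}\}$, since a point of $\mathbb{N}^2$ has an $H$-predecessor iff its first coordinate is positive. The map $(t,T)\mapsto\varphi_{(t,T)}$ is plainly computable, as $\psi_T$ is computable from $T$ and the added conjunct is a fixed template into which the symbol $P_t$ is substituted.

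Next I would verify the two directions of the biconditional. For the forward direction, suppose $\mf{G}$ is $t$-recurrently $T$-tilable and fix an expansion of $\mf{G}$ to $\{H,V\}\cup T$ witnessing this. Expanding further by the (fixed) relation $R$ of $\mf{R}$ yields a structure $\mf{R}^*$; the tiling conditions give $\mf{R}^*\models\psi_T$, so it remains to check the recurrence conjunct. Let $(0,i)$ be a leftmost-column point. By $t$-recurrence there are infinitely many $j$ with $P_t\bigl((0,j)\bigr)$, so some such $j$ satisfies $j>i$; then $R\bigl((0,i),(0,j)\bigr)$ holds by definition of $R$, and the required instance of the conjunct is witnessed. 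Hence $\mf{R}^*\models\varphi_{(t,T)}$. For the converse, let $\mf{R}^*$ be an expansion of $\mf{R}$ with $\mf{R}^*\models\varphi_{(t,T)}$. Its reduct to $\{H,V\}\cup T$ is an expansion of $\mf{G}$ satisfying $\psi_T$, hence a legal $T$-tiling of $\mf{G}$. Since $R$ restricted to the leftmost column of $\mf{R}$ is a strict linear order isomorphic to $(\mathbb{N},<)$, the recurrence conjunct says precisely that the set $\{i\mid P_t((0,i))\}$ has no upper bound in that order, i.e., is infinite; thus $\mf{G}$ is $t$-recurrently $T$-tilable.

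The only genuinely non-routine point is expressing ``$P_t$ occurs infinitely often in the leftmost column'' by a first-order sentence: the idea is that the predicate $R$ of the recurrence grid supplies an $\omega$-order on that column, so ``infinitely often'' collapses to the first-order statement ``cofinally often'' --- above every column point there is a further column point carrying $P_t$. Everything else (definability of the leftmost column via the absence of an $H$-predecessor, the explicit shape of $\psi_T$, and the computability of the translation) is entirely analogous to Lemma~\ref{tilingdefinablelemma}, so I would treat those steps briefly.
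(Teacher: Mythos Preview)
Your argument is correct and is precisely the kind of routine verification the paper has in mind; the paper's own proof consists of the single word ``Straightforward,'' and your construction---taking $\psi_T$ from Lemma~\ref{tilingdefinablelemma} and conjoining a cofinality clause along the $R$-ordered leftmost column---is the natural way to spell it out.
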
 
\begin{proof}
Straightforward.
\end{proof}
It is easy to see that the grid $\mf{G}$ is $T$-tilable iff there exists a supergrid $\mf{G'}$ that is $T$-tilable.%
\section{Expressivity of universal $\mathrm{MSO}$ and
weak universal $\mathrm{MSO}$ over affine real structures $(\mathbb{R}^n,\beta)$}\label{section3}
In this section we investigate the expressive powers of $\forall{\mathrm{WMSO}}$
and $\forall\mathrm{MSO}$. While it is rather easy to conclude that the two logics
are incomparable in a rather strong sense (see Proposition \ref{expressivityproposition}),
when attention is limited to structures $(\rpm,(R_i)_{i\in I})$ that expand the affine real structure $\rp$,
sentences of $\forall\mathrm{WMSO}$ translate into equivalent sentences of $\forall\mathrm{MSO}$.
Let $\mathcal{L}$ and $\mathcal{L}'$ be fragments of second-order logic.
We write $\mathcal{L}\leq\mathcal{L}'$, if for every vocabulary $\sigma$,
any class of $\sigma$-structures definable by a $\sigma$-sentence of $\mathcal{L}$ is
also definable by a $\sigma$-sentence of $\mathcal{L}'$.
Let $\tau$ be a vocabulary such that $\beta\not\in\tau$.
The class of all expansions of $\rp$ to the vocabulary $\{\beta\}\cup\tau$ is
called the class of \emph{affine real $\tau$-structures}. Such structures can be
regarded as $\tau$-structures \emph{embedded} in the geometric structure $\rp$.
We say that \emph{$\mathcal{L}\leq\mathcal{L'}$ over $\rp$}, if 
for every vocabulary $\tau$ s.t. $\beta\not\in\tau$,
any subclass definable w.r.t. the class $\mathcal{C}$ of all
affine real $\tau$-structures by a sentence of $\mathcal{L}$ is also
definable w.r.t. $\mathcal{C}$ by a sentence of $\mathcal{L}'$.
We sketch a canonical proof of the following very simple proposition. The result $\forall\mathrm{WMSO}\not\leq\mathrm{MSO}$
follows from already known results (see \cite{tenCate:2011} for example), and the result $\forall\mathrm{MSO}\not\leq\mathrm{WMSO}$ is easy to prove.
\begin{prop}\label{expressivityproposition}
$\forall\mathrm{WMSO}\not\leq\mathrm{MSO}$ and $\forall\mathrm{MSO}\not\leq\mathrm{WMSO}$.
\end{prop}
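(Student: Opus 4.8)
The plan is to establish the two non-inclusions by exhibiting, for each direction, a single property expressible in the one logic but not the other, and to invoke standard separating examples rather than building new machinery. For $\forall\mathrm{MSO}\not\leq\mathrm{WMSO}$, I would work over the vocabulary of a single binary relation and consider the class of structures that are infinite linear orders, or more simply the class of all infinite sets (empty vocabulary), which is definable in $\forall\mathrm{MSO}$ by the sentence $\forall X\,(\text{``$X$ is not the whole domain or $X$ has no maximum''})$---in fact infinitude of a plain set is already $\forall\mathrm{MSO}$-definable via ``every injective unary function'' style coding is not available, so instead I would use: a set is infinite iff for every $X$, if $X$ is nonempty and closed under some successor then $X$ is the whole set---cleaner is just to pick the property ``the domain is infinite'' and note it is $\forall\mathrm{MSO}$-expressible by $\neg\exists\mathrm{MSO}$-expressibility of finiteness; dually finiteness is $\exists\mathrm{WMSO}$-trivially true and $\forall\mathrm{WMSO}$ cannot force infinitude because $\mathrm{WMSO}$ quantifiers see only finite sets and the first-order part of a $\forall\mathrm{WMSO}$ sentence can always be satisfied on a sufficiently large finite structure whenever it is satisfied on an infinite one (a compactness/Löwenheim-style argument). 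So ``the domain is infinite'' separates $\forall\mathrm{MSO}$ from $\mathrm{WMSO}$, and since it is even preserved under the relevant constructions this is the ``more or less trivial'' half the authors refer to.

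For $\forall\mathrm{WMSO}\not\leq\mathrm{MSO}$, I would not reprove anything: I would cite the known fact (as the excerpt already flags, see \cite{tenCate:2011}) that there is a $\forall\mathrm{WMSO}$-definable class that is not $\mathrm{MSO}$-definable. The canonical witness is a property asserting the existence of an unbounded but ``locally finite'' configuration whose negation---definable in $\exists\mathrm{MSO}$ as the complement---fails to be $\mathrm{MSO}$-definable; concretely one takes the class of $\omega$-like structures together with their non-well-founded companions distinguished only by a finite-set quantifier, a class whose separation from $\mathrm{MSO}$ is exactly the content of the cited result. I would state this as: by \cite{tenCate:2011} there exists a vocabulary $\sigma$ and a $\forall\mathrm{WMSO}$-sentence $\psi$ over $\sigma$ whose model class is not $\mathrm{MSO}$-definable; this immediately gives $\forall\mathrm{WMSO}\not\leq\mathrm{MSO}$.

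The main obstacle, such as it is, is the first direction: one must argue that no $\mathrm{WMSO}$-sentence (not merely no first-order sentence) can express infinitude of the domain. The argument is that a $\forall\mathrm{WMSO}$-sentence $\forall X_1\cdots\forall X_k\,\theta$ with $\theta$ first-order is true in a structure $\mf{M}$ iff $\theta$ holds under every assignment of \emph{finite} subsets of $\mf{M}$; since $\theta$ is first-order and has some quantifier rank $r$, whether $\theta$ holds under a given finite assignment depends only on the finite induced substructure generated by those sets together with an $r$-bounded neighbourhood, so if the sentence is satisfied on some infinite model it is satisfied on arbitrarily large finite models obtained by truncation---hence it cannot force infinitude. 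This is a routine Ehrenfeucht--Fra\"iss\'e/compactness observation, and once it is in place both non-inclusions follow, completing the proof of Proposition~\ref{expressivityproposition}.
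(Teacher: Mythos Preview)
Your proposal has a genuine gap: you have the two directions reversed, and as a result both of your witnesses fail.

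The property ``the domain is infinite'' is the witness for $\forall\mathrm{WMSO}\not\leq\mathrm{MSO}$, not for $\forall\mathrm{MSO}\not\leq\mathrm{WMSO}$. The $\forall\mathrm{WMSO}$-sentence $\forall X\,\exists y\,\neg Xy$ (with $X$ ranging over finite sets) holds exactly in infinite structures, so infinitude \emph{is} $\forall\mathrm{WMSO}$-definable and a fortiori $\mathrm{WMSO}$-definable; your third-paragraph argument that ``no $\mathrm{WMSO}$-sentence can express infinitude'' is therefore simply false. Conversely, infinitude is \emph{not} $\mathrm{MSO}$-definable (over the empty vocabulary a sufficiently large finite set and an infinite set are $\mathrm{MSO}[k]$-equivalent for every $k$, by a routine EF argument), hence certainly not $\forall\mathrm{MSO}$-definable. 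So your attempt to place infinitude in $\forall\mathrm{MSO}$ cannot succeed, and the compactness-style sketch you give for excluding it from $\mathrm{WMSO}$ targets the wrong logic. This is exactly how the paper handles $\forall\mathrm{WMSO}\not\leq\mathrm{MSO}$: it exhibits $\forall X\,\exists y\,\neg Xy$ and then appeals to the MSO EF game to rule out any MSO definition of infinitude. Your citation of \cite{tenCate:2011} for this half is fine but unnecessary once you have the one-line sentence above.

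For $\forall\mathrm{MSO}\not\leq\mathrm{WMSO}$ you currently have nothing, since your intended witness (infinitude) lies on the wrong side. You need a property that genuinely uses quantification over \emph{arbitrary} subsets. The paper's choice is Dedekind completeness: the $\forall\mathrm{MSO}$-sentence ``every nonempty set bounded above has a least upper bound'' holds in $(\mathbb{R},<)$ and fails in $(\mathbb{Q},<)$, while a WMSO Ehrenfeucht--Fra\"{i}ss\'{e} game (players pick elements or \emph{finite} sets) shows that $(\mathbb{R},<)$ and $(\mathbb{Q},<)$ satisfy the same $\mathrm{WMSO}$-sentences, since the duplicator can always match finite betweenness configurations in a dense order. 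That is the missing idea you need for this direction.
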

\begin{proof}[Proof Sketch]
It is easy to observe that $\forall\mathrm{WMSO}\not\leq\mathrm{MSO}$: consider the
sentence $\forall X\exists y\, \neg Xy$. This $\forall\mathrm{WMSO}$ sentence is
true in a model iff the domain of the model is infinite. A straightforward
monadic second-order Ehrenfeucht-Fra\"{i}ss\'{e} game
argument can be used to establish that
infinity is not expressible by any $\mathrm{MSO}$ sentence.
To show that $\forall\mathrm{MSO}\not\leq\mathrm{WMSO}$, consider the
structures $(\mathbb{R},<)$ and $(\mathbb{Q},<)$. The structures can be
separated by a sentence of $\forall\mathrm{MSO}$
stating that every subset bounded
from above has a least
upper bound. To see that the two structures cannot be
separated by any sentence of $\mathrm{WMSO}$, consider the
variant of the $\mathrm{MSO}$ Ehrenfeucht-Fra\"{i}ss\'{e} game
where the players choose \emph{finite sets} in
addition to domain elements. It is easy to establish
that this game characterizes the expressivity of $\mathrm{WMSO}$.
To see that the duplicator has a winning strategy in a game of any finite
length played on the
structures $(\mathbb{R},<)$ and $(\mathbb{Q},<)$, we
devise an extension of the folklore winning strategy in the 
corresponding first-order game. Firstly, the duplicator can obviously 
always pick an element whose  
betweenness configuration corresponds exactly to that of the element picked by the spoiler.
Furthermore, even if the spoiler picks a finite set,
it is easy to see that the duplicator can pick his set such that
each of its elements respect the betweenness
configuration of the set
picked by the spoiler.  
\end{proof}
We then show that $\forall\mathrm{WMSO}\leq\forall\mathrm{MSO}$ and $\mathrm{WMSO}\leq\mathrm{MSO}$ over $(\mathbb{R}^n,\beta)$
for any $n\geq 1$.
\begin{thm}[Heine-Borel]
A set $S\subseteq\rsq$ is closed and bounded iff every open cover of $S$ has a
finite subcover. 
\end{thm}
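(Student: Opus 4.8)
The plan is to prove the two implications separately; the substance lies in showing that a closed and bounded set is compact (has the finite-subcover property), while the converse is a routine point-set argument.

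\emph{Compact implies closed and bounded.} Suppose every open cover of $S\subseteq\rsq$ has a finite subcover. For boundedness, cover $S$ by the open balls $B(x,1)$ with $x\in S$; a finite subcover centred at points $x_1,\dots,x_k$ confines $S$ to the ball of radius $1+\max_{i\le k}d(x_1,x_i)$ about $x_1$, so $S$ is bounded. For closedness, fix $y\notin S$ and cover $S$ by the open sets $\rsq\setminus\overline{B(y,1/m)}$ for $m\ge 1$, whose union is $\rsq\setminus\{y\}\supseteq S$; a finite subcover has a largest index $m_0$, and then $B(y,1/m_0)$ is a neighbourhood of $y$ disjoint from $S$. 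Hence $\rsq\setminus S$ is open, i.e.\ $S$ is closed.

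\emph{Closed and bounded implies compact.} First I would show that a closed cube $Q=[-M,M]^n$ is compact, by a bisection argument. If some open cover $\mathcal{U}$ of $Q$ had no finite subcover, bisecting each coordinate splits $Q$ into $2^n$ closed subcubes, at least one of which, $Q_1$, is again not finitely coverable; iterating yields nested cubes $Q\supseteq Q_1\supseteq Q_2\supseteq\cdots$ with side lengths tending to $0$, none finitely coverable. By completeness of $\mathbb{R}$ (the nested interval property, applied coordinatewise) $\bigcap_k Q_k$ contains a point $z$; choose $U\in\mathcal{U}$ with $z\in U$ and $\varepsilon>0$ with $B(z,\varepsilon)\subseteq U$, and note that once $Q_k$ has diameter less than $\varepsilon$ we get $Q_k\subseteq B(z,\varepsilon)\subseteq U$, contradicting that $Q_k$ is not finitely coverable. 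Finally, for an arbitrary closed and bounded $S$, boundedness gives $S\subseteq Q$ for some cube $Q$; given an open cover $\mathcal{U}$ of $S$, the family $\mathcal{U}\cup\{\rsq\setminus S\}$ is an open cover of $Q$ (using that $S$ is closed, so its complement is open), and deleting $\rsq\setminus S$ from a finite subcover of $Q$ leaves a finite subcover of $S$. The only delicate point is the bisection step, where completeness of $\mathbb{R}$ must be invoked correctly; everything else is bookkeeping in point-set topology.
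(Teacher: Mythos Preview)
Your proof is correct and is a standard argument for the Heine--Borel theorem: the bisection/nested-cube method for the nontrivial direction and the usual ball-and-complement covers for the converse are both carried out cleanly. Note, however, that the paper does not actually prove this statement; it is quoted as the classical Heine--Borel theorem and used as a black box in the proof of Theorem~\ref{expressivitytheorem}. So there is no ``paper's own proof'' to compare against---your argument simply supplies what the paper takes for granted.
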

\begin{thm}
\label{expressivitytheorem}
Let $\mathcal{C}$ be the class of expansions $(\rpm,P)$ of $\rp$ with a unary predicate $P$,
and let $\mathcal{F}\subseteq\mathcal{C}$ be the subclass of $\mathcal{C}$ where $P$ is finite.
The class $\mathcal{F}$ is first-order definable with respect to $\mathcal{C}$. 
\end{thm}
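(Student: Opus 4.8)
The plan is to write a first-order $\{\beta,P\}$-sentence $\varphi_{\mathrm{fin}}$ that, over the class $\mathcal{C}$ of expansions $(\mathbb{R}^n,\beta,P)$, holds exactly when $P$ is finite. The natural characterization to exploit is that a subset $P\subseteq\mathbb{R}^n$ is finite iff it is bounded and has no accumulation point, i.e.\ $P$ is bounded and every point of $\mathbb{R}^n$ has a punctured neighbourhood disjoint from $P$ (equivalently, $P$ is closed, bounded, and discrete). By Heine--Borel, a bounded closed discrete set is compact and discrete, hence finite; conversely every finite set is bounded, closed, and discrete. So the task reduces to expressing ``bounded'' and ``discrete'' (which here subsumes ``closed'') in first-order logic over $(\mathbb{R}^n,\beta)$.

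First I would set up the metric-free vocabulary: in $(\mathbb{R}^n,\beta)$ one cannot speak of $d(x,y)<\epsilon$ directly, but one can simulate a notion of ``$y$ is strictly between $x$ and $z$'' and thereby express that a point $p$ lies in the open segment determined by two points. The key geometric gadget is: a set $P$ has an accumulation point $a$ iff there is a point $a$ such that for every two points $u,v$ with $a$ strictly between $u$ and $v$ (so the open segment $(u,v)$ is a ``neighbourhood slice'' through $a$), the segment $(u,v)$ still meets $P$ in a point other than $a$ --- and, to get genuine accumulation in $\mathbb{R}^n$ rather than just along one line, one quantifies over all such segments through $a$ in all directions. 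Concretely, ``$a$ is an accumulation point of $P$'' can be written as
\[
\mathrm{acc}(a)\ :=\ \forall u\,\forall v\,\bigl(\beta^*(u,a,v)\rightarrow \exists p\,(P p\wedge p\neq a\wedge \beta(u,p,v))\bigr),
\]
possibly strengthened by also ranging over the two half-segments $(u,a)$ and $(a,v)$ so that a one-sided accumulation is detected; ``$P$ is discrete and closed'' is then $\neg\exists a\,\mathrm{acc}(a)$. Boundedness is expressed by saying there exist points $u,v$ such that every point of $P$ lies on the segment between $u$ and $v$ along each coordinate direction --- more robustly, fix a point $c$ and two further points and assert that $P$ is contained in some ``box'' or ``ball'' described purely via $\beta$; since $\beta$ lets us talk about segments, one can say $\exists u\exists v\,\forall p\,(P p\rightarrow \beta(u,p,v))$ does not quite capture a ball, but iterating over enough auxiliary points one can pin down a bounded convex region containing $P$. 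The sentence $\varphi_{\mathrm{fin}}$ is the conjunction of the boundedness sentence and $\neg\exists a\,\mathrm{acc}(a)$, and one then checks both directions using Heine--Borel for the hard direction (bounded + no accumulation point $\Rightarrow$ finite).

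The main obstacle I expect is getting the first-order definitions of ``bounded'' and especially ``closed and discrete'' to be genuinely correct in $\mathbb{R}^n$ for $n\geq 2$ using only $\beta$: betweenness is a weak language, so neighbourhoods must be assembled from finitely many line segments, and one must verify that the resulting ``segment-neighbourhood'' notion is cofinal in the system of genuine Euclidean neighbourhoods of a point (so that failing to accumulate along all segments through $a$ really does imply $a$ has a Euclidean punctured neighbourhood missing $P$). This is a compactness-of-the-sphere-of-directions argument: if $a$ is a genuine accumulation point of $P$, infinitely many points of $P$ cluster near $a$, and by compactness of directions there is a limiting direction along which one can extract a segment through $a$ witnessing $\mathrm{acc}(a)$; conversely $\mathrm{acc}(a)$ clearly forces a genuine accumulation point. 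Once that lemma is in place, the equivalence ``$(\mathbb{R}^n,\beta,P)\models\varphi_{\mathrm{fin}}$ iff $P$ finite'' follows, and since $\varphi_{\mathrm{fin}}$ is a fixed sentence it witnesses that $\mathcal{F}$ is first-order definable with respect to $\mathcal{C}$, as required. I would present the direction-compactness lemma as the technical heart and treat the boundedness definition and the Heine--Borel appeal as routine.
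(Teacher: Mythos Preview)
Your high-level strategy matches the paper's: characterize finiteness as ``closed, bounded, and discrete'' and invoke Heine--Borel. The genuine gap is in your first-order encoding of ``$a$ is an accumulation point.'' Your formula
\[
\mathrm{acc}(a)\ :=\ \forall u\,\forall v\,\bigl(\beta^*(u,a,v)\rightarrow \exists p\,(P p\wedge p\neq a\wedge \beta(u,p,v))\bigr)
\]
asserts that \emph{every} segment through $a$ meets $P\setminus\{a\}$. This is strictly stronger than $a$ being an accumulation point, so $\neg\exists a\,\mathrm{acc}(a)$ is strictly weaker than ``$P$ has no accumulation point.'' Concretely, in $\mathbb{R}^2$ take $P=\{(1/k,0):k\geq 1\}$. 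This set is bounded, infinite, and has $(0,0)$ as an accumulation point; yet for \emph{every} $a\in\mathbb{R}^2$ one can find a short segment through $a$ transverse to the $x$-axis that misses $P\setminus\{a\}$, so $\neg\exists a\,\mathrm{acc}(a)$ holds. Your sentence $\varphi_{\mathrm{fin}}$ would declare this $P$ finite.

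The ``compactness of the sphere of directions'' argument you sketch cannot rescue this. Given $p_k\to a$, compactness of $S^{n-1}$ yields a limiting direction $d$, but the $p_k$ need not lie on the line through $a$ in direction $d$ (in the example above $d$ is the $x$-axis direction, and the $p_k$ do happen to lie on that line, but along any other line through the origin there are no points of $P$). More fundamentally, even producing \emph{one} segment through $a$ that meets $P$ would not give you $\mathrm{acc}(a)$, which quantifies over \emph{all} segments. Line segments are $1$-dimensional and simply do not form a neighbourhood basis in $\mathbb{R}^n$ for $n\geq 2$; no finite Boolean combination of segment conditions through a single point will recover a genuine open neighbourhood.

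The paper fixes this by building \emph{full-dimensional} neighbourhoods out of $\beta$: it recursively defines a formula $\mathit{opentriangle}_n(x_0,\ldots,x_n,z)$ asserting that $z$ lies in the interior of the $n$-simplex with vertices $x_0,\ldots,x_n$. Such an open simplex \emph{is} a Euclidean open set, and every ball contains one, so ``there exist $x_0,\ldots,x_n$ with $x$ in the open simplex and every other point of the simplex outside $P$'' correctly expresses that $x$ has a punctured neighbourhood disjoint from $P$. The same simplex gadget handles boundedness (``$P$ is contained in some open $n$-simplex''), which also resolves the part of your sketch you left vague. If you want to repair your proof, replace the segment-based $\mathrm{acc}$ with a simplex-based neighbourhood formula; the rest of your outline then goes through.
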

\begin{proof}
We shall first establish that a set $T\subseteq\rsq$ is finite
iff it is closed, bounded and
consists of isolated points of $T$. Recall that an
isolated point $u$ of a set $U\subseteq\rsq$ is a point
such that there exists some open ball $B$ such that $B\cap U =\{u\}$.

Assume $T\subseteq\rsq$ is finite. Since $T$ is finite, we can find a minimum 
distance between points in the set $T$.
Therefore it is clear that each point $t$ in $T$ belongs to some open ball $B$
such that $B\cap T = \{t\}$, and hence $T$
consists of isolated points.
Similarly, since $T$ is finite, each point $b$ in the complement of $T$
has some minimum distance to the points of $T$, and therefore $b$
belongs to some open ball $B\subseteq\rsq\setminus T$.
Hence the set $T$ is the complement of the union of
open balls $B$ such that $B\subseteq\rsq\setminus T$, and therefore $T$ is closed.
Finally, since $T$ is finite, we can find a
maximum distance between the points in $T$,
and therefore $T$ is bounded.
Assume then that $T\subseteq\rsq$ is closed, bounded and consists of isolated points of $T$.
Since $T$ consists of isolated points, it has an open cover $\mathcal{C}\subseteq\mathrm{Pow}(\rsq)$ such
that each set in $\mathcal{C}$ contains exactly one point $t\in T$. The set $\mathcal{C}$ is an open
cover of $T$, and by the Heine-Borel theorem, there exists a
finite subcover $\mathcal{D}\subseteq\mathcal{C}$ of the set $T$.
Since $\mathcal{D}$ is finite and each set in $\mathcal{D}$ contains exactly one point of $T$, 
the set $T$ must also be finite.
We then conclude the proof by establishing that there exists a first-order formula $\varphi(P)$
stating that the unary predicate $P$ is closed, bounded and consists of isolated points. We will first define a formula $\mi{parallel}(x,y,t,k)$ stating that the
lines defined by $x,y$ and $t,k$ are parallel in $\rp$. We define
\begin{align*}
&\mi{parallel}(x,y,t,k):= x\neq y\wedge t\neq k \wedge \Big((\mi{collinear}(x,y,t)\wedge \mi{collinear}(x,y,k))\\
&\quad\vee \big(\neg\exists z(\mi{collinear}(x,y,z)\wedge\mi{collinear}(t,k,z))\\
&\quad\quad\wedge \exists z_1 z_2(x\neq z_1\wedge\mi{collinear}(x,y,z_1)\wedge\mi{collinear}(x,t,z_2)\wedge\mi{collinear}(z_1,z_2,k))\, \big)\Big).
\end{align*}
We will then define first-order $\{\beta\}$-formulae $\mi{basis}_k(x_0,\dots,x_k)$ and $\mi{flat}_k(x_0,\dots,x_k,z)$ using simultaneous recursion.
The first formula states that the vectors corresponding to the pairs $(x_0,x_i)$, $1\leq i\leq k$, form a basis of a $k$-dimensional flat.
The second formula states that the point $z$ is in the span of the basis defined by the vectors $(x_0,x_i)$, the origin being $x_0$.
First define
$\mi{basis}_0(x_0):= x_0=x_0$ and
$\mi{flat}_0(x_0,z):= x_0=z$.
Then define $\mi{flat}_k$ and $\mi{basis_k}$ recursively in the following way.
\begin{align*}
&\mi{basis}_k(x_0,\dots,x_k):=\mi{basis}_{k-1}(x_0,\dots,x_{k-1})\wedge \neg \mi{flat}_{k-1}(x_0,\dots,x_{k-1},x_k),\\
&\mi{flat}_k(x_0,\dots,x_k,z):=\mi{basis}_k(x_0,\dots,x_k)\\
&\quad\wedge \exists y_0\dots \exists y_k\Big(y_0=x_0 \wedge y_k=z
\wedge\bigwedge_{i\, \leq\, k-1} \big(y_i=y_{i+1}\vee\mi{parallel}(x_0,x_{i+1},y_i,y_{i+1})\big)\Big).
\end{align*}
We then define a first-order $\{\beta,P\}$-formula $\mathit{sepr}(x,P)$ 
asserting that $x$ belongs to an open ball $B$ such that each point in $B\setminus\{x\}$
belongs to the complement of $P$.
The idea is to state that there exist $n+1$ points $x_0,\dots,x_n$
that form an \emph{$n$-dimensional triangle} around $x$, and every point contained in the triangle (with $x$ being a
possible exception) belongs to the complement of $P$. Every open ball in $\rsq$ is contained in some $n$-dimensional triangle in $\rsq$ and vice versa. We
will recursively define first-order formulae $\mi{opentriangle}_k(x_0,\dots,x_k,z)$ stating that $z$ is properly inside a $k$-dimensional triangle
defined by $x_0,\dots,x_k$, see Figure \ref{fig:opentriangle}. First define $\mi{opentriangle}_1(x_0,x_1,z):=\beta^*(x_0,z,x_1)$, and then define
\begin{align*}
&\mi{opentriangle}_{k}(x_0,\dots,x_k,z):=\mi{basis}_k(x_0,\dots,x_k)\\
&\quad\wedge \exists y\big(\mi{opentriangle}_{k-1}(x_0,\dots,x_{k-1},y)\wedge \beta^*(y,z,x_k)\big).
\end{align*}
\begin{figure}
\centering
\includegraphics[scale=2]{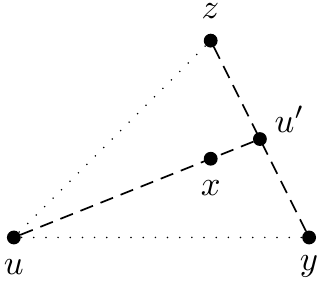}
\caption{The figure shows how the formula $\mi{opentriangle}_2(y,z,u,x)$ is interpreted.}
\label{fig:opentriangle}
\end{figure}We are now ready to define $\mathit{sepr}(x,P)$. Let
\begin{multline*}
\mathit{sepr}(x,P) := \exists x_0\dots \exists x_n\Bigl(\mi{opentriangle}_n(x_0,\dots, x_n,x)\\
\wedge \forall y \big((\mi{opentriangle}_n(x_0,\dots,x_n,y)\wedge y\neq x)\rightarrow \neg Py\big)\Bigr).
\end{multline*}
Now, the sentence
$\varphi_1:=\forall x\bigl(\neg Px\ \rightarrow\ \mathit{sepr}(x,P)\bigr)$
states that each point in the complement of $P$ is contained in an open ball $B\subseteq\rsq\setminus P$.
The sentence therefore states that the complement of $P$ is a union of open balls.
Since the set of unions of open balls is exactly the same as the set of open sets, the sentence states that $P$ is closed.
The sentence
$\varphi_2:=\forall x\bigl(Px\ \rightarrow\ \mathit{sepr}(x,P)\bigr)$
clearly states that $P$ consists of isolated points.
Finally, in order to state that $P$ is bounded, we define a formula asserting that
there exist points $x_0,\dots,x_n$ that
form an n-dimensional triangle around $P$.
\begin{multline*}
\varphi_3:=\exists x_0\dots \exists x_n\Bigl(\mi{basis}_n(x_0,\dots,x_n)
\wedge \forall y\bigl(Py\rightarrow \mi{opentriangle}_n(x_0,\dots,x_n,y)\bigr)\Bigr)
\end{multline*} 
The conjunction $\varphi_1\wedge\varphi_2\wedge\varphi_3$
states that $P$ is finite.
\end{proof}
\begin{cor}
Limit attention to expansions of $\rp$.
Sentences of\, $\forall\mathrm{WMSO}$ translate into
equivalent sentences of $\forall\mathrm{MSO}$,
and sentences of $\mathrm{WMSO}$ into equivalent sentences of\, $\mathrm{MSO}$. 
\end{cor}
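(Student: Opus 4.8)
The plan is to prove both claims by \emph{relativisation}, using Theorem~\ref{expressivitytheorem} to turn finiteness of a set into a first-order property. From the proof of Theorem~\ref{expressivitytheorem} I would extract a first-order $\{\beta\}$-formula $\varphi_{\mi{fin}}(X)$ with one free set variable $X$ (namely the conjunction $\varphi_1\wedge\varphi_2\wedge\varphi_3$ built there, reading the monadic predicate $P$ as the variable $X$). Two features of this formula are essential: it mentions only $\beta$ and $X$, and for \emph{every} expansion $\mf{M}$ of $\rp$ to any vocabulary $\{\beta\}\cup\tau$ with $\beta\notin\tau$, and every $A\subseteq\rsq$, we have $\mf{M}\models\varphi_{\mi{fin}}(A)$ iff $A$ is finite; the additional relations of $\tau$ are irrelevant because $\varphi_{\mi{fin}}$ does not refer to them, so Theorem~\ref{expressivitytheorem} still applies verbatim.

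For the first claim, let $\Phi=\forall X_1\cdots\forall X_k\,\theta$ be a $\forall\mathrm{WMSO}$ sentence over $\{\beta\}\cup\tau$, with $\theta$ first-order. I would take its translation to be $\Phi' := \forall X_1\cdots\forall X_k\big(\bigwedge_{i=1}^{k}\varphi_{\mi{fin}}(X_i)\to\theta\big)$, which lies in $\forall\mathrm{MSO}$ since its matrix is first-order. Over any expansion $\mf{M}$ of $\rp$, the statement $\mf{M}\models_{\mathrm{MSO}}\Phi'$ says exactly that $\theta$ holds under every interpretation of $X_1,\dots,X_k$ by sets satisfying $\varphi_{\mi{fin}}$, i.e.\ (by the previous paragraph) by finite sets, and this is precisely the weak semantics of $\Phi$. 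Hence $\Phi$ and $\Phi'$ are equivalent over expansions of $\rp$, proving $\forall\mathrm{WMSO}\leq\forall\mathrm{MSO}$ over $\rp$.

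For the second claim I would define a translation $\psi\mapsto\psi^{\circ}$ from $\mathrm{WMSO}$ to $\mathrm{MSO}$ over the same vocabulary by recursion: it is the identity on atomic formulae and commutes with Boolean connectives and first-order quantifiers, while $(\exists X\,\psi)^{\circ}:=\exists X\,(\varphi_{\mi{fin}}(X)\wedge\psi^{\circ})$ and $(\forall X\,\psi)^{\circ}:=\forall X\,(\varphi_{\mi{fin}}(X)\to\psi^{\circ})$. By induction on $\psi$ one checks that for every expansion $\mf{M}$ of $\rp$ and every assignment $s$ sending each free set variable of $\psi$ to a finite subset of $\rsq$, we have $\mf{M},s\models_{\mathrm{WMSO}}\psi$ iff $\mf{M},s\models_{\mathrm{MSO}}\psi^{\circ}$; the only nonroutine cases are the second-order quantifiers, where an infinite witness always falsifies the guard $\varphi_{\mi{fin}}$ and so never affects the truth value, leaving exactly the finite witnesses of the weak semantics, for which the induction hypothesis applies. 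Specialising to sentences gives $\mathrm{WMSO}\leq\mathrm{MSO}$ over $\rp$. I do not anticipate a genuine obstacle here: this is a standard relativisation and the substantive content — first-order definability of finiteness — is already supplied by Theorem~\ref{expressivitytheorem}; the only points requiring a little care are that $\varphi_{\mi{fin}}$ be uniform in its set argument and independent of the auxiliary vocabulary $\tau$, and that for the $\forall\mathrm{WMSO}$ case relativising the prefix keeps the sentence in $\forall\mathrm{MSO}$.
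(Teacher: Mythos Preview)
Your proposal is correct and matches the paper's approach: the paper states the corollary without proof, treating it as immediate from Theorem~\ref{expressivitytheorem} via exactly the relativisation you describe (guarding second-order quantifiers by the first-order finiteness formula $\varphi_1\wedge\varphi_2\wedge\varphi_3$). You have simply made explicit the routine details the paper omits.
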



\section{Undecidable theories of geometric structures with an affine betweenness relation}\label{main}

In this section we establish undecidability of the first-order
theory of the unary expansion class
$$\{\  (T,\beta,P)\ |\ P\subseteq T\ \}$$
of any geometric structure $(T,\beta)$ that extends linearly in $\mathrm{2D}$.
We also show that the first-order theories of
unary expansion classes of structures $(\rsq,\beta)$ with $n\geq 2$ are highly undecidable.
More precisely, we show that the theories of
classes based on structures extending linearly in $\mathrm{2D}$ are $\Sigma_1^0$-hard,
while the theories of classes based on structures $(\rsq,\beta)$ with $n\geq 2$ are $\Pi_1^1$-hard---and therefore not even
 arithmetical. 
We establish the results by a reduction from the (recurrent) tiling problem to the problem of deciding
whether a $\{\beta,P\}$-sentence is \emph{satisfied} in some expansion $(T,\beta,P)$ of $(T,\beta)$ (respectively, in some expansion $(\rsq,\beta,P)$ of $\rp$).
The argument is based on interpreting supergrids in corresponding $\{\beta\}$-structures.
%
%
%

\subsection{Lines and sequences}
Let $T\subseteq \rsq$. Let $L$ be a line in $T$.
Any nonempty subset $Q$ of $L$ is called a
\emph{sequence} in $T$.
Let $E\subseteq T$ and $s,t\in T$. If $s\not=t$ and if $u\in E$ for all points $u\in T$ such that $\beta^*(s,u,t)$, we say that the 
points $s$ and $t$ are \emph{linearly $E$-connected} (in $(T,\beta)$). 
If there exists a point $v\in T\setminus E$ such that $\beta^*(s,v,t)$,
we say that $s$ and $t$ are \emph{linearly disconnected with respect to $E$} (in $(T,\beta)$).
\begin{defi}\label{discretelyspaceddfn}
Let $Q$ be a sequence in $T\subseteq \rsq$. Suppose that for each $s,t\in Q$ such that $s\not=t$, there exists a point $u\in T\setminus\{s\}$ such that 
\begin{enumerate}
\item
$\beta(s,u,t)$ and
\item
$\forall r\, \in\, T\ \bigl(\, \beta^*(s,r,u)\rightarrow r\not\in Q\, \bigr)$, i.e., the points $s$ and $u$ are linearly $(T\setminus Q)$-connected.
\end{enumerate}
Then we call $Q$ a $\emph{discretely spaced sequence in T}$.
\end{defi}
\begin{defi}\label{discretelyinfinitedfn}
Let $Q$ be a discretely spaced sequence in $T\subseteq \rsq$. Assume that there exists a point $s\in Q$ such that for each point $u\in Q$, there exists a point
 $v\in Q\setminus\{u\}$ such that $\beta(s,u,v)$.
Then we call the sequence $Q$ a \emph{discretely infinite sequence in $T$}. The point $s$ is
called a \emph{base point} of $Q$. 
\end{defi}
\begin{defi}\label{sequencewithzero}
Let $Q$ be a sequence in $T\subseteq \rsq$. Let $s\in Q$ be a
point such that there do not exist
points $u,v\in Q\setminus\{s\}$ such that $\beta(u,s,v)$.
Then we call $Q$ a \emph{sequence in $T$ with a zero}.
The point $s$ is a \emph{zero-point of $Q$}.
Notice that $Q$ may have up to two
zero-points.
\end{defi}
It is easy to see that a discretely infinite sequence has at most one zero-point.
\begin{defi}\label{omegalikedfn}
Let $Q$ be a discretely infinite sequence in $T\subseteq\rsq$ with a zero.
Assume that for each $r\in T$ such that there exist points $s,u\in Q\setminus\{r\}$
with $\beta(s,r,u)$, there also exist points $s',u'\in Q\setminus\{r\}$ such that
\begin{enumerate}
\item
$\beta(s',r,u')$ and
\item
$\forall v\, \in\, T\setminus\{r\}\ \bigl(\, \beta^*(s',v,u')\rightarrow v\not\in Q\, \bigr)$.
\end{enumerate}
Then we call $Q$ an \emph{$\omega$-like sequence in $T$} (cf. Lemma \ref{nonstandardmodelsexcluded}).
\end{defi}
%
%
%
%
%
%
\begin{lem}\label{omegalikesequencedefinability}
Let $P$ be a unary relation symbol. There is a first-order sentence $\varphi_{\omega}(P)$ of the vocabulary $\{\beta,P\}$ such that
for every $T\subseteq \rsq$ and for every expansion $(T,\beta,P)$ of $(T,\beta)$, we have $(T,\beta,P)\models\varphi_{\omega}(P)$ if and only if the
interpretation of $P$ is an $\omega$-like sequence in $T$.  
\end{lem}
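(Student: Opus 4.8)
The plan is to exhibit the sentence $\varphi_\omega(P)$ as a conjunction of first-order $\{\beta,P\}$-formulae, each one directly transcribing one of the successive conditions built into Definitions~\ref{discretelyspaceddfn}--\ref{omegalikedfn}, and then argue that their conjunction holds in $(T,\beta,P)$ precisely when $P^{(T,\beta,P)}$ is an $\omega$-like sequence in $T$. The key observation is that every clause in those definitions quantifies only over points of $T$ and uses only $\beta$, $\beta^*$, equality and membership in $P$, so each is literally expressible by a first-order $\{\beta,P\}$-formula once we abbreviate $\beta^*(x,y,z)$ by $\beta(x,y,z)\wedge x\neq y\wedge y\neq z$.

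Concretely I would proceed in the following order. First, write $\varphi_{\mathrm{seq}}(P)$ saying that $P$ is a sequence in $T$, i.e.\ $P$ is nonempty and all its points are collinear and $P$ is ``line-closed'' in the sense of condition~(3) of the definition of a line in $T$ (for all $s\neq t$ in $P$ and all $u\in T$, if $\beta(s,u,t)$ or $\beta(s,t,u)$ then $Pu$); the nonemptiness is $\exists x\, Px$. Second, transcribe Definition~\ref{discretelyspaceddfn} into $\varphi_{\mathrm{disc}}(P)$: for all distinct $s,t\in P$ there is $u\in T$, $u\neq s$, with $\beta(s,u,t)$ and $\forall r\in T\,(\beta^*(s,r,u)\to\neg Pr)$. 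Third, transcribe Definition~\ref{discretelyinfinitedfn}, adding the conjunct $\varphi_{\mathrm{inf}}(P)$: $\exists s\,(Ps\wedge\forall u\,(Pu\to\exists v\,(Pv\wedge v\neq u\wedge\beta(s,u,v))))$. Fourth, transcribe Definition~\ref{sequencewithzero}: $\varphi_{\mathrm{zero}}(P):=\exists s\,(Ps\wedge\neg\exists u\exists v\,(Pu\wedge Pv\wedge u\neq s\wedge v\neq s\wedge\beta(u,s,v)))$. Fifth, transcribe the $\omega$-likeness clause of Definition~\ref{omegalikedfn} into $\varphi_{\mathrm{int}}(P)$: for every $r\in T$, if $\exists s,u\in P\setminus\{r\}$ with $\beta(s,r,u)$, then $\exists s',u'\in P\setminus\{r\}$ with $\beta(s',r,u')$ and $\forall v\in T\setminus\{r\}\,(\beta^*(s',v,u')\to\neg Pv)$. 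Finally set
\[
\varphi_\omega(P):=\varphi_{\mathrm{seq}}(P)\wedge\varphi_{\mathrm{disc}}(P)\wedge\varphi_{\mathrm{inf}}(P)\wedge\varphi_{\mathrm{zero}}(P)\wedge\varphi_{\mathrm{int}}(P).
\]

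It then remains to check the equivalence. For the forward direction, if $(T,\beta,P)\models\varphi_\omega(P)$, unpacking each conjunct yields exactly the defining clauses of Definitions~\ref{discretelyspaceddfn}, \ref{discretelyinfinitedfn}, \ref{sequencewithzero} and \ref{omegalikedfn}, so $P^{(T,\beta,P)}$ is a discretely infinite sequence with a zero satisfying the interpolation condition, hence an $\omega$-like sequence in $T$; the only mild point is that ``$P$ is a sequence'' requires $P\subseteq L$ for some line $L$ in $T$, which follows from collinearity of $P$ together with the line-closure conjunct, since these force the set of all $T$-points collinear with (and linearly reachable from) two fixed distinct points of $P$ to be a line containing $P$. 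The converse direction is immediate by reading the conjuncts off the definitions. The main obstacle, such as it is, is bookkeeping: making sure the ``line in $T$'' requirement is faithfully captured by a first-order condition on $P$ (rather than quantifying over the line $L$ itself), and being careful with the degenerate cases --- e.g.\ a single-point $P$, or a $P$ with two zero-points --- so that $\varphi_{\mathrm{inf}}(P)$ correctly excludes the former and the definitions themselves already tolerate the latter. None of this requires any genuinely new idea; it is the routine ``definability is straightforward'' verification, carried out explicitly because this particular sentence is used in the sequel.
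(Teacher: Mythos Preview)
Your overall approach---transcribe Definitions~\ref{discretelyspaceddfn}--\ref{omegalikedfn} into first-order conjuncts---is exactly what the paper does, and your conjuncts $\varphi_{\mathrm{disc}}$, $\varphi_{\mathrm{inf}}$, $\varphi_{\mathrm{zero}}$, $\varphi_{\mathrm{int}}$ are correct transcriptions. However, there is a genuine error in your $\varphi_{\mathrm{seq}}(P)$. You require $P$ itself to satisfy condition~(3) of the definition of a line in $T$, i.e., for all distinct $s,t\in P$ and all $u\in T$, if $\beta(s,u,t)$ or $\beta(s,t,u)$ then $Pu$. But a sequence is merely a nonempty \emph{subset} of a line in $T$, not a line, and an $\omega$-like sequence never satisfies this closure condition. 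For instance, take $T=\mathbb{R}^2$ and $P=\{(k,0):k\in\mathbb{N}\}$; then $(1/2,0)\in T$ satisfies $\beta\bigl((0,0),(1/2,0),(1,0)\bigr)$ yet $(1/2,0)\notin P$. With your extra conjunct present, $\varphi_\omega(P)$ is false of every genuine $\omega$-like sequence in $\mathbb{R}^2$, so the claimed biconditional fails.

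The fix is simply to drop the line-closure conjunct; your worry that it is needed to ensure $P$ lies on some line in $T$ is unfounded. Nonemptiness together with collinearity of all triples in $P$ already suffices: given distinct $s,t\in P$, the set $L=\{u\in T : \mi{collinear}(s,t,u)\}$ satisfies all three clauses in the definition of a line in $T$ (closure under $\beta$ holds because any such $u$ lies on the Euclidean line through $s,t$), and every $p\in P$ lies in $L$ by the collinearity hypothesis applied to the triple $s,t,p$. This is precisely the paper's formula
\[
\mi{sequence}(P)\ :=\ \exists x\,Px\ \wedge\ \forall x\forall y\forall z\bigl(Px\wedge Py\wedge Pz\rightarrow\mi{collinear}(x,y,z)\bigr),
\]
with no closure condition. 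Once you make this correction, your argument goes through and coincides with the paper's.
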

\begin{proof} 
Define
\[
\mi{sequence}(P):=\exists x\, Px\,
\wedge\, \forall x\forall y\forall z\,
\bigl(Px\wedge Py \wedge Pz\ \rightarrow\ \mi{collinear}(x,y,z)\bigr).
\]
The formula $sequence(P)$ states that $P$ is a sequence. By inspection of Definition \ref{discretelyspaceddfn}, it is
easy to see that there is a first-order formula $\psi$ such that the conjunction $sequence(P)\wedge\psi$ states
that $P$ is a discretely spaced sequence. Continuing this trend, it is straightforward to observe
that Definitions \ref{discretelyinfinitedfn}, \ref{sequencewithzero} and \ref{omegalikedfn} specify
first-order properties, and therefore there exists a first-order formula $\varphi_{\omega}(P)$ stating
that $P$ is an $\omega$-like sequence.
\end{proof}
\begin{defi}\label{successordefinition}
Let $P$ be a sequence in $T\subseteq \rsq$ and $s,t\in P$. The points $s,t$ are
called \emph{adjacent} with respect to $P$, if the
points are linearly $(T\setminus P)$-connected.
Let $E\subseteq P\times P$ be the set of pairs $(u,v)$ such that 
\begin{enumerate}
\item
$u$ and $v$ are adjacent with respect to $P$, and 
\item
$\beta(z,u,v)$ for some zero-point $z$ of $P$.
\end{enumerate}
We call $E$ the \emph{successor relation} of $P$.
\end{defi}
We let $\mi{succ}$ denote the successor relation of $\mathbb{N}$, i.e.,
$\mi{succ}:=\{\ (i,j)\in\mathbb{N}\times\mathbb{N}\ |\ i+1 = j\ \}.$

\begin{lem}\label{nonstandardmodelsexcluded}
Let $P$ be an $\omega$-like sequence in $T\subseteq\rsq$ and $E$ the successor relation of $P$.
There is an embedding from $(\mathbb{N},\mi{succ})$ into $(P,E)$ such that $0\in\mathbb{N}$ maps to the zero-point of $P$.
If $T=\rsq$, then $(\mathbb{N},\mi{succ})$ is isomorphic to $(P,E)$.
\end{lem}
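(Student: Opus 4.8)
The plan is to analyze the successor relation $E$ of an $\omega$-like sequence $P$ directly from the defining conditions and to exhibit the required embedding by induction. First I would set up the base case: since $P$ is an $\omega$-like sequence, it has a zero-point, and by the remark following Definition~\ref{omegalikedfn} it has \emph{at most} one zero-point, hence exactly one; call it $z$, and map $0\mapsto z$. For the inductive step, suppose $i\in\mathbb{N}$ has been mapped to a point $p_i\in P$. I would argue that $p_i$ has a unique $E$-successor in $P$: existence comes from the discretely-infinite hypothesis (Definition~\ref{discretelyinfinitedfn}) together with the discretely-spaced condition (Definition~\ref{discretelyspaceddfn}), which guarantees that moving away from the base point $z$ along the line $L\supseteq P$ one reaches a next element of $P$ that is linearly $(T\setminus P)$-connected to $p_i$ and lies on the far side of $z$; uniqueness follows because two distinct such points would violate linear $(T\setminus P)$-connectedness of the nearer pair (the farther point would sit strictly between). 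Define $p_{i+1}$ to be this unique successor and map $i+1\mapsto p_{i+1}$. Injectivity of $i\mapsto p_i$ follows since the points $p_0,p_1,\dots$ are strictly ordered along $L$ by betweenness from $z$, and $(i,j)\in\mi{succ}$ iff $(p_i,p_j)\in E$ holds by construction, so the map is an embedding of $(\mathbb{N},\mi{succ})$ into $(P,E)$.

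For the second claim, assume $T=\rsq$. Here the point is to show the embedding is surjective, i.e.\ that $P$ contains no "extra" points beyond $z,p_1,p_2,\dots$ --- this is exactly the content of excluding nonstandard models, and it is where the $\omega$-like condition (as opposed to merely discretely-infinite-with-zero) does its work. Suppose toward a contradiction that $r\in P$ is not in the range of the embedding. Since $P\subseteq L$ and the enumerated points march off toward one end of $L$ from $z$, the point $r$ must lie "beyond" all of them on that side, or else it would be squeezed between two consecutive $p_i$'s --- impossible in $T=\rsq$, where betweenness is the genuine Euclidean one and a point of $P$ strictly between $p_i$ and $p_{i+1}$ would contradict their adjacency. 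So $r$ lies such that there are $s,u\in Q\setminus\{r\}$ with $\beta(s,r,u)$ (taking $s=z$ and $u$ some far enough $p_j$ --- but wait, $r$ is beyond all $p_j$, so actually we use $r$ as the outer point). The $\omega$-like condition says: for every $r$ with points of $P$ on both sides, there are $s',u'\in Q\setminus\{r\}$ with $\beta(s',r,u')$ and nothing of $P$ strictly between $s'$ and $u'$ except possibly passing through $r$'s neighborhood --- contradicting that $r$ has infinitely many enumerated points on one side and at least one on the other while being "trapped". I would massage this into the clean statement that in $\rsq$ every such $r$ would have to coincide with some $p_i$.

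\textbf{Main obstacle.} The delicate part is the second statement, the isomorphism for $T=\rsq$: one must rule out nonstandard elements sitting "at infinity" along $L$, and the only tool is the $\omega$-like clause of Definition~\ref{omegalikedfn}. The argument must carefully translate the purely order-theoretic/betweenness hypothesis into a contradiction using the fact that in $\rsq$ the line $L$ is order-isomorphic to $\mathbb{R}$ (so there is no room for a point both preceded by infinitely many $p_i$ and followed by some point of $P$, unless it is an actual limit, which the discrete-spacing forbids). I expect the bulk of the write-up to be verifying that the finitely-many quantifier alternations in Definition~\ref{omegalikedfn} indeed force this; the embedding half is routine induction. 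One should also double-check the edge behavior: $P$ might a priori look two-sided, but the "with a zero" hypothesis (Definition~\ref{sequencewithzero}, plus the noted fact that a discretely infinite sequence has at most one zero-point) pins down a unique starting point and a unique direction, which is what makes the induction well-defined.
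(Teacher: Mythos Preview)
Your embedding argument is fine and matches the paper: enumerate $p_0 = z, p_1, p_2,\dots$ by taking successive $E$-successors, which exist and are unique once one combines the discretely-infinite, discretely-spaced, and $\omega$-like clauses.

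The gap is in your concrete surjectivity argument for $T=\rsq$. You pick a hypothetical $r\in P$ outside the range and try to invoke the $\omega$-like condition at $r$. But that clause only returns neighbours $s',u'\in P$ with $r$ strictly between them and no other $P$-points in the open segment; it does \emph{not} force $s'$ to be one of the enumerated $p_i$. Since $r$ lies beyond every $p_i$, so does $s'$, and $s'$ is simply another non-standard element. Iterating produces an infinite descending $E$-chain of non-standard points, all still above every $p_i$: exactly the $\mathbb{Z}$-chain phenomenon from nonstandard models of arithmetic. Nothing in Definitions~\ref{discretelyspaceddfn}--\ref{omegalikedfn} alone blocks this, and your parenthetical ``but wait, $r$ is beyond all $p_j$, so actually we use $r$ as the outer point'' is the place where the argument stalls.

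The paper applies the $\omega$-like condition not at $r$ but at the supremum. If $\mathrm{ran}(h)=\{p_i\}$ is bounded on the line $L\cong\mathbb{R}$, Dedekind completeness gives $s=\sup_i p_i\in L$; one checks $s\notin\mathrm{ran}(h)$ (else $p_{k+1}>s$) and then $s\notin P$ (else discretely-spaced yields a $P$-free segment immediately below $s$, contradicting leastness of $s$). Now if some $t\in P$ has $\beta(i_0,s,t)$, then $s\in T$ lies strictly between two $P$-points, so the $\omega$-like clause at $s$ produces $s',u'\in P$ with $s'<^L s<^L u'$ and no $P$-points strictly between them; since $s\notin P$, the whole open segment is $P$-free, so $s'$ is an upper bound of $\{p_i\}$ smaller than $s$ --- contradiction. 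Hence no $P$-point lies at or beyond $s$, and any $P$-point below $s$ lies below some $p_i$ and is trapped between consecutive enumerated points, forcing it into $\mathrm{ran}(h)$.

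Your ``Main obstacle'' paragraph actually names the right mechanism (``order-isomorphic to $\mathbb{R}$\dots an actual limit, which the discrete-spacing forbids''); the point is that the $\omega$-like clause must be invoked at that limit point $s\notin P$, not at the non-standard $r\in P$.
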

\begin{proof}
We denote by $i_0$ the unique zero-point of $P$.
Since $P$ is a discretely infinite sequence, it has a base point.
Clearly $i_0$ has to be the only base point of $P$.
It is straightforward to establish that since $P$ is an $\omega$-like sequence with the base point $i_0$, there exists a sequence
 $(a_i)_{i\in \mathbb{N}}$ of points $a_i\in P$ such that $i_0=a_0$ and $a_{i+1}$ is the unique $E$-successor of $a_{i}$ for all $i\in \mathbb{N}$.
Define the function $h:\mathbb{N}\rightarrow P$ such that $h(i) = a_i$ for all $i\in\mathbb{N}$.
It is easy to see that $h$ is an embedding of $(\mathbb{N},\mi{succ})$ into $(P,E)$.
Assume then that $T=\rsq$. We shall show that the function $h:\mathbb{N}\longrightarrow P$ is a surjection.
Let $d$ denote the canonical metric of $\mathbb{R}$, and let $d_R$ be the
restriction of the canonical metric of $\rsq$ to the line $R$ in $\rsq$ such that $P\subseteq R$.
Let $g:\mathbb{R}\longrightarrow R$ be the isometry from $(\mathbb{R},d)$ to $(R,d_R)$ such
that $g(0) = i_0 = h(0)$ and such
that for all $r\in\mi{ran}(h)$, we have $\beta\bigl(i_0,g(1),r\bigr)$ or $\beta\bigl(i_0,r,g(1)\bigr)$.
Let $(R,\leq^R)$ be the
structure, where
$$\leq^{R}\ =\ \{\ (u,v)\in R\times R\ |\ g^{-1}(u)\, \leq^{\mathbb{R}}\, g^{-1}(v)\ \}.$$
If $\mi{ran}(h)$ is not bounded from above w.r.t. $\leq^R$, then $h$
must be a surjection. Therefore
assume that $\mi{ran}(h)$ is bounded above. By the Dedekind completeness of the reals, there
exists a least upper bound $s\in R$ of $\mi{ran}(h)$ w.r.t. $\leq^R$.
Notice that since $h$ is an 
embedding of $(\mathbb{N},\mi{succ})$ into $(P,E)$, we have $s\not\in\mi{ran}(h)$.
Due to the definition of $E$, it is sufficient to show that $\{\, t\in P\ |\ s\leq^R t\ \} = \emptyset$ in order to conclude
that $h$ maps onto $P$.
Assume that the least upper bound $s$ belongs to the set $P$. Since $P$ is a discretely spaced sequence,
there is a point $u\in\mathbb{R}^n\setminus\{s\}$ such that $\beta(s,u,i_0)$ and
$$\forall r\in\mathbb{R}^n
\bigl(\beta^*(s,r,u)
\rightarrow r\not\in P\bigl).$$
Now $u<^R s$ and the points $u$ and $s$ are linearly $(\mathbb{R}^n\setminus P)$-connected,
implying that $s$ cannot be the
least upper bound of $\mi{ran}(h)$. This is a
contradiction. Therefore $s\not\in P$.
Assume, ad absurdum, that there exists a point $t\in P$
such that $\beta(i_0,s,t)$.
Now, since $P$ is an $\omega$-like sequence, there
exists points $u',v'\in P\setminus\{s\}$
such that
$\beta(u',s,v')$ and
$$\forall r\in \rsq\bigl(\beta^*(u',r,v')\rightarrow r\not\in P\bigr).$$
We have $\beta(s,u',i_0)$ or $\beta(s,v',i_0)$. Assume, by symmetry, that $\beta(s,u',i_0)$.
Now $u'<^R s$, and the points $u'$ and $s$ are
linearly $(\rsq\setminus P)$-connected.
Hence, since  $s\not\in \mi{ran}(h)$, we conclude that
$s$ is not the least upper bound of $\mi{ran}(h)$.
This is a contradiction.
\end{proof}
%
%
%
\subsection{Geometric structures with an undecidable unary expansion class}
\begin{figure}
\centering
\includegraphics[scale=1.1]{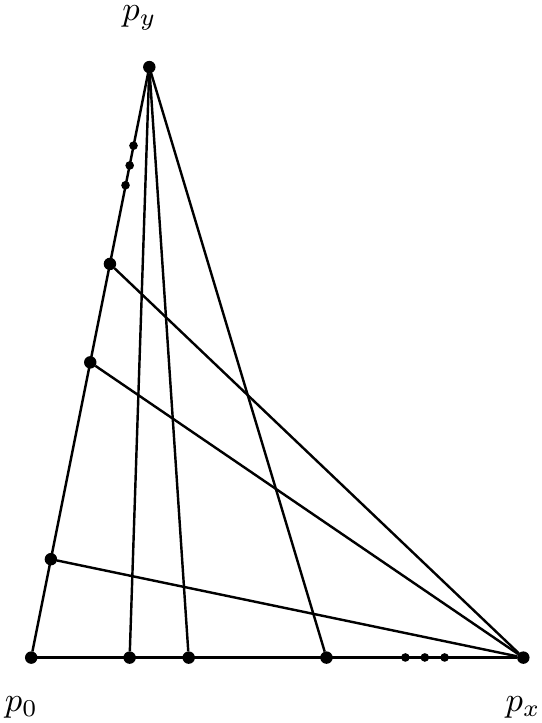}
\caption{The figure illustrates how a grid is interpreted in a Cartesian frame.
The intersection points of the solid lines correspond to domain points of the grid.
See also figure \ref{fig:triangle_tiled}.}
\label{fig:triangle}
\end{figure}
\begin{figure}
\centering
\includegraphics[scale=1.1]{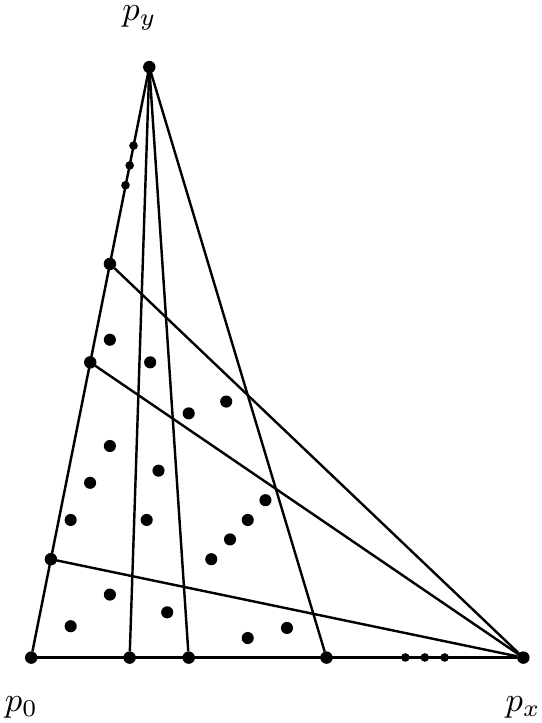}
\caption{The figure shows how the tile symbols of a labelled grid are interpreted in a Cartesian frame.
Each tile symbol $P_t$ is associated with the natural number $N(t)$ (see the Preliminaries section).
The number of dots $n$ on the southwest to northeast diagonal of the rectangle to the northeast of an intersection point $x$ corresponds to the
tile symbol associated with $x$. The point $x$ is associated with $P_t$ iff $n=N(t)$.}
\label{fig:triangle_tiled}
\end{figure}
Let $Q$ be an $\omega$-like sequence in $T\subseteq \rsq$ and let $q_0$ be the unique zero-point of $Q$. Assume there exists a point $q_e\in T\setminus Q$
such that $\beta(q_0,q,q_e)$ holds for all $q\in Q$. We call $Q\cup\{q_e\}$ \emph{an $\omega$-like sequence with an endpoint in $T$}. The point $q_e$ is
the \emph{endpoint of $Q\cup\{q_e\}$}. Notice that the endpoint $q_e$ is the only point $x$ in $Q\cup\{q_e\}$ such that the following conditions hold.
\begin{enumerate}
\item
There do not exist points $s,t\in Q\cup\{q_e\}$ such that $\beta^*(s,x,t)$.
\item
$\forall y z \in Q\cup\{q_e\}\Bigl(\, \beta^*(x,y,z)\rightarrow \exists v\in Q\cup\{q_e\}\beta^*(x,v,y)\Bigr)$.
\end{enumerate}
%
%
%
%
%
%
%
%
%
%
%
%
%
\begin{defi}
Let $P\subseteq T\subseteq \rsq$, and let $p_0,p_x,p_y\in P$. We call the structure
$$\mf{C}=(T,\beta,P,p_0,p_x,p_y)$$
a \emph{Cartesian frame} with domain $T$, if the following conditions hold.
\begin{enumerate}
\item The points $p_0$, $p_x$ and $p_y$ are not collinear, i.e., $\mi{collinear}(p_0,p_x,p_y)$ does not hold
in the structure $\mf{C}$.
\item
The set 
$$P_x=\{\ u\in P\mid \mi{collinear}(p_0,u,p_x) \text{ holds in }\mf{C}\ \}$$
is an $\omega$-like sequence with an
endpoint in $T$. The point $p_x$ is the endpoint of $P_x$.
\item
The set
$$P_y=\{\ u\in P\mid \mi{collinear}(p_0,u,p_y)\text{ holds in }\mf{C}\ \}$$
is an $\omega$-like sequence with an
endpoint in $T$. The point $p_y$ is the endpoint of $P_y$.
\item The point $p_0$ is the zero-point of both $P_x\setminus\{p_x\}$ and $P_y\setminus\{p_y\}$.
\item For each point $p\in P_x\setminus\{p_x\}$ and $q\in P_y\setminus\{p_
y\}$, the unique lines $L_p$ and $L_q$ in $T$ such
that $p,p_y\in L_p$ and $q,p_x\in L_q$, intersect. In other words, there exists a point $u\in T$
that lies on both lines $L_p$ and $L_q$.
\end{enumerate}
\end{defi}
\begin{defi}\label{intersectionpointdefinition}
Let $\mf{C}=(T,\beta,P,p_0,p_x,p_y)$ be a Cartesian frame. 
Let $p\in P_x\setminus\{p_x\}$ and $q\in P_y\setminus\{p_y\}$
be points and $L_p$ and $L_q$ the lines in $T$ such that $p,p_y\in L_p$ and $q,p_x\in L_q$.
The point $u\in T$ that lies on both lines $L_p$ and $L_q$ is called---rather suggestively---the
\emph{intersection point of $\mf{C}$ corresponding to the pair $(p,q)$}. A point $u\in T$ is called an \emph{intersection point} of the Cartesian
 frame $\mf{C}$, if it is an intersection point of $\mf{C}$ corresponding some pair $(p,q)$, where $p\in P_x\setminus\{p_x\}$ and $q\in
 P_y\setminus\{p_y\}$.
\end{defi}
\begin{defi}\label{diagonalsuccessordefinition}
Let $\mf{C}=(T,\beta,P,p_0,p_x,p_y)$ be a Cartesian frame.
Recall Definition \ref{successordefinition}.
Let $E_x$ be the successor relation of the $\omega$-like sequence $P_x\setminus\{p_x\}$ and $E_y$ the
successor relation of $P_y\setminus\{p_y\}$. Let $p,p',q,q'$ be points
such that $(p,p')\in E_x$ and $(q,q')\in E_y$.
Let $u$ be the intersection point of $\mf{C}$ corresponding to $(p,q)$ and $v$ the
intersection point of $\mf{C}$ corresponding to $(p',q')$.
We say that $v$ is the \emph{diagonal successor} of $u$ in $\mf{C}$.
\end{defi}
\begin{defi}
Recall the function $N$ that associates each tile type $t$ with the unique positive integer $N(t)$ (see the 
Preliminaries section).
Let $\mf{C}=(T,\beta,P,p_0,p_x,p_y)$ be a Cartesian frame and let $S\not=\emptyset$ be a finite set of tile symbols.
We call $\mf{C}$ an \emph{$S$-labelled Cartesian frame} if the number of points in $P$ strictly between
any intersection point $u$ of $\mf{C}$ and its diagonal successor $v$ is in the set $\{N(t)\mid P_t\in S\}$.
If $T\subseteq\rsq$ and $S\not=\emptyset$ is a finite set of tile symbols, we let
$\mathcal{C}(T,S)$ denote the class of exactly all $S$-labelled Cartesian frames with domain $T$.
\end{defi}
%
%
%
%
%
%
\begin{lem}\label{boxedcartesianpicturesdefinablelemma}
Let $T\subseteq\rsq$, $n\geq 2$, and let $\mathcal{C}$ be the class of all expansions $(T,\beta,P,p_0,p_x,p_y)$ of $(T,\beta)$ by
a unary relation $P$ and constants $p_0,p_x,p_y$.
There is a computable function associating each input $S$ to the tiling problem
with a first-order $\{\beta, P,p_0,p_x,p_y\}$-sentence $\varphi_\mi{Cf}^S$ that
defines the class of $S$-labelled Cartesian frames with the domain $T$
with respect to the class $\mathcal{C}$.
\end{lem}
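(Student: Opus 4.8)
The plan is to build the sentence $\varphi_{\mathrm{Cf}}^S$ as a large conjunction, one conjunct per clause in the definition of an $S$-labelled Cartesian frame, and to verify that each clause is first-order expressible over the vocabulary $\{\beta,P,p_0,p_x,p_y\}$. First I would handle the ``frame'' conditions. Non-collinearity of $p_0,p_x,p_y$ is immediate via $\neg\,\mi{collinear}(p_0,p_x,p_y)$. For the requirement that $P_x$ and $P_y$ are $\omega$-like sequences with an endpoint, I would reuse Lemma \ref{omegalikesequencedefinability}: note that $P_x$ is not a free predicate but a \emph{definable} subset of $P$, namely $\{u\in P\mid\mi{collinear}(p_0,u,p_x)\}$. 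So I would take the formula $\varphi_\omega(P)$ and relativise all of its quantifiers to the definable class $P_x$, i.e. systematically replace each occurrence of a $P$-atom $Pz$ by $Pz\wedge\mi{collinear}(p_0,z,p_x)$; this yields a $\{\beta,P,p_0,p_x\}$-formula asserting that $P_x\setminus\{p_x\}$ is $\omega$-like, and a few more lines (using the two explicit conditions listed just before the Cartesian-frame definition for recognising the endpoint) pin down that $p_x$ is its endpoint. The analogous relativisation to $\{u\in P\mid\mi{collinear}(p_0,u,p_y)\}$ handles $P_y$. That $p_0$ is the zero-point of both sequences is the already-isolated first-order property from Definition \ref{sequencewithzero}, again relativised.

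Next I would address condition (5), the intersection-existence condition, and the labelling condition. For a point $p$ on $P_x$ and $q$ on $P_y$, the line $L_p$ through $p$ and $p_y$ and the line $L_q$ through $q$ and $p_x$ each admit a first-order description of their points (``$z$ is collinear with $p,p_y$ and lies in $T$''), so ``$L_p$ and $L_q$ intersect'' is just $\exists u\,(\mi{collinear}(p,p_y,u)\wedge\mi{collinear}(q,p_x,u))$; universally quantifying $p$ over $P_x\setminus\{p_x\}$ and $q$ over $P_y\setminus\{p_y\}$ gives the conjunct. The intersection point of $\mf{C}$ corresponding to $(p,q)$ is then \emph{definable} from $(p,q)$ by this same formula — one should check it is the \emph{unique} such $u$, which follows because two distinct lines in a flat meet in at most one point, a fact already packaged into the notion of line in $T$. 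For the labelling condition I need to speak about the diagonal successor: using the successor relations $E_x,E_y$ of $P_x\setminus\{p_x\}$ and $P_y\setminus\{p_y\}$ — which are first-order definable from $P$ and the constants by Definition \ref{successordefinition}, since ``adjacent'' unfolds into linear $(T\setminus P)$-connectedness, itself a $\beta,P$-formula — I can express ``$v$ is the diagonal successor of $u$'' as: there exist $p,p',q,q'$ with $(p,p')\in E_x$, $(q,q')\in E_y$, $u$ the intersection point for $(p,q)$ and $v$ the intersection point for $(p',q')$. Finally, ``the number of points of $P$ strictly between $u$ and $v$ lies in $\{N(t)\mid P_t\in S\}$'' is a finite disjunction over $t$ with $P_t\in S$ of sentences ``there are exactly $N(t)$ points $w$ with $Pw\wedge\beta^*(u,w,v)$'', each a bounded counting statement of fixed finite size, hence first-order; the whole labelling conjunct then universally quantifies $u$ and its diagonal successor $v$. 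Computability of $S\mapsto\varphi_{\mathrm{Cf}}^S$ is clear since $S$ is finite, $N$ is computable, and all other pieces are fixed formula schemes.

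The remaining work is the correctness argument: an expansion $(T,\beta,P,p_0,p_x,p_y)$ in $\mathcal{C}$ satisfies $\varphi_{\mathrm{Cf}}^S$ iff it is an $S$-labelled Cartesian frame with domain $T$. The ``only if'' direction is a matter of reading the conjuncts back against the definitions; the ``if'' direction requires that the relativised formulas genuinely capture the intended properties of the \emph{definable} sets $P_x,P_y$ rather than of some larger set — this is routine once one observes that every quantifier in $\varphi_\omega$, in the zero-point formula, and in the successor-relation formula has been correctly guarded so that it ranges exactly over the intended definable subset — and that the ``intersection point'' and ``diagonal successor'' relations really are functional on the relevant domain, which is exactly what conditions (1) and (5) of the Cartesian-frame definition guarantee.

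The main obstacle, I expect, is bookkeeping rather than mathematics: carefully relativising the already-constructed formulas $\varphi_\omega(P)$, the zero-point formula, and the successor-relation formula to the definable subsets $P_x$ and $P_y$ — making sure no quantifier silently escapes the guard, and that the endpoint $p_x$ (which lies \emph{outside} the $\omega$-like sequence $P_x\setminus\{p_x\}$ but \emph{inside} $P$) is handled consistently everywhere. The one genuinely non-formal point to nail down is the uniqueness of the intersection point, which underlies the functionality of ``diagonal successor'' and hence the well-posedness of the labelling condition; but this is exactly the content of clause (5) together with the defining properties of lines in $T$, so it costs only a sentence or two. Accordingly I would present the proof as ``straightforward'', exhibiting the conjuncts explicitly but delegating the clause-by-clause verification to inspection, in the same spirit as Lemmas \ref{tilingdefinablelemma} and \ref{omegalikesequencedefinability}.
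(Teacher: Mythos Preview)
Your proposal is correct and follows essentially the same approach as the paper, which dispatches the lemma in one line as ``Straightforward by virtue of Lemma \ref{omegalikesequencedefinability}.'' You have simply spelled out the conjunct-by-conjunct construction that the paper leaves implicit, correctly identifying the relativisation of $\varphi_\omega$ to the definable subsets $P_x$ and $P_y$ as the only point requiring any care.
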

\begin{proof}
Straightforward by virtue of Lemma \ref{omegalikesequencedefinability}.
\end{proof}
%
%
%
%
%
%
%
%
%

%
Recall that we let $\mf{G}$ denote the grid. Let $S\not=\emptyset$
be a finite set of tile symbols. We let $\mf{G}_S$ denote the class of structures $\mf{A}$ that satisfy the following conditions.
\begin{enumerate}
\item The structure $\mf{A}$ is an expansion of the grid $\mf{G}$ to the vocabulary $\{V,H\}\cup S$.
\item Each point in the domain of $\mf{A}$ is in the extension of exactly one predicate symbol $P_t\in S$.
\end{enumerate} 
Structures in the class $\mf{G}_S$ are called \emph{$S$-labelled grids}.
Similarly, we let $\mathcal{G}_S$ denote the class of structures $\mf{A}$ that satisfy the following conditions.
\begin{enumerate}
\item The structure $\mf{A}$ is an expansion of some supergrid to the vocabulary $\{V,H\}\cup S$.
\item Each point in the domain of $\mf{A}$ is in the extension of exactly one predicate symbol $P_t\in S$.
\end{enumerate} 
Structures in the class $\mathcal{G}_S$ are called
\emph{$S$-labelled supergrids}.
The class of \emph{$S$-labelled recurrence grids} is defined in the obvious way.
\begin{lem}\label{gridinterpretablelemma}                                                                                                                                                                 
Let $T\subseteq\mathbb{R}^n$  be a set that extends linearly in $\mathrm{2D}$.
There is a computable function $I$ such that for each finite set of tile symbols $S\not=\emptyset$,
there exists some class
$\mathcal{G}(T,S)\supseteq \mf{G}_S$ of $S$-labelled supergrids such that 
the function $I$ is a uniform interpretation of
$\mathcal{G}(T,S)$ in $\mathcal{C}(T,S)$.
\end{lem}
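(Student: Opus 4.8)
The goal is to construct a uniform interpretation $I$ of a class $\mathcal{G}(T,S)$ of $S$-labelled supergrids in the class $\mathcal{C}(T,S)$ of $S$-labelled Cartesian frames with domain $T$. The idea is that an $S$-labelled Cartesian frame $\mathfrak{C}=(T,\beta,P,p_0,p_x,p_y)$ already carries an implicit labelled grid: the intersection points of $\mathfrak{C}$ (Definition~\ref{intersectionpointdefinition}) are the grid vertices, with $(i,j)$ corresponding to the intersection point determined by the $i$-th element of $P_x\setminus\{p_x\}$ and the $j$-th element of $P_y\setminus\{p_y\}$ (in the orderings induced by the successor relations $E_x,E_y$, which by Lemma~\ref{nonstandardmodelsexcluded} embed $(\mathbb{N},\mi{succ})$). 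First I would write down $\varphi_{\mi{Dom}}(x)$ — a first-order $\{\beta,P,p_0,p_x,p_y\}$-formula asserting that $x$ is an intersection point of $\mathfrak{C}$; this is expressible since ``lies on the line through $p$ and $p_y$'' is a collinearity condition and the whole configuration is first-order over $\beta$ and $P$, using the machinery of Lemma~\ref{omegalikesequencedefinability}.

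Next I would define the interpreting formulae for $H$, $V$, and each tile symbol $P_t\in S$. For $H$ and $V$: $\varphi_H(x,y)$ should say that $x,y$ are intersection points, that the $P_y$-coordinate-line through $x$ and through $y$ coincide (same ``row''), and that the $P_x$-coordinates are $E_x$-successors; symmetrically for $\varphi_V(x,y)$ using $E_y$. The successor relations $E_x,E_y$ are first-order definable from $P$ by Definition~\ref{successordefinition} together with Lemma~\ref{omegalikesequencedefinability} (linear $(T\setminus P)$-connectedness and the zero-point are first-order). For the tile symbols: $\varphi_{P_t}(x)$ should assert that $x$ is an intersection point, that it has a diagonal successor $v$ (which it does, along the line through $x$ of slope matching $p_0,p_x\!-\!p_0$-plus-one-step, cf.\ Definition~\ref{diagonalsuccessordefinition}), and that the number of points of $P$ strictly between $x$ and $v$ equals $N(t)$ — a fixed finite number, hence first-order expressible by the obvious ``there exist exactly $N(t)$ points'' sentence. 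I must also translate $I(x=y):=x=y$, which is forced by the interpretation framework. With these in hand, $\mathcal{G}(T,S)$ is \emph{defined} as the image: $\mathcal{G}(T,S):=\{F(\mathfrak{C})\mid\mathfrak{C}\in\mathcal{C}(T,S)\}$, where $F(\mathfrak{C})$ is the $\{V,H\}\cup S$-structure on the intersection points with relations given by the $\varphi$'s.

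Then I would verify the two things the statement demands. First, that $I$ genuinely is a uniform interpretation: I must exhibit, for each $\mathfrak{C}\in\mathcal{C}(T,S)$, a bijection $f$ from the domain of $F(\mathfrak{C})$ onto $\{u\in T\mid\mathfrak{C}\models\varphi_{\mi{Dom}}(u)\}$ — here $f$ is the identity, since I build $F(\mathfrak{C})$ directly on the intersection-point set — and check the relation-symbol condition $R^{F(\mathfrak{C})}(\bar u)\Leftrightarrow\mathfrak{C}\models\varphi_R(\bar u)$, which holds by construction. Second, and this is the substantive geometric point, I must check $\mathcal{G}(T,S)\supseteq\mathfrak{G}_S$: every $S$-labelled grid $\mathfrak{G}$ arises as $F(\mathfrak{C})$ for some $S$-labelled Cartesian frame $\mathfrak{C}$ with domain $T$. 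Since $T$ extends linearly in $\mathrm{2D}$, fix the linearly regular $2$-dimensional flat $S_0$, the point $x\in S_0\cap T$ and $\epsilon>0$ with the $\epsilon$-ball of $x$ in $S_0$ contained in $T$; inside this $2$D patch I place $p_0$, and a sequence of points on two lines through $p_0$ converging geometrically toward the patch boundary (so that infinitely many fit in a bounded region — a ``fisheye'' coordinate layout), choosing the spacing on the diagonal so that exactly $N(t)$ points of $P$ sit between consecutive intersection points, reading the tile $P_t$ off the given labelling of $\mathfrak{G}$. Linear completeness and linear closedness of $S_0$ are exactly what guarantee the intermediate sequence points and the intersection points $u$ all lie in $T$.

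**Main obstacle.** The routine parts are the formula-writing (everything in sight is visibly first-order over $\beta$ and $P$) and the interpretation bookkeeping. The real work is the containment $\mathcal{G}(T,S)\supseteq\mathfrak{G}_S$: one must actually construct, for an arbitrary labelled grid, an honest $S$-labelled Cartesian frame \emph{inside the possibly very sparse set $T$}, fitting a whole $\mathbb{N}\times\mathbb{N}$ grid of intersection points plus the $N(t)$-many marker points per cell into a bounded $2$D patch, and argue that the $\omega$-like-sequence axioms (Definition~\ref{omegalikedfn}), the non-collinearity of $p_0,p_x,p_y$, and above all the intersection condition (5) in the Cartesian-frame definition are met — which is precisely where linear regularity of the flat $S_0$ must be invoked.
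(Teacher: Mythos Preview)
Your proposal is correct and follows essentially the same route as the paper: interpret the domain as the set of intersection points, define $\varphi_H,\varphi_V$ via successor along the axis sequences, define $\varphi_{P_t}$ by counting $P$-points between an intersection point and its diagonal successor, and use the 2D linear-extension hypothesis to realize every $S$-labelled grid inside some Cartesian frame. One minor bookkeeping slip: if you take $f=\mathrm{id}$ so that $F(\mathfrak{C})$ literally lives on the intersection-point set $\subseteq T$, then $F(\mathfrak{C})$ is never a member of $\mathfrak{G}_S$ (whose structures have domain $\mathbb{N}^2$), so the containment $\mathfrak{G}_S\subseteq\mathcal{G}(T,S)$ fails as stated; the paper sidesteps this by letting $F(\mathfrak{C})=\mathfrak{G}_{\mathfrak{C}}$ be an abstract $S$-labelled supergrid merely \emph{isomorphic} (via $f$) to the interpreted structure, and you should do the same.
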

\begin{proof}
Let $S\not=\emptyset$ be a finite set of tile symbols.
Let $\mf{C} = (T,\beta,P,p_0,p_x,p_y)$ be an $S$-labelled Cartesian frame.
%
We shall show how an $S$-labelled supergrid $\mf{G}_{\mf{C}}$ is interpreted in $\mf{C}$.
Figures \ref{fig:triangle} and \ref{fig:triangle_tiled} illustrate the related constructions.
The domain of the interpretation of $\mf{G}_{\mf{C}}$ in $\mf{C}$ will be the set of intersection
points of two sets of lines defined as follows.
The first set of lines 
is formed by connecting the point $p_y$ to each of the points $u$ in the set
$$\{x\in P\mid \beta(p_0,x,p_x)\}\setminus\{p_x\},$$
i.e., each line in the set corresponds to a pair $(u,p_y)$ for some $u\not=p_x$ such that $\beta(p_0,u,p_x)$.
Similarly, the second set of lines is formed by connecting the point $p_x$ to each of the points
in the set
$$\{x\in P\mid \beta(p_0,x,p_y)\}\setminus\{p_y\}.$$
%

%
%
%
%
%
%
%
%
%
%
%
%
%
%
%
%
%
%
We then define the formulae
\begin{align*}
\varphi_{\mi{Dom}}(u)\ := &\ \exists x y\big(P(x)\wedge P(y)\wedge\beta^*(p_0,x,p_x)\wedge\beta^*(p_0,y,p_y)\wedge\beta^*(x,u,p_y)\wedge\beta^*(y,u,p_x)\big)\\
&\vee \Bigl(u\neq p_x\wedge u\neq p_y\wedge P(u)\wedge \bigl(\beta(p_0,u,p_x)\vee\beta(p_0,u,p_y)\bigr) \Bigr),\\
%
%
%
%
\varphi_{H}(u,v)\ :=&\ 
\exists x \bigl(\beta(p_0,x,p_y)\, \wedge\, \beta(x,u,v)\, \wedge\, \beta^*(u,v,p_x)\bigr)
\wedge\, \forall r\bigl(\, \beta^*(u,r,v)\, \rightarrow\, \neg \varphi_{\mi{Dom}}(r)\, \bigr),\\
\varphi_{V}(u,v)\ :=&\ 
\exists x \bigl(\beta(p_0,x,p_x)\, \wedge\, \beta(x,u,v)\, \wedge\, \beta^*(u,v,p_y)\bigr)
\wedge\, \forall r\bigl(\, \beta^*(u,r,v)\, \rightarrow\, \neg \varphi_{\mi{Dom}}(r)\, \bigr).
\end{align*}
Next we define the following auxiliary formula:
\begin{align*}
\mi{diagonal}(u,v):=&
\exists x\bigl( \varphi_\mi{Dom}(x)\wedge \varphi_\mi{H}(u,x)\wedge\varphi_\mi{V}(x,v)\bigr).
\end{align*}
Recall the function $N$ that associates each tile type $t$ with the unique positive integer $N(t)$.
Let $\exists^{=N(t)}x$ denote the quantifier stating that there exist exactly $N(t)$ $x$:s.
Now, for each tile symbol $P_t$, we define
\begin{align*}
\varphi_{P_t}(u)\ :=&\ 
\exists z \exists^{=N(t)} x \bigl(\, 
\varphi_\mi{Dom}(z)\wedge\mi{diagonal}(u,z) \wedge P(x)\wedge \beta^*(u,x,z)\,  \bigr).
\end{align*}
The formulae
$\varphi_{\mi{Dom}}$, $\varphi_\mi{H}$, $\varphi_\mi{V}$ and $\varphi_{P(t)}$
define the uniform interpretation $I$.
Call $D_\mf{C} := \{\ u\in T\ |\ \mf{C}\models\varphi_{\mi{Dom}}(u)\ \}$, and 
define the
structure
$$\mf{D}_\mf{C} = (D_\mf{C},H^{\mf{D}_\mf{C}},V^{\mf{D}_\mf{C}},(P_t^{\mf{D}_\mf{C}})_{P_t\in S}),$$
where 
\begin{align*}
H^{\mf{D}_\mf{C}}\ :=&\ \{\ (u,v)\in D_\mf{C}\times D_\mf{C}\ |\ \mathfrak{C}\models\varphi_H(u,v)\ \}, \\
V^{\mf{D}_\mf{C}}\ :=&\ \{\ (u,v)\in D_\mf{C}\times D_\mf{C}\ |\ \mathfrak{C}\models\varphi_V(u,v)\ \},\\
P_t^{\mf{D}_\mf{C}}\ :=&\ \{\ u\in D_\mf{C}\ |\ \mathfrak{C}\models\varphi_{P_t}(u)\ \}\text{ for each }P_t\in S.
\end{align*}
By Lemma \ref{nonstandardmodelsexcluded}, it is easy to see that there
is and $S$-labelled grid
$(G,H,V,(P_t)_{P_t\in S})$ such that there
exists an injection $f$ from the domain of the grid to $D_{\mf{C}}$ such that 
the following three conditions hold for all $u,v\in G$ and $P_t\in S$:
\begin{enumerate}
\item
$(u,v)\in H\ \Leftrightarrow\ \varphi_{H}\bigl(f(u),f(v)\bigr)$,
\item
$(u,v)\in V\ 
\Leftrightarrow\ \varphi_{V}\bigl(f(u),f(v)\bigr)$,
\item
$u\in P_{t}\ 
\Leftrightarrow\ \varphi_{P_t}\bigl( f(u)\bigr)$.
\end{enumerate}
Hence there is an $S$-labelled supergrid $\mf{G}_\mf{C}=(G_\mf{C},H^{\mf{G}_\mf{C}},V^{\mf{G}_\mf{C}},(P_t^{\mf{G}_\mf{C}})_{P_t\in S})$ such that
there exists a bijection $f$
from $G_\mf{C}$ to $D_\mf{C}$ such that
the following conditions hold for all $u,v\in G_{\mf{C}}$ and $P_t\in S$:
\begin{enumerate}
\item
$(u,v)\in H^{\mf{G}_\mf{C}}\ \Leftrightarrow\ \varphi_{H}\bigl(f(u),f(v)\bigr)$,
\item
$(u,v)\in V^{\mf{G}_\mf{C}}\ 
\Leftrightarrow\ \varphi_{V}\bigl(f(u),f(v)\bigr)$,
\item
$u\in P^{\mf{G}_\mf{C}}_{t}\ 
\Leftrightarrow\ \varphi_{P_t}\bigl( f(u)\bigr)$.
\end{enumerate}
Let 
$$\mathcal{G}(T,S):=\{\ \mf{G}_{\mf{C}}\in\mathcal{G}_S
\ |\ \mf{C}\in\mathcal{C}(T,S)\ \}.$$
Since $T$ extends linearly in $\mathrm{2D}$, we have $\mf{G}_S\subseteq\mathcal{G}(T,S)$.
The function $I$ is a uniform interpretation of $\mathcal{G}(T,S)$ in $\mathcal{C}(T,S)$.
\end{proof}
\begin{lem}\label{recurrencegridinterpr}
Let $n\geq 2$ be an integer.
There is a computable function $K$ such that for each
nonempty set $S$ of tile symbols, the function $K$ is a 
uniform interpretation of the class of $S$-labelled recurrence grids in the class of
$S$-labelled Cartesian frames with domain $\mathbb{R}^n$.
\end{lem}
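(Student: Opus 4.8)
The plan is to build on Lemma~\ref{gridinterpretablelemma} by augmenting its interpretation with a first-order definition of the recurrence relation $R$ of the recurrence grid $\mf{R}$. Recall that $R$ connects $(0,i)$ to $(0,j)$ exactly when $i<j$, i.e.\ it linearly orders the leftmost column of the grid. In the geometric picture of a Cartesian frame $\mf{C}=(\rsq,\beta,P,p_0,p_x,p_y)$, the leftmost column of the interpreted grid is carved out by the line through $p_0$ and $p_y$; its points are precisely the intersection points lying on this line, which the sequence $P_y\setminus\{p_y\}$ already picks out. Since $p_0$ is the zero-point of $P_y\setminus\{p_y\}$, the natural order on this $\omega$-like sequence is first-order definable: $u$ precedes $v$ iff $\beta(p_0,u,v)$ and both are in the relevant sequence. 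So I would set
\[
\varphi_R(u,v)\ :=\ \varphi_{\mi{Dom}}(u)\wedge\varphi_{\mi{Dom}}(v)\wedge P(u)\wedge P(v)\wedge\beta(p_0,u,p_y)\wedge\beta(p_0,v,p_y)\wedge\beta^*(p_0,u,v),
\]
possibly adjusting so that $u\neq v$ and so that $p_y$ itself (the endpoint, not a grid point of the column) is excluded, which is already handled by $\varphi_{\mi{Dom}}$ restricting to the $\omega$-like part.

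The main steps, in order: (i) take the computable function $I$ and the formulae $\varphi_{\mi{Dom}},\varphi_H,\varphi_V,\varphi_{P_t}$ from the proof of Lemma~\ref{gridinterpretablelemma}, specialized to $T=\rsq$; (ii) adjoin the formula $\varphi_R$ above; (iii) verify that the resulting map $K$ is a uniform interpretation of the claimed class; (iv) identify which class of $S$-labelled recurrence grids gets interpreted. For step (iii) the crucial point is that when $T=\rsq$, Lemma~\ref{nonstandardmodelsexcluded} gives that $(\mathbb{N},\mi{succ})$ is \emph{isomorphic} (not merely embeddable) to $(P_y\setminus\{p_y\},E_y)$, so the interpreted column has order type exactly $\omega$, and $\varphi_R$ defines exactly the strict order $\{((0,i),(0,j))\mid i<j\}$ on it. Combined with the fact (again from the $\rsq$ case of Lemma~\ref{nonstandardmodelsexcluded} applied to both $P_x$ and $P_y$) that the interpreted structure is genuinely the standard grid $\mf{G}$ rather than a proper supergrid, the interpreted structure is a genuine $S$-labelled recurrence grid. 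Conversely, since $\rsq$ with $n\geq2$ contains a linearly regular $2$-dimensional flat with a full neighbourhood inside it, every $S$-labelled recurrence grid arises from some Cartesian frame with domain $\rsq$ (place $P$ along a suitable Cartesian lattice in a $2$-flat); hence $K$ is a uniform interpretation of the \emph{whole} class of $S$-labelled recurrence grids, matching the statement.

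The expected main obstacle is not the definition of $\varphi_R$ — that is short — but the bookkeeping in step (iii): one must check that $\varphi_R$ interacts correctly with the bijection $f$ between $G_\mf{C}$ and $D_\mf{C}$ from Lemma~\ref{gridinterpretablelemma}, i.e.\ that under $f$ the relation defined by $\varphi_R$ on $D_\mf{C}$ pulls back exactly to the recurrence relation of the standard grid. This rests on correctly matching up "the leftmost column of the interpreted grid'' with "the points of $P_y$ on the line $p_0p_y$'', and on the surjectivity half of Lemma~\ref{nonstandardmodelsexcluded} to rule out nonstandard columns; both are already in hand, so the argument is essentially a verification. I would therefore write the proof as: define $K$ by augmenting $I$ with $\varphi_R$; invoke Lemma~\ref{gridinterpretablelemma} and the $T=\rsq$ clause of Lemma~\ref{nonstandardmodelsexcluded} for the grid part; and give a one-paragraph check that $\varphi_R$ defines the recurrence relation, noting the converse direction (every recurrence grid is realized) follows by explicitly laying $P$ out as an integer lattice in a $2$-dimensional flat of $\rsq$.
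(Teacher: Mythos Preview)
Your proposal is correct and follows exactly the approach the paper intends: the paper's own proof is the single line ``Straightforward by Lemma~\ref{nonstandardmodelsexcluded} and the proof of Lemma~\ref{gridinterpretablelemma}'', and you have simply unpacked what that line means---augment the interpretation $I$ with a formula $\varphi_R$ for the order on the $p_0p_y$-column, and use the isomorphism clause of Lemma~\ref{nonstandardmodelsexcluded} (available precisely because $T=\rsq$) to ensure the interpreted structure is the standard grid rather than a proper supergrid.

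One small point to tidy in your $\varphi_R$: as written, $\beta^*(p_0,u,v)$ forces $u\neq p_0$, so the pair corresponding to $((0,0),(0,j))$ would be missed. Replace $\beta^*(p_0,u,v)$ by $\beta(p_0,u,v)\wedge u\neq v$ (you already flagged that an adjustment of this kind might be needed); the conjuncts $\varphi_{\mi{Dom}}(u)$, $\varphi_{\mi{Dom}}(v)$ already exclude $p_y$.
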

\begin{proof}
Straightforward by Lemma \ref{nonstandardmodelsexcluded} and the proof of Lemma \ref{gridinterpretablelemma}.
\end{proof}
\begin{thm}\label{infinitelemma}
Let $T\subseteq \rsq$ be a set and $\beta$ be the corresponding betweenness relation. Assume that $T$ extends linearly in $\mathrm{2D}$.
The first-order theory of the unary expansion class of 
$(T,\beta)$ is $\Sigma_1^0$-hard.
\end{thm}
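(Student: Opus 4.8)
The plan is to reduce the (ordinary) tiling problem to the complement of the first-order theory of the unary expansion class of $(T,\beta)$; since the tiling problem is $\Pi_1^0$-complete by Berger's theorem, its complement is $\Sigma_1^0$-complete, and a reduction of that complement into the satisfiability problem for $\{\beta,P\}$-sentences over expansions of $(T,\beta)$ will show that this satisfiability problem is $\Sigma_1^0$-hard, hence that the theory (its negation) is $\Sigma_1^0$-hard as claimed.

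First I would, given a finite nonempty set of tile types $S$ (the input to the tiling problem), pass to the set of tile symbols $\{P_t\mid t\in S\}$, which by a harmless abuse I also call $S$. Using Lemma~\ref{tilingdefinablelemma} I obtain a first-order $\{H,V\}\cup S$-sentence $\varphi_S$ such that a structure $\mf{A}$ of vocabulary $\{H,V\}$ is $S$-tilable iff some expansion of $\mf{A}$ to $\{H,V\}\cup S$ satisfies $\varphi_S$. By the remark at the end of Section~\ref{section2}, $\mf{G}$ is $S$-tilable iff some supergrid is $S$-tilable; combined with $\varphi_S$ this means: $\mf{G}$ is $S$-tilable iff some $S$-labelled supergrid satisfies $\varphi_S$. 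Now by Lemma~\ref{gridinterpretablelemma} there is a computable function $I$ and a class $\mathcal{G}(T,S)$ with $\mf{G}_S\subseteq\mathcal{G}(T,S)\subseteq\mathcal{G}_S$ such that $I$ is a uniform interpretation of $\mathcal{G}(T,S)$ in $\mathcal{C}(T,S)$, the class of $S$-labelled Cartesian frames with domain $T$. Applying Lemma~\ref{uniforminterpretationlemma} to $I$ and the sentence $\varphi_S$: some structure in $\mathcal{G}(T,S)$ satisfies $\varphi_S$ iff some structure in $\mathcal{C}(T,S)$ satisfies $I(\varphi_S)$. Finally, by Lemma~\ref{boxedcartesianpicturesdefinablelemma} there is a computable $\varphi_{\mi{Cf}}^S$ of vocabulary $\{\beta,P,p_0,p_x,p_y\}$ defining $\mathcal{C}(T,S)$ within the class of all expansions of $(T,\beta)$ by $P,p_0,p_x,p_y$; replacing the constants $p_0,p_x,p_y$ by existentially quantified variables, we obtain a computable $\{\beta,P\}$-sentence $\psi_S := \exists p_0\exists p_x\exists p_y\bigl(\varphi_{\mi{Cf}}^S\wedge I(\varphi_S)\bigr)$.

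The chain of equivalences then runs: $\mf{G}$ is $S$-tilable $\iff$ some $S$-labelled supergrid in $\mathcal{G}(T,S)$ satisfies $\varphi_S$ $\iff$ some $S$-labelled Cartesian frame in $\mathcal{C}(T,S)$ satisfies $I(\varphi_S)$ $\iff$ some expansion $(T,\beta,P)$ satisfies $\psi_S$. For the first equivalence, the forward direction uses $\mf{G}_S\subseteq\mathcal{G}(T,S)$ (so the actual tiled grid lies in the class, which is where ``$T$ extends linearly in $\mathrm{2D}$'' is used, via Lemma~\ref{gridinterpretablelemma}), and the backward direction uses $\mathcal{G}(T,S)\subseteq\mathcal{G}_S$ together with the supergrid remark. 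Hence $S\mapsto\psi_S$ is a computable reduction from the complement of the tiling problem to the set of $\{\beta,P\}$-sentences satisfiable in some expansion of $(T,\beta)$, i.e.\ to the complement of the first-order theory of the unary expansion class of $(T,\beta)$. Since the former is $\Sigma_1^0$-complete, the theory is $\Sigma_1^0$-hard.

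The main obstacle — already discharged by the preceding lemmas, so here it is only bookkeeping — is the ``no nonstandard models'' issue: an $S$-labelled Cartesian frame with domain $T$ need not interpret a grid that is exactly $\mf{G}_S$, only a supergrid, which is why one works with the class $\mathcal{G}(T,S)$ sandwiched between $\mf{G}_S$ and $\mathcal{G}_S$ and invokes Lemma~\ref{nonstandardmodelsexcluded} inside Lemma~\ref{gridinterpretablelemma}. The only genuinely new point to check here is that all three sentence-producing steps ($\varphi_S$, $I(\cdot)$, $\varphi_{\mi{Cf}}^S$) are uniformly computable in $S$ and that their composition $\psi_S$ is a $\{\beta,P\}$-sentence (the constants get quantified away), which is immediate. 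I would also note explicitly that, since satisfiability of $\psi_S$ in some $(T,\beta,P)$ is equivalent to non-membership of $\neg\psi_S$ in the first-order theory of the unary expansion class, the reduction transfers $\Sigma_1^0$-hardness to the theory.
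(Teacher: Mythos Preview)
Your proof is correct and follows essentially the same approach as the paper: reduce the complement of the tiling problem via Lemmas~\ref{tilingdefinablelemma}, \ref{boxedcartesianpicturesdefinablelemma}, \ref{gridinterpretablelemma}, and \ref{uniforminterpretationlemma}, using the sandwich $\mf{G}_S\subseteq\mathcal{G}(T,S)\subseteq\mathcal{G}_S$ and the supergrid remark to close both directions of the equivalence. The only cosmetic difference is that you explicitly existentially quantify away the constants $p_0,p_x,p_y$ to land in the vocabulary $\{\beta,P\}$, whereas the paper leaves this step implicit and works with $\tau=\{\beta,P,p_0,p_x,p_y\}$ throughout; your version is in fact slightly cleaner on this point.
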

\begin{proof}
Since $T$ extends linearly in $\mathrm{2D}$, we have $n\geq 2$.
Let $\sigma = \{H,V\}$ be the vocabulary of
supergrids, and let $\tau = \{\beta,P,p_0,p_x,p_y\}$ be the vocabulary of
labelled Cartesian frames.
By Lemma \ref{boxedcartesianpicturesdefinablelemma},
there is a computable function that associates each input $S$ to the tiling problem
with a first-order $\tau$-sentence that defines the class of $S$-labelled Cartesian frames with
the domain $T$ with respect to the class of all
expansions of $(T,\beta)$ to
the vocabulary $\tau$. Let $\varphi_{\mi{Cf}}^S$ denote
such a sentence.
%
%
%
%
%
%
%
%
By Lemma \ref{tilingdefinablelemma}, there is a
computable function that associates each input $S$ to the tiling problem with a
first-order $\sigma\cup S$-sentence $\varphi_{S}$
such that a structure $\mf{A}$ of the vocabulary $\sigma$ is $S$-tilable if and only if there is an
expansion $\mf{A}^*$ of the structure
$\mf{A}$ to the vocabulary $\sigma\cup S$ such that 
$\mf{A}^*\models\varphi_{S}$.
Now recall Lemma \ref{gridinterpretablelemma}.
By Lemma \ref{gridinterpretablelemma}, since $T$ extends linearly in $\mathrm{2D}$,
there exists a computable
function $I$ such that for each input $S$ to the tiling problem,
the function $I$ is a uniform interpretation
of some class $\mathcal{G}(T,S)\supseteq \mf{G}_S$  of
$S$-labelled supergrids in the class $\mathcal{C}(T,S)$ of
all $S$-labelled Cartesian frames with the domain $T$.
Let $S$ be an input to the tiling problem. Define the $\tau$-sentence 
\[
\psi_{S}\, :=\, 
 \varphi_{\mi{Cf}}^S\, \wedge\, I(\, \varphi_{S}\, ).
\]
We will prove that for each
input $S$ to the tiling problem, the following conditions are equivalent.
\begin{enumerate}
\item
There \emph{exists} an expansion $\mf{B}$ of $(T,\beta)$ to the vocabulary $\tau$
such that $\mf{B}\models\psi_S$.
\item
The grid $\mf{G}$ is $S$-tilable.
\end{enumerate}
Thereby we establish that there exists a computable reduction from the
\emph{complement problem} of the tiling problem
to the membership problem of the first-order theory of the unary expansion class of $(T,\beta)$.
Since the tiling problem is $\Pi_1^0$-complete,
its complement problem is $\Sigma_1^0$-complete.\footnote{It is of course a
well-known triviality that the complement $\overline{A}$ of a problem $A$ is $\Sigma^0_1$-hard
if $A$ is $\Pi^0_1$-hard. 
Choose an arbitrary problem $B\in \Sigma^0_1$. 
Therefore $\overline{B}\in\Pi^0_1$.
By the hardness of $A$, there is a computable 
reduction $f$ such that $x\in\overline{B}\Leftrightarrow f(x)\in A$, 
whence $x\in B\Leftrightarrow f(x)\in \overline{A}$.}
Let $S$ be an input to the tiling problem.
Assume first that the grid $\mf{G}$ is $S$-tilable.
Therefore there exists an expansion $\mf{G}^*$ of the grid $\mf{G}$ to the
vocabulary $\{H,V\}\, \cup\, S$
such that $\mf{G}^*\models\varphi_{S}$.
Now since $\mf{G}^*\in\mf{G}_S\subseteq\mathcal{G}(T,S)$, by Lemma \ref{uniforminterpretationlemma} there
exists an $S$-labelled Cartesian frame $\mf{C}$ with the domain $T$ such that
$\mf{C}\models I(\varphi_{S})$. 
Since $\mf{C}$ is an $S$-labelled Cartesian frame, we have $\mf{C}\models \varphi_{\mi{Cf}}^S$.
Therefore $\mf{C}\models \varphi_{\mi{Cf}}^S\wedge I(\varphi_{S})$.
Hence the Cartesian frame $\mf{C}$ is an expansion of $(T,\beta)$
such that $\mf{C}\models\psi_S$.
For the converse, assume that there exists an expansion $\mf{B}$ of $(T,\beta)$ to the
vocabulary $\tau$ 
such that we have $\mf{B}\models\psi_S$. Therefore 
$\mf{B}\models \varphi_{\mi{Cf}}^S$ and $\mf{B}\models I(\varphi_{S})$.
Since $\mf{B}\models \varphi_{\mi{Cf}}^S$, the structure 
$\mf{B}$ is an $S$-labelled Cartesian frame with the domain $T$. Therefore, and since $\mf{B}\models I(\varphi_{S})$,
we conclude by Lemma \ref{uniforminterpretationlemma}
that $\mf{A}\models\varphi_{S}$ for some $S$-labelled supergrid $\mf{A}\in\mathcal{G}(T,S)$.
Thus there exists a supergrid that is $S$-tilable.
Hence the grid $\mf{G}$ is $S$-tilable.
\end{proof}
As a partial converse to Theorem \ref{infinitelemma}, we note that $T$ extending
linearly $\mathrm{1D}$ 
is not a sufficient condition for undecidability of even the
monadic $\Pi_1^1$-theory of $(T,\beta)$. 
For instance, the monadic $\Pi_1^1$-theory of $(\mathbb{R},\beta)$ is decidable; this follows
trivially from the known result that the monadic $\Pi_1^1$-theory $(\mathbb{R},\leq)$ is decidable, see \cite{Burgess}. Also the
monadic $\Pi_1^1$-theory of $(\mathbb{Q},\beta)$ is decidable
since the $\mathrm{MSO}$-theory of $(\mathbb{Q},\leq)$ is decidable \cite{Rabin:1969}.
\begin{thm}
Let $n\geq 2$ be an integer. The first-order theory of the unary expansion class of $\rp$ is $\Pi_1^1$-hard.
\end{thm}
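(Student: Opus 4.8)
The plan is to mirror the proof of Theorem~\ref{infinitelemma}, replacing the ordinary tiling problem with Harel's \emph{recurrent} tiling problem, which is $\Sigma_1^1$-complete, and exploiting the fact that over the full space $\rsq$ (with $n\geq 2$) the interpreted grid is forced to be \emph{exactly} the standard grid $\mf{G}$ rather than merely a supergrid. First I would fix $n\geq 2$ and let $\tau=\{\beta,P,p_0,p_x,p_y\}$ be the vocabulary of labelled Cartesian frames. By Lemma~\ref{boxedcartesianpicturesdefinablelemma} there is a computable map taking each input $S$ to a $\tau$-sentence $\varphi_{\mi{Cf}}^S$ defining the class of $S$-labelled Cartesian frames with domain $\rsq$. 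By Lemma~\ref{recurrencetilingdefinablelemma} there is a computable map taking each recurrent-tiling input $(t,T_0)$ (a pair consisting of a tile type and a finite set $T_0$ of tile symbols with $P_t\in T_0$) to a $\{H,V,R\}\cup T_0$-sentence $\varphi_{(t,T_0)}$ that is satisfiable in an expansion of the recurrence grid $\mf{R}$ iff $\mf{G}$ is $t$-recurrently $T_0$-tilable. Finally, by Lemma~\ref{recurrencegridinterpr}, there is a computable uniform interpretation $K$ of the class of $S$-labelled recurrence grids in the class of $S$-labelled Cartesian frames with domain $\rsq$.

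The reduction then sets, for each recurrent-tiling input $(t,S)$,
\[
\psi_{(t,S)}\ :=\ \varphi_{\mi{Cf}}^S\ \wedge\ K\bigl(\varphi_{(t,S)}\bigr),
\]
and I would prove that the following are equivalent: (1) there exists an expansion $\mf{B}$ of $\rp$ to the vocabulary $\tau$ with $\mf{B}\models\psi_{(t,S)}$; (2) the grid $\mf{G}$ is $t$-recurrently $S$-tilable. This gives a computable reduction from the recurrent tiling problem to the satisfiability problem of $\{\beta,P,p_0,p_x,p_y\}$-sentences over expansions of $\rp$; absorbing the three constants by existential quantification yields a reduction to the satisfiability problem over the unary expansion class, i.e.\ to the complement of its first-order theory. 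Since the recurrent tiling problem is $\Sigma_1^1$-complete and $\Sigma_1^1$ is closed under complement only up to passing to $\Pi_1^1$, one concludes that the first-order theory of the unary expansion class of $\rp$ is $\Pi_1^1$-hard (the same footnote-style observation as in Theorem~\ref{infinitelemma}: if $A$ is $\Sigma^1_1$-hard then $\overline A$ is $\Pi^1_1$-hard).

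For direction (2)$\Rightarrow$(1): a $t$-recurrent $S$-tiling of $\mf{G}$ yields an expansion of the recurrence grid $\mf{R}$ satisfying $\varphi_{(t,S)}$; since this recurrence grid lies in the class of $S$-labelled recurrence grids interpreted by $K$, Lemma~\ref{uniforminterpretationlemma} produces an $S$-labelled Cartesian frame $\mf{C}$ with domain $\rsq$ satisfying $K(\varphi_{(t,S)})$, and such $\mf{C}$ automatically satisfies $\varphi_{\mi{Cf}}^S$. For direction (1)$\Rightarrow$(2): if $\mf{B}\models\psi_{(t,S)}$ then $\mf{B}$ is an $S$-labelled Cartesian frame with domain $\rsq$ satisfying $K(\varphi_{(t,S)})$, so by Lemma~\ref{uniforminterpretationlemma} some $S$-labelled recurrence grid in the interpreted class satisfies $\varphi_{(t,S)}$. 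Here is where the second clause of Lemma~\ref{nonstandardmodelsexcluded} is essential: over $T=\rsq$ the $\omega$-like sequences $P_x$ and $P_y$ are \emph{isomorphic} to $(\mathbb{N},\mi{succ})$, not merely contain it, so the interpreted structure is genuinely the standard grid $\mf{G}$ with the genuine recurrence relation on its leftmost column, and hence $\mf{G}$ itself is $t$-recurrently $S$-tilable.

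The main obstacle I anticipate is the recurrence relation $R$. Unlike $H$ and $V$, which are \emph{successor} relations definable locally along the sequences, $R$ relates $(0,i)$ to $(0,j)$ for all $i<j$ along the leftmost column, so the interpretation $K$ must define $R^{\mf{G}}$ by a $\{\beta\}$-formula expressing ``$u$ and $v$ both lie in $P_x\setminus\{p_x\}$, both lie on the column through $p_0$, and $\beta(p_0,u,v)$ with $u\neq v$'' — essentially the strict-betweenness order restricted to $P_x$. Verifying that this formula computes exactly the intended relation, and that the rigidity from Lemma~\ref{nonstandardmodelsexcluded} (which pins down $P_x$ to order type $\omega$) prevents any spurious $R$-edges from extra limit points, is the delicate part; everything else is a routine adaptation of Lemmas~\ref{gridinterpretablelemma}--\ref{recurrencegridinterpr} and the bookkeeping in Theorem~\ref{infinitelemma}.
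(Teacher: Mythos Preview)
Your proposal is correct and matches the paper's own proof essentially verbatim: the paper, too, simply says one reruns Theorem~\ref{infinitelemma} using Lemma~\ref{recurrencegridinterpr} and the recurrent tiling problem, and singles out exactly the point you emphasize---that the isomorphism clause of Lemma~\ref{nonstandardmodelsexcluded} (rather than the mere embedding) is what makes the recurrence condition go through over $\rsq$. One small slip: in your sketch of the formula for $R$, the leftmost column of the grid corresponds to the points of $P_y\setminus\{p_y\}$ (the $p_0$--$p_y$ axis), not $P_x\setminus\{p_x\}$, since $\varphi_H$-successors move toward $p_x$; this is a trivial relabeling and does not affect the argument.
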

\begin{proof}
The proof is essentially the same as the proof of Theorem \ref{infinitelemma}. The main difference is that we use Lemma \ref{recurrencegridinterpr}
and interpret a class of labelled recurrence grids instead of a class of labelled supergrids,
and hence obtain a reduction from the
recurrent tiling problem
instead of the ordinary tiling problem. Thereby we establish $\Pi^1_1$-hardness instead of $\Sigma_1^0$-hardness.
Due to the recurrence condition of the recurrent tiling problem, the result of Lemma \ref{nonstandardmodelsexcluded} that
there is an isomorphism from $(\mathbb{N},\mi{succ})$ to $(P,E)$---rather than an embedding---is essential.
\end{proof}
\begin{cor}
Let $n\geq 2$ be an integer. The monadic $\Pi_1^1$-theory of $(\rsq,\beta)$ is not arithmetical.
\end{cor}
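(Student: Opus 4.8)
The plan is to derive this corollary as an immediate consequence of the preceding theorem. The theorem states that for $n\geq 2$, the first-order theory of the unary expansion class of $\rp$ is $\Pi_1^1$-hard. The observation needed is that this first-order theory coincides, up to a trivial syntactic translation, with the monadic $\Pi_1^1$-theory of the single structure $(\rsq,\beta)$. Indeed, a first-order $\{\beta,P\}$-sentence $\varphi$ holds in \emph{every} expansion $(\rsq,\beta,P)$ with $P\subseteq\rsq$ precisely when the monadic second-order sentence $\forall P\,\varphi$ (with $P$ now read as a monadic second-order variable) holds in $(\rsq,\beta)$; dually, satisfiability of $\varphi$ in \emph{some} expansion corresponds to $(\rsq,\beta)\models\exists P\,\varphi$. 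So membership in the first-order theory of the expansion class reduces, via the computable map $\varphi\mapsto\forall P\,\varphi$, to membership in the monadic $\Pi_1^1$-theory of $(\rsq,\beta)$ (and in fact this map is a bijection onto the $\Pi_1^1$-sentences with a single monadic quantifier, whose truth in $(\rsq,\beta)$ already captures the whole reduction used in the theorem).

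The key steps, in order, are: first, record the correspondence in the previous paragraph, noting that it is given by a computable translation; second, invoke the theorem to get that the first-order theory of the unary expansion class of $\rp$ is $\Pi_1^1$-hard, hence the monadic $\Pi_1^1$-theory of $(\rsq,\beta)$ is $\Pi_1^1$-hard; third, recall the standard fact that $\Pi_1^1$ is not contained in the arithmetical hierarchy (every arithmetical set is $\Delta_1^1$, and by the hierarchy theorem $\Pi_1^1\not\subseteq\Sigma_1^1$, so in particular $\Pi_1^1$-hard sets cannot be arithmetical, since the arithmetical sets form a class closed under the relevant reductions and properly contained in $\Delta_1^1$). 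Concluding, a $\Pi_1^1$-hard set is not arithmetical, so the monadic $\Pi_1^1$-theory of $(\rsq,\beta)$ is not arithmetical.

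There is essentially no obstacle here: the content is entirely in the theorem that precedes it, and the corollary is just a restatement in the language of monadic second-order theories of a fixed structure together with the elementary fact that $\Pi_1^1$-hardness precludes being arithmetical. The only point requiring a line of care is making explicit that hardness for $\Pi_1^1$ under many-one (or even Turing) reductions is incompatible with membership in the arithmetical hierarchy; this is because the arithmetical sets are all $\Delta_1^1$ and closed downward under such reductions, whereas a $\Pi_1^1$-hard set computes (via the reduction) a properly $\Pi_1^1$ set such as a $\Pi_1^1$-complete set, which is not $\Delta_1^1$ by the classical hierarchy theorem. Hence the proof is a two-sentence deduction, and I would present it as such.
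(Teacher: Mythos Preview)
Your proposal is correct and matches the paper's approach: the paper gives no explicit proof for this corollary, treating it as immediate from the preceding theorem together with the identification (already noted in the introduction) of the monadic $\Pi_1^1$-theory of $(\rsq,\beta)$ with the first-order theory of the unary expansion class. Your explicit spelling-out of the translation $\varphi\mapsto\forall P\,\varphi$ and of why $\Pi_1^1$-hardness precludes arithmeticity is exactly the intended two-line deduction.
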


\section{Geometric structures $(T,\beta)$ with an undecidable expansion class with a finite
unary predicate}\label{section5}
In this section we establish undecidability of the
first-order theory of the expansion class
\[
\{(T,\beta, P)\mid P\subseteq T \text{ is finite}\}
\]
of any geometric structure $(T,\beta)$ such that $T$ extends linearly in $\mathrm{2D}$.
More precisely, we show that any such theory is $\Pi^0_1$-hard.
We prove this by a reduction from the periodic
tiling problem to the problem of deciding satisfiability of $\{\beta,P,p_0,p_x.p_y\}$-sentences
in the class of expansions of $(T,\beta)$ by a finite unary predicate $P$ and constants $p_0,p_x,p_y$.
The argument is based on interpreting tori in $(T,\beta)$.
Most notions used in this section are inherited either directly or with minor adjustments from Section \ref{main}.
Let $Q$ be a subset of $T\subseteq\rsq$. We say that $Q$ is a \emph{finite sequence} in $T$ if $Q$ is a finite
nonempty set and the points in $Q$ are collinear.

\begin{defi}
Let $T\subseteq\rsq$ and let $\beta$ be the corresponding betweenness relation.
Let $P\subseteq T$ be a finite set, and let $p_0,p_x,p_y\in P$. We call the structure
\[
\mf{C}=(T,\beta,P,p_0,p_x,p_y)
\]
\emph{a finite Cartesian frame} with domain $T$ if the following conditions hold.
\begin{enumerate}
\item The points $p_0$, $p_x$ and $p_y$ are not collinear.
\item For each point $p\in P$ and $q\in P$ such that $\beta^*(p_0,p,p_x)$
and $\beta^*(p_0,q,p_y)$ hold in $\mf{C}$, the unique lines $L_p$ and $L_q$ in $T$ such that $p,p_y\in L_p$ and $q,p_x\in L_q$, intersect. In other words, there exists a point $u\in T$ that lies on both lines $L_p$ and $L_q$.
\end{enumerate}
If $m$ and $k$ are positive integers such that 
\begin{align*}
&\lvert \{u\in P\mid \beta(p_0,u,p_x) \text{ holds in }\mf{C}\}\rvert=m+2\text{ and}\\
&\lvert \{u\in P\mid \beta(p_0,u,p_y) \text{ holds in }\mf{C}\}\rvert=k+2,
\end{align*}
we call $(T,\beta,P,p_0,p_x,p_y)$ an \emph{$m\times k$ Cartesian frame} with domain $T$.
\end{defi}
\begin{defi}
Let $\mf{C}=(T,\beta,P,p_0,p_x,p_y)$ be a finite Cartesian frame. 
Let $p,q\in P$, $p\neq p_x$, $q\neq p_y$, be points such that $\beta(p_0,p,p_x)$ and $\beta(p_0,q,p_y)$ hold in $\mf{C}$.
Let $L_p$ and $L_q$ be the lines in $T$ such that $p,p_y\in L_p$ and $q,p_x\in L_q$.
The point $u\in T$ that lies on both lines $L_p$ and $L_q$ is called the
\emph{intersection point of $\mf{C}$ corresponding to the pair $(p,q)$}. A point $u\in T$ is called an \emph{intersection point} of the finite Cartesian
frame $\mf{C}$, if it is an intersection point of $\mf{C}$
corresponding to some pair $(p,q)$.
\end{defi}
\begin{defi}
Let $\mf{C}=(T,\beta,P,p_0,p_x,p_y)$ be a finite Cartesian frame.
Let $p,p',q,q'\in P$ be points such that the following conditions hold.
\begin{enumerate}
\item $\beta(p_0,p,p')$ and $\beta^*(p,p',p_x)$ hold in $\mf{C}$.
\item $\beta(p_0,q,q')$ and $\beta^*(q,q',p_y)$ hold in $\mf{C}$.
\item There does not exist a point $u\in P$ such that $\beta^*(p,u,p')$ or $\beta^*(q,u,q')$ holds in $\mf{C}$.
\end{enumerate}
Let $u$ be the intersection point of $\mf{C}$ corresponding to $(p,q)$ and $v$ the
intersection point of $\mf{C}$ correponding to $(p',q')$.
We say that $v$ is the \emph{diagonal successor} of $u$ in $\mf{C}$.
\end{defi}
\begin{defi}
Let $\mf{C}=(T,\beta,P,p_0,p_x,p_y)$ be a finite Cartesian frame and let $S$ be a finite nonempty set of tile symbols.
We call $\mf{C}$ an \emph{$S$-labelled finite Cartesian frame} if the number of points in $P$ strictly between
any intersection point $u$ of $\mf{C}$ and its diagonal successor $v$ is in the set $\{N(P_t)\mid P_t\in S\}$.
We let
$\mathcal{C}^{fin}(T,S)$ denote \emph{the class of $S$-labelled finite Cartesian frames with domain $T$}.
\end{defi}

\begin{lem}\label{finitedefinable}
Let $T\subseteq \rsq$, $n\geq 2$.  Let $\mathcal{C}$ be the class of all expansions $(T,\beta,P,p_0,p_x,p_y)$ of $(T,\beta)$ by a finite
unary relation $P$ and constants $p_0$, $p_x$ and $p_y$. There is a computable function associating each finite nonempty set of tile symbols $S$ with a first-order $\{\beta,P,p_0,p_x,p_y\}$-sentence $\varphi_{fCf}^S$ 
that defines the class $\mathcal{C}^\mi{fin}(T,S)$ with respect to the class $\mathcal{C}$.
\end{lem}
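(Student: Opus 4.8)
The plan is to mirror the proof of Lemma \ref{boxedcartesianpicturesdefinablelemma}, adapting each component to the finite setting. The key observation is that essentially every clause in the definition of an $S$-labelled finite Cartesian frame is a first-order property once we have access to $\beta$, $P$, and the constants $p_0,p_x,p_y$, provided we no longer need the machinery of $\omega$-like sequences (which was only required to pin down the order type $\omega$ for an \emph{infinite} predicate). First I would write a formula $\mi{fseq}(P)$ expressing that $P$ is collinear and nonempty; together with the finiteness of $P$ (which is guaranteed by the ambient class $\mathcal{C}$, not asserted inside the sentence) this says $P$ is a finite sequence in the sense of the section. Then I would express non-collinearity of $p_0,p_x,p_y$ directly as $\neg\mi{collinear}(p_0,p_x,p_y)$.

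Next I would handle the intersection condition (clause (2) of the finite Cartesian frame definition): for all $p,q\in P$ with $\beta^*(p_0,p,p_x)$ and $\beta^*(p_0,q,p_y)$, the lines through $p,p_y$ and through $q,p_x$ must meet in $T$. Following the pattern in Lemma \ref{gridinterpretablelemma}, ``the line through $a,b$ passes through $c$'' is definable via $\mi{collinear}$, so ``there is a common point $u$'' becomes $\exists u(\mi{collinear}(p,p_y,u)\wedge\mi{collinear}(q,p_x,u))$ — but one must be careful about lines versus their spans, so I would use the notion of a line \emph{in $T$} and quantify $u$ over $T$, exactly as the earlier formulae $\varphi_H,\varphi_V$ do. I would then define the intersection point of $\mf{C}$ corresponding to $(p,q)$ by a formula $\mi{ipoint}(p,q,u)$ and the diagonal-successor relation by a formula $\mi{dsucc}(u,v)$ transcribing the three clauses of Definition~\ref{diagonalsuccessordefinition} (finite version): adjacency of $p,p'$ along the $p_x$-axis and of $q,q'$ along the $p_y$-axis is expressed by ``no point of $P$ strictly between'', and then $u,v$ are the corresponding intersection points.

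Finally, for the labelling condition I would, for each tile symbol $P_t\in S$, write $\exists^{=N(P_t)}x$ counting the points of $P$ strictly between $u$ and its diagonal successor, exactly as in the formula $\varphi_{P_t}$ of Lemma \ref{gridinterpretablelemma}; the required sentence then says that for every intersection point $u$ with a diagonal successor $v$, the number of points of $P$ strictly between them lies in $\{N(P_t)\mid P_t\in S\}$, i.e.\ $\bigvee_{P_t\in S}\exists^{=N(P_t)}x(\mi{ipoint}\text{-}\mathrm{config}\wedge\beta^*(u,x,v)\wedge Px)$. Conjoining all of these yields $\varphi_{fCf}^S$, and computability of $S\mapsto\varphi_{fCf}^S$ is immediate since $S$ is finite and each $N(P_t)$ is effectively computable. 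I expect the only mildly delicate point — and hence the ``main obstacle'' — to be stating the intersection/diagonal-successor conditions correctly: one must quantify the witnessing point over $T$ (not over the ambient $\rsq$) and make sure the formula talks about lines \emph{in $T$} rather than their $\rsq$-spans, so that the defined relation genuinely matches Definition~\ref{intersectionpointdefinition} and Definition~\ref{diagonalsuccessordefinition} in the finite Cartesian frame; but this is handled by reusing verbatim the pattern already established in the proof of Lemma~\ref{gridinterpretablelemma}, so the proof is ultimately routine.
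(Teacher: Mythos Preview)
Your approach is essentially what the paper does---the paper's proof is literally the single word ``Straightforward.''---and your detailed sketch is correct in spirit. There is one slip, however: you propose to include a conjunct $\mi{fseq}(P)$ asserting that $P$ is collinear (``a finite sequence in the sense of the section''), but in a finite Cartesian frame $P$ is \emph{not} collinear. Indeed $p_0,p_x,p_y\in P$ and clause~(1) of the definition requires these three points to be non-collinear; moreover $P$ also contains the label points strictly between intersection points and their diagonal successors, which lie off both axes. Including $\mi{fseq}(P)$ would therefore render $\varphi_{fCf}^S$ unsatisfiable. Simply drop that conjunct: the finiteness of $P$ is, as you correctly note, already guaranteed by the ambient class $\mathcal{C}$, and the definition of a finite Cartesian frame imposes no collinearity condition on $P$ as a whole. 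With that correction your sketch goes through and matches the (trivial) argument the paper has in mind.
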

\begin{proof}
Straightforward.
\end{proof}
Let $S\not=\emptyset$ be a finite set of tile symbols.
Let $\mathcal{T}_S$ denote the class of structures $\mf{A}$ that satisfy the following conditions.
\begin{enumerate}
\item The structure $\mf{A}$ is an expansion of some torus to the vocabulary $\{V,H\}\cup S$.
\item Each point in the domain of $\mf{A}$ is in the extension of exactly one predicate symbol $P_t\in S$.
\end{enumerate} 
Structures in the class $\mathcal{T}_S$ are called \emph{$S$-labelled tori}.
\begin{lem}\label{torusinterpretablelemma}
Let $T\subseteq\mathbb{R}^n$, $n\geq 2$. Assume that $T$ extends linearly in $\mathrm{2D}$.
There is a computable function $J$ such that for all finite sets $S\not=\emptyset$ of tile symbols, $J$ is a
uniform interpretation of the class of $S$-labelled tori in
 $\mathcal{C}^\mi{Fin}(T,S)$.
\end{lem}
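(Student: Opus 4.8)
The plan is to mimic the construction in the proof of Lemma \ref{gridinterpretablelemma} (the interpretation of $S$-labelled supergrids in $S$-labelled Cartesian frames), but starting from a \emph{finite} Cartesian frame and producing a \emph{torus} rather than a supergrid. First I would fix a nonempty finite set $S$ of tile symbols and let $\mf{C} = (T,\beta,P,p_0,p_x,p_y)$ be an $S$-labelled finite Cartesian frame with domain $T$. As in Lemma \ref{gridinterpretablelemma}, the domain of the interpreted structure is the set of intersection points of $\mf{C}$, defined by essentially the same formula $\varphi_{\mi{Dom}}(u)$, and the labelling predicates $\varphi_{P_t}(u)$ are obtained exactly as before, by counting (with the counting quantifier $\exists^{=N(t)}x$) the points of $P$ strictly between $u$ and its diagonal successor. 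The only genuinely new ingredient is the pair of successor formulae: for a torus $\mf{T}_{m,k}$ the horizontal successor must wrap the east border back to the west border and the vertical successor must wrap the north border back to the south. So I would replace $\varphi_H$ and $\varphi_V$ of Lemma \ref{gridinterpretablelemma} with formulae $\varphi_H^{\mi{tor}}$, $\varphi_V^{\mi{tor}}$ that take the disjunction of the old ``step to the adjacent intersection point'' clause with a new ``wrap-around'' clause: $v$ is the $H$-successor of $u$ if either $v$ is the ordinary right-neighbour of $u$ on its row, or $u$ is the rightmost intersection point of its row (the one whose defining pair is $(p,p_y)$ for the last $p$ with $\beta^*(p_0,p,p_x)$) and $v$ is the leftmost intersection point of the same row; all of these conditions are first-order expressible in $\{\beta,P,p_0,p_x,p_y\}$ by the same devices used in Section \ref{main}, since a ``row'' is cut out by a line through a fixed point on the $p_0 p_y$-segment.

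The second main step is to verify that this interpretation works, i.e.\ that for every $S$-labelled finite Cartesian frame $\mf{C}$ the interpreted structure $\mf{D}_{\mf{C}}$ is (isomorphic to) an $S$-labelled torus, and conversely that every $S$-labelled torus arises this way from some $\mf{C}\in\mathcal{C}^{\mi{Fin}}(T,S)$ — the latter using that $T$ extends linearly in $\mathrm{2D}$, exactly as the inclusion $\mf{G}_S\subseteq\mathcal{G}(T,S)$ was obtained in Lemma \ref{gridinterpretablelemma}. For the forward direction, the key finitary analogue of Lemma \ref{nonstandardmodelsexcluded} is the observation that a \emph{finite sequence} in $T$ whose consecutive points are linearly $(T\setminus P)$-connected is order-isomorphic to an initial segment of $\mathbb{N}$; since in a finite Cartesian frame both $\{u\in P\mid \beta(p_0,u,p_x)\}$ and $\{u\in P\mid \beta(p_0,u,p_y)\}$ are finite collinear sets, the intersection points of $\mf{C}$ form an $m\times k$ rectangular array for the appropriate $m,k$, and the wrap-around clauses in $\varphi_H^{\mi{tor}}$, $\varphi_V^{\mi{tor}}$ then make this array into precisely $\mf{T}_{m,k}$. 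The labelling condition on an $S$-labelled finite Cartesian frame guarantees that each intersection point gets exactly one tile symbol, matching condition (2) in the definition of $S$-labelled torus.

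The hard part will be getting the wrap-around successor formulae right while keeping them purely first-order over $\{\beta,P,p_0,p_x,p_y\}$ — in particular, identifying ``the rightmost intersection point of a given row'' and ``the leftmost intersection point of the same row'' without any second-order quantification. This is where one must exploit that the rows and columns of the frame are each finite sequences determined by fixed points on the segments $\overline{p_0 p_y}$ and $\overline{p_0 p_x}$: being the first or last point of such a finite sequence is a local betweenness condition (no point of $P$ strictly beyond it on the relevant line, in the relevant direction), hence first-order, just as in Definitions \ref{successordefinition}–\ref{omegalikedfn} and the endpoint characterisation in Section \ref{main}. Once these formulae are in place, everything else is a routine adaptation of Lemma \ref{gridinterpretablelemma}, and I would simply say so. Concretely, the write-up would: (i) recall the formulae $\varphi_{\mi{Dom}}$, $\mi{diagonal}$, $\varphi_{P_t}$ from Lemma \ref{gridinterpretablelemma} verbatim; (ii) display the modified $\varphi_H^{\mi{tor}}$ and $\varphi_V^{\mi{tor}}$ with their wrap-around disjuncts; (iii) define $J$ from these four formulae; (iv) invoke the finitary version of Lemma \ref{nonstandardmodelsexcluded} to conclude that $\mf{D}_{\mf{C}}$ is an $S$-labelled torus and, using that $T$ extends linearly in $\mathrm{2D}$, that every $S$-labelled torus is obtained from some member of $\mathcal{C}^{\mi{Fin}}(T,S)$; and (v) conclude that $J$ is a uniform interpretation of the class of $S$-labelled tori in $\mathcal{C}^{\mi{Fin}}(T,S)$, with the computability of $S\mapsto J$ clear from the construction.
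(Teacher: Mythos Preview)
Your overall strategy matches the paper's: reuse the interpretation from Lemma~\ref{gridinterpretablelemma}, add wrap-around disjuncts to the successor formulae, keep $\varphi_{P_t}$ unchanged, and use the linearly-in-$\mathrm{2D}$ hypothesis for surjectivity of the interpretation. There is, however, one genuine gap in your treatment of the domain. You propose to take $\varphi_{\mi{Dom}}$ unchanged and then assert that ``the labelling condition on an $S$-labelled finite Cartesian frame guarantees that each intersection point gets exactly one tile symbol.'' That claim is false. In an $m\times k$ finite Cartesian frame the intersection points picked out by $\varphi_{\mi{Dom}}$ form an $(m{+}1)\times(k{+}1)$ array, but the diagonal successor (the definition just before Lemma~\ref{finitedefinable}) is only defined on the $m\times k$ lower-left subarray: an intersection point in the last row or last column has no diagonal successor, so $\varphi_{P_t}$ is false at it for every $t$. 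With the untrimmed domain your interpreted structure therefore fails condition~(2) of an $S$-labelled torus.

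The paper's fix is precisely to trim the domain before wrapping. It sets
\[
\varphi^{\mi{fin}}_{\mi{Dom}}(u)\ :=\ \varphi_{\mi{Dom}}(u)\ \wedge\ \exists x\exists y\bigl(\varphi_{\mi{Dom}}(x)\wedge \varphi_{\mi{Dom}}(y)\wedge \varphi_H(u,x)\wedge \varphi_V(u,y)\bigr),
\]
keeping only those intersection points that have both an $H$-successor and a $V$-successor in the sense of the old $\varphi_H,\varphi_V$. This discards exactly the last row and last column, leaving the $m\times k$ array on which the labelling is total. The wrap-around disjunct in the paper's $\varphi^{\mi{fin}}_H$ then tests ``no $\varphi^{\mi{fin}}_{\mi{Dom}}$-point strictly between $u$ and $p_x$'' rather than ``no $\varphi_{\mi{Dom}}$-point,'' and similarly for $\varphi^{\mi{fin}}_V$. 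With this single correction to $\varphi_{\mi{Dom}}$ (and the corresponding adjustment to which ``last point'' the wrap-around clause detects), your plan goes through exactly as you describe.
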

\begin{proof}
Let $S$ be a finite nonempty set of tile symbols. Let $\mf{C}=(T,\beta,P,p_0,p_x,p_y)$
be an $S$-labelled $m\times k$ Cartesian frame. We shall show how to interpret an $S$-labelled $m\times k$ torus
 $\mf{T}_\mf{C}$ in $\mf{C}$. The idea behind the
interpretation is quite similar to the
idea behind the interpretation in the proof of Lemma \ref{gridinterpretablelemma}.
Recall the formulae $\varphi_{\mi{Dom}}$, $\varphi_{H}$, $\varphi_{V}$ and $\varphi_{P_t}$
defined in the proof of Lemma \ref{gridinterpretablelemma}. We shall now define variants of
these formulae suitable for interpreting $S$-labelled tori in $S$-labelled finite Cartesian frames.
In the definitions of the new formulae, we shall make use of the old formulae
$\varphi_{\mi{Dom}}$, $\varphi_{H}$, $\varphi_{V}$ and $\varphi_{P_t}$.
We define
\begin{align*}
\varphi^{\mi{fin}}_{\mi{Dom}}(u)&\ :=\ \varphi_{\mi{Dom}}(u)\wedge \exists x\exists
y\bigl(\varphi_{\mi{Dom}}(x)\wedge \varphi_{\mi{Dom}}(y)\wedge \varphi_H(u,x)\wedge \varphi_V(u,y)  \bigr),\\
\varphi^{\mi{fin}}_H(u,v)&\ :=\ \varphi_H(u,v)
\vee\Big(\beta(p_0,v,p_y)\wedge\beta(v,u,p_x)\wedge
\forall x\big(\beta^*(u,x,p_x)\rightarrow\neg\varphi_{\mi{Dom}}^{\mi{fin}}(x)\big)\Big),\\
\varphi^{\mi{fin}}_V(u,v)&\ :=\ \varphi_V(u,v)
\vee\Big(\beta(p_0,v,p_x)\wedge\beta(v,u,p_y)\wedge
\forall x\big(\beta^*(u,x,p_y)\rightarrow\neg\varphi_{\mi{Dom}}^{\mi{fin}}(x)\big)\Big),\\
\varphi^\mi{fin}_{P_t}(u)&\ :=\ \varphi_{P_t}(u)\text{ for each }P_t\in S.
\end{align*}
Let $F_\mf{C}:=\{u\in T \mid \mf{C}\models \varphi^{\mi{fin}}_{Dom}(u)\}$. Define the structure
$$\mf{F}_\mf{C}=\bigl(F_\mf{C},H^{\mf{F}_\mf{C}},V^{\mf{F}_\mf{C}}, (P_t^{\mf{F}_\mf{C}})_{P_t\in S}\bigr),$$
where
\begin{align*}
H^{\mf{F}_\mf{C}}&:=\{(u,v)\in F_\mf{C}\times F_\mf{C}\mid \mf{C}\models \varphi^{\mi{fin}}_H(u,v)\}, \\
V^{\mf{F}_\mf{C}}&:=\{(u,v)\in F_\mf{C}\times F_\mf{C}\mid \mf{C}\models \varphi^{\mi{fin}}_V(u,v)\}, \\
P_t^{\mf{F}_\mf{C}}&:=\{u\in F_\mf{C}\mid \mf{C}\models \varphi^\mi{fin}_{P_t}(u)\},
\end{align*}
for all $P_t\in S$.
It is straightforward to check that there exists an $S$-labelled $m\times k$ torus
\[ \mf{T}_{\mf{C}}=\bigl(D,H^{\mf{T}_\mf{C}},V^{\mf{T}_\mf{C}}, (P_t^{\mf{T}_\mf{C}})_{P_t\in S}\bigr)
\]
and a bijection $f$ from $D$ to $F_\mf{C}$ such that the following conditions hold for all $u,v\in D$.
\begin{enumerate}
\item $(u,v)\in H^{\mf{T}_\mf{C}}\ \Leftrightarrow\ \varphi^{\mi{fin}}_H(f(u),f(v))$,
\vspace{1pt}
\item $(u,v)\in V^{\mf{T}_\mf{C}}\ \Leftrightarrow\ \varphi^{\mi{fin}}_V(f(u),f(v))$,
\vspace{1pt}
\item $u\in P_t^{\mf{T}_\mf{C}}\ \Leftrightarrow\ \varphi^\mi{fin}_{P_t}(f(u))$ for all $P_t\in S$.
\end{enumerate}
Notice that since $T$ extends linearly in $\mathrm{2D}$, there exist $S$-labelled finite Cartesian frames of all sizes $m\times k$
with all possible $S$-labelling configurations in the class
$\mathcal{C}^{\mi{fin}}(T,S)$.
We have hence established that for all finite sets $S\not=\emptyset$ of tile symbols,
the class of $S$-labelled tori is uniformly first-order interpretable in the class of $S$-labelled finite Cartesian frames
 with the domain $T$. Furthermore, the
formulae $\varphi_{\mi{Dom}}^\mi{fin}$, $\varphi_H^\mi{fin}$, $\varphi_V^\mi{fin}$ and $\varphi_{P_t}^\mi{fin}$
define the desired uniform interpretation $J$.
\end{proof}
\begin{thm}
Let $T\subseteq \rsq$ and let $\beta$ be the corresponding betweenness relation.
Assume that $T$ extends linearly in $\mathrm{2D}$. The
first-order theory of the class $\{(T,\beta,P)\mid P\subseteq T\text{ is finite}\}$ is $\Pi^0_1$-hard.
\end{thm}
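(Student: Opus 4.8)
The plan is to mirror the proof of Theorem~\ref{infinitelemma}, replacing supergrids with tori and the ordinary tiling problem with the periodic tiling problem of Gurevich and Koryakov (Theorem~\ref{periodictilingcomplete}), which is $\Sigma^0_1$-complete; hence its complement is $\Pi^0_1$-complete, and a reduction from the complement of the periodic tiling problem into the satisfiability problem for $\{\beta,P,p_0,p_x,p_y\}$-sentences over the class of expansions of $(T,\beta)$ by a \emph{finite} $P$ yields $\Pi^0_1$-hardness. First I would fix the vocabularies $\sigma=\{H,V\}$ (tori) and $\tau=\{\beta,P,p_0,p_x,p_y\}$ (finite labelled Cartesian frames). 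By Lemma~\ref{finitedefinable} there is a computable map sending each finite nonempty set $S$ of tile symbols to a $\tau$-sentence $\varphi_{fCf}^S$ defining the class $\mathcal{C}^{\mi{fin}}(T,S)$ of $S$-labelled finite Cartesian frames with domain $T$ with respect to all expansions of $(T,\beta)$ to $\tau$. By Lemma~\ref{tilingdefinablelemma} (applied in its "periodic tiling" form) there is a computable map sending $S$ to a $\sigma\cup S$-sentence $\varphi_S$ such that a $\sigma$-structure $\mf{A}$ is $S$-tilable iff some expansion of $\mf{A}$ to $\sigma\cup S$ satisfies $\varphi_S$. And by Lemma~\ref{torusinterpretablelemma} there is a computable function $J$ that is, for each $S$, a uniform interpretation of the class of $S$-labelled tori in $\mathcal{C}^{\mi{fin}}(T,S)$.

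Then, exactly as in Theorem~\ref{infinitelemma}, for each input $S$ I would set
\[
\psi_S\ :=\ \varphi_{fCf}^S\ \wedge\ J(\varphi_S),
\]
a $\tau$-sentence, and prove the equivalence: there exists an expansion $\mf{B}$ of $(T,\beta)$ to $\tau$ with $P^{\mf{B}}$ finite such that $\mf{B}\models\psi_S$ \emph{iff} there is a torus $\mf{D}$ that is $S$-tilable, i.e. $S\in\mathcal{S}$. The forward direction uses that $\mf{B}\models\varphi_{fCf}^S$ forces $\mf{B}$ to be an $S$-labelled finite Cartesian frame, so by Lemma~\ref{uniforminterpretationlemma} applied to $J$, from $\mf{B}\models J(\varphi_S)$ we extract an $S$-labelled torus satisfying $\varphi_S$, hence an $S$-tilable torus. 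The backward direction uses that if some torus is $S$-tilable then (since $T$ extends linearly in $\mathrm{2D}$, and by the remark in the proof of Lemma~\ref{torusinterpretablelemma} that $\mathcal{C}^{\mi{fin}}(T,S)$ contains $S$-labelled finite Cartesian frames of every size $m\times k$ with every labelling) some $S$-labelled finite Cartesian frame $\mf{C}$ with domain $T$ has $\mf{T}_{\mf{C}}$ an $S$-tilable torus; by Lemma~\ref{uniforminterpretationlemma}, $\mf{C}\models J(\varphi_S)$, and $\mf{C}\models\varphi_{fCf}^S$ since it is an $S$-labelled finite Cartesian frame, so $\mf{C}\models\psi_S$ and $\mf{C}$ is an expansion of $(T,\beta)$ with finite unary predicate. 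Note $P^{\mf{C}}$ is finite by construction since a finite Cartesian frame has a finite $P$.

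Combining: $S\mapsto\psi_S$ is a computable reduction from the complement of the periodic tiling problem to the (complement of the) satisfiability problem for $\tau$-sentences over $\{(T,\beta,P)\mid P\subseteq T\text{ finite}\}$, and therefore to the membership problem of the first-order theory of $\{(T,\beta,P)\mid P\subseteq T\text{ finite}\}$ (a sentence $\theta$ fails to be satisfiable in the class iff $\neg\exists\ldots$, i.e. iff $\forall$-closure of $\neg\theta$ is in the theory; this is the same packaging as in Theorem~\ref{infinitelemma}, and one should be slightly careful that the constants $p_0,p_x,p_y$ are eliminated by existential quantification, exactly as the interpretation machinery in Section~\ref{section2} and the footnote in the proof of Theorem~\ref{infinitelemma} handle). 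Since the periodic tiling problem is $\Sigma^0_1$-complete, its complement is $\Pi^0_1$-hard, so the theory is $\Pi^0_1$-hard. I expect the main obstacle to be not any single deep step but rather the bookkeeping in Lemma~\ref{torusinterpretablelemma}'s "all sizes and all labellings occur" claim: one must check that the $\mathrm{2D}$-linear-extension hypothesis genuinely produces, inside the $\epsilon$-ball guaranteed by the definition, finite Cartesian frames realizing an arbitrary $m\times k$ grid of intersection points with arbitrary prescribed gap-counts $N(P_t)$ between diagonal successors — i.e. that the torus-wraparound formulae $\varphi_H^{\mi{fin}},\varphi_V^{\mi{fin}}$ correctly identify the last column/row with the first. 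Everything else is a routine transcription of the supergrid argument.
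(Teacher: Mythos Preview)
Your proposal is correct and follows essentially the same approach as the paper: the paper defines $\gamma_S:=\varphi_{\mi{fCf}}^S\wedge J(\varphi_S)$ and proves exactly the equivalence you describe, invoking Lemmas~\ref{finitedefinable}, \ref{torusinterpretablelemma}, \ref{tilingdefinablelemma}, and \ref{uniforminterpretationlemma} in the same way to reduce from the complement of the periodic tiling problem. Your anticipated ``main obstacle'' (that $\mathcal{C}^{\mi{fin}}(T,S)$ realizes all sizes and labellings) is precisely what the paper dispatches in the proof of Lemma~\ref{torusinterpretablelemma} with the remark you cite.
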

\begin{proof}
Since $T$ extends linearly in $\mathrm{2D}$, we have $n\geq 2$.
Let $\sigma=\{H,V\}$ be the vocabulary of tori, and let $\tau=\{\beta,P,p_0,p_x,p_y\}$ be the vocabulary of labelled
finite Cartesian frames.
By Lemma \ref{finitedefinable},
there is a computable function that associates each input $S$ to the periodic tiling problem with a first-order $\tau$-sentence that
defines the class of $S$-labelled finite Cartesian frames with the domain $T$ with respect to the
class of all expansions of $(T,\beta)$ to the vocabulary $\tau$. Let $\varphi^S_{\mi{fCf}}$ denote such a sentence.
By Lemma \ref{torusinterpretablelemma}, there is a computable function $J$ such that 
for all inputs $S$ to the periodic tiling problem, the function $J$ 
is a uniform interpretation of the class of $S$-labelled tori in the class of $S$-labelled 
finite Cartesian frames with domain $T$.
By Lemma \ref{tilingdefinablelemma}, there  is a computable function that associates each input $S$ to the periodic tiling problem
with a first-order $\sigma\cup S$-sentence $\varphi_S$ such that
for all tori $\mf{B}$, the torus $\mf{B}$ is $S$-tilable iff there is an
expansion $\mf{B}^*$ of $\mf{B}$ to the
vocabulary $\sigma\cup S$
such that $\mf{B}^*\models\varphi_{S}$.

Let $S$ be a finite nonempty set of tile symbols. Define the first-order $\tau$-sentence
\[
\gamma_S:=\varphi_{\mi{fCf}}^S\wedge J(\varphi_S).
\]
We will prove that for each input $S$ to the periodic tiling problem, the following conditions are equivalent.
\begin{enumerate}
\item There exists an expansion $\mf{B}=(T,\beta,P,p_0,p_x,p_y)$ of $(T,\beta)$ by a finite unary relation $P\subseteq T$ and constants $p_0,p_x,p_y\in T$ such that  $\mf{B}\models\gamma_S$.
\item There exists a torus $\mf{T}$ such that $\mf{T}$ is $S$-tilable.
\end{enumerate}
Thereby we establish that there exists a computable reduction from the complement problem of the periodic tiling problem to the membership problem of the first-order theory of the unary expansion class of $(T,\beta)$ with a
finite predicate. Since the periodic tiling problem is $\Sigma^0_1$-complete, its complement problem is $\Pi^0_1$-complete.

Let $S$ be an input to the periodic tiling problem. First assume that there exists a torus $\mf{T}$ such that $\mf{T}$ is $S$-tilable.
Therefore, by Lemma \ref{tilingdefinablelemma}, there exists an expansion $\mf{T}^*$ of $\mf{T}$ to the vocabulary $\sigma\cup S$ such that
$\mf{T}^*\models \varphi_S$. Since the function $J$ is a uniform interpretation of
the class of $S$-labelled tori in the class of $S$-labelled
finite Cartesian frames with the domain $T$, and since
 $\mf{T}^*\models\varphi_S$,
it follows by Lemma \ref{uniforminterpretationlemma} that there exists an $S$-labelled finite Cartesian frame $\mf{C}$ with the
domain $T$ such that $\mf{C}\models J(\varphi_S)$.
Since $\mf{C}$ is an $S$-labelled finite Cartesian frame,
we have that $\mf{C}\models\varphi^S_\mi{fcf}$.
Therefore $\mf{C}\models\varphi^S_\mi{fcf}\wedge J(\varphi_S)$. Hence the
finite Cartesian frame $\mf{C}$ is an
expansion of $(T,\beta)$ by a finite unary relation $P\subseteq T$ and constants $p_0,p_x,p_y\in T$ such that
 $\mf{C}\models\gamma_S$.

For the converse, assume that there exists an expansion $\mf{B}=(T,\beta,P,p_0,p_x,p_y)$ of $(T,\beta)$ by a finite unary relation $P\subseteq T$ and constants $p_0,p_x,p_y\in T$ such that $\mf{B}\models\gamma_S$.
Therefore $\mf{B}\models\varphi^S_\mi{fCf}$ and $\mf{B}\models J_S(\varphi_S)$. Since $\mf{B}\models\varphi^S_\mi{fCf}$, the structure $\mf{B}$ is an
$S$-labelled finite Cartesian frame with domain $T$. Therefore, and since $\mf{B}\models J(\varphi_S)$, we conclude by Lemma \ref{uniforminterpretationlemma} that $\mf{A}\models\varphi_S$ holds
for some $S$-labelled torus
$\mf{A}$.
Hence by Lemma \ref{tilingdefinablelemma}
there exists a torus that is $S$-tilable.
\end{proof}
\section{Conclusions}
We have studied first-order theories of unary expansion classes of 
geometric structures $(T,\beta)$, $T\subseteq\rsq$.
We have established that for $n\geq 2$, the first-order theory of the class of all expansions of $\rp$ with a single unary
predicate is
highly undecidable ($\Pi_1^1$-hard). This refutes a conjecture from the article \cite{vanBenthemAiello:2002} of Aiello and van Benthem.
In addition, we have established the following for any geometric structure $(T,\beta)$ that extends linearly in $\mathrm{2D}$.
\begin{enumerate}
\item
The first-order theory of the class of expansions of $(T,\beta)$ with a single unary predicate is $\Sigma_1^0$-hard.
\item
The first-order theory of the class of expansions of $(T,\beta)$ with a single finite unary predicate is $\Pi_1^0$-hard.
\end{enumerate}
Geometric structures that extend linearly in $\mathrm{2D}$ include, for example, the rational plane $(\mathbb{Q}^2,\beta)$ and
the real unit
rectangle $([0,1]^2,\beta)$, to name a few.
The techniques used in the proofs can be easily modified to yield undecidability of first-order theories of a significant variety of
natural
restricted expansion classes of the affine real plane $(\mathbb{R}^2,\beta)$, such as those
with a unary predicate denoting a
 polygon, a finite union of closed rectangles, and a semialgebraic set, for example.
Such classes could be interesting from the point of view of applications.

In addition to studying issues of decidability, we briefly compared the expressivities of universal monadic second-order logic and
weak universal monadic second-order logic. While the two are incomparable in general, we established that over any class of
expansions of $\rp$, it is no longer the case. We showed that finiteness of a unary predicate is definable by a first-order sentence,
and hence obtained translations from $\forall\mathrm{WMSO}$ into $\forall\mathrm{MSO}$ and from $\mathrm{WMSO}$ into
$\mathrm{MSO}$.

Our original objective to study expansion classes of $\rp$ was to identify decidable logics of space with distinguished regions.
Due to the ubiquitous applicability of the tiling methods, this pursuit gave way to identifying several undecidable theories of geometry.
Hence we shall look elsewhere in order to identify well behaved natural decidable logics of space. Possible interesting directions include considering natural fragments of first-order logic over expansions of $\rp$, and also other geometries.
Related results could provide insight, for example, in the background theory of modal spatial logics.



\bibliographystyle{plain}
\bibliography{tampere}

\end{document}